\title{Online sorting and online TSP:\texorpdfstring{\\}{} randomized, stochastic, and high-dimensional}
\titlerunning{Online sorting and online TSP: randomized, stochastic, and high-dimensional} 
\author{Mikkel Abrahamsen}{University of Copenhagen, Denmark}{miab@di.ku.dk}{https://orcid.org/0000-0003-2734-4690}{Supported by Starting Grant 1054-00032B from the Independent Research Fund Denmark under the Sapere Aude research career programme and part of Basic Algorithms Research Copenhagen (BARC), supported by the VILLUM Foundation grant 16582.}
\author{Ioana O. Bercea}{KTH Royal Institute of Technology, Stockholm, Sweden}{bercea@kth.se}{https://orcid.org/0000-0001-8430-2441}{}
\author{Lorenzo Beretta}{University of California, Santa Cruz, USA}{lorenzo2beretta@gmail.com}{https://orcid.org/0000-0002-4676-9777}{}
\author{Jonas Klausen}{University of Copenhagen, Denmark}{jokl@di.ku.dk}{https://orcid.org/0000-0002-7403-417X}{}
\author{László Kozma}{Institut für Informatik, Freie Universität Berlin, Germany}{laszlo.kozma@fu-berlin.de}{https://orcid.org/0000-0002-3253-2373}{Supported by DFG grant
KO 6140/1-2.}
\authorrunning{M. Abrahamsen, I.\,O. Bercea, L. Beretta,  J. Klausen, L. Kozma } 
\keywords{sorting, online algorithm, TSP} %
\newcommand{\DeclareAutoPairedDelimiter}[3]{%
  \expandafter\DeclarePairedDelimiter\csname Auto\string#1\endcsname{#2}{#3}%
  \begingroup\edef\x{\endgroup
    \noexpand\DeclareRobustCommand{\noexpand#1}{%
      \expandafter\noexpand\csname Auto\string#1\endcsname*}}%
  \x}
\newcommand{\calW}{\mathcal W}
\newcommand{\ceil}[1]{\left\lceil{#1}\right\rceil}
\newcommand{\floor}[1]{\left\lfloor{#1}\right\rfloor}
\newcommand{\eps}{\varepsilon}
\newcommand{\Prp}[1]{\Pr\!\left[{#1} \right]}
\newcommand{\Ep}[1]{\E\!\left[{#1} \right]}
\newcommand{\indicator}[1]{\left[{#1}\right]}
\newcommand{\set}[1]{\left \{ #1 \right \}}
\crefname{lemma}{Lemma}{Lemmas}
\Crefname{lemma}{Lemma}{Lemmas}
\crefname{theorem}{Theorem}{Theorems}
\Crefname{theorem}{Theorem}{Theorems}
\Crefname{corollary}{Corollary}{Corollaries}
\crefname{corollary}{Corollary}{Corollaries}
\crefname{observation}{Observation}{Observations}
\Crefname{observation}{Observation}{Observations}
\crefname{definition}{Definition}{Definitions}
\Crefname{definition}{Definition}{Definitions}
\crefname{section}{Section}{Sections}
\Crefname{section}{Section}{Sections}
\crefname{figure}{Figure}{Figures}
\Crefname{figure}{Figure}{Figures}
\crefname{appendix}{Appendix}{Appendices}
\Crefname{appendix}{Appendix}{Appendices}
\crefname{claim}{Claim}{Claims}
\Crefname{claim}{Claim}{Claims}
\newcommand{\size}[1]{\ensuremath{\left|#1\right|}}
\newcommand{\parentheses}[1]{\left(#1\right)}
\newcommand\drop[1]{}
\DeclareAutoPairedDelimiter\ceil{\lceil}{\rceil}
\DeclareAutoPairedDelimiter\floor{\lfloor}{\rfloor}
\renewcommand{\epsilon}{\varepsilon}
\newcommand{\E}{\mathbb{E}}
\newcommand{\LL}{\mathcal{L}}
\newcommand{\HH}{\mathcal{H}}
\newcommand{\DD}{\mathtt{D}}
\newcommand{\A}{\mathcal{A}}
\newcommand{\OPT}{\mathtt{OPT}}
\newcommand{\CC}{\mathtt{C}}
\renewcommand{\AA}{\mathcal{A}}
\newcommand{\mypara}[1]{\smallskip\noindent{\textbf{\sffamily #1}}}
\begin{document}

\maketitle

\begin{abstract}
In the \emph{online sorting problem}, $n$ items are revealed one by one and have to be placed (immediately and irrevocably) into empty cells of a size-$n$ array. The goal is to minimize the sum of absolute differences between items in consecutive cells. This natural problem was recently introduced by Aamand, Abrahamsen, Beretta, and Kleist (SODA 2023) as a tool in their study of online geometric packing problems. They showed that when the items are reals from the interval $[0,1]$ a competitive ratio of $O(\sqrt{n})$ is achievable, and no deterministic algorithm can improve this ratio asymptotically.

\smallskip
In this paper, we extend and generalize the study of online sorting in three directions:
\smallskip
\begin{itemize}
\item \emph{randomized}: we settle the open question of Aamand et al. by showing that the $O(\sqrt{n})$ competitive ratio for the online sorting of reals cannot be improved even with the use of randomness; 
\smallskip
\item \emph{stochastic}: we consider inputs consisting of $n$ samples drawn uniformly at random from an interval, and give an algorithm with an improved competitive ratio of $\widetilde{O}(n^{1/4})$. The result reveals connections between online sorting and the design of efficient hash tables;
\smallskip
\item \emph{high-dimensional}: we show that $\widetilde{O}(\sqrt{n})$-competitive online sorting is possible even for items from $\mathbb{R}^d$, for arbitrary fixed $d$, in an adversarial model. This can be viewed as an online variant of the classical TSP problem where tasks (cities to visit) are revealed one by one and the salesperson assigns each task (immediately and irrevocably) to its timeslot. Along the way, we also show a tight $O(\log{n})$-competitiveness result for \emph{uniform metrics}, i.e., where items are of different \emph{types} and the goal is to order them so as to minimize the \emph{number of switches} between consecutive items of different types. 
\end{itemize}

\end{abstract}

\newpage

\clearpage
\setcounter{page}{1}

\section{Introduction}\label{sec1}

The following natural problem called \emph{online sorting} was recently introduced by Aamand, Abrahamsen, Beretta, and Kleist~\cite{online_sort}: 
~Given a sequence $x_1,\dots,x_n$ of real values, %
assign them bijectively to array cells $A[1], \dots, A[n]$. 
Crucially, after receiving $x_j$, for $j=1,2,\dots,n$, we must immediately and irrevocably set 
$A[i] = x_j$, for some previously unused array index $i\in [n]$.  %
The goal is to minimize $\sum_{i=1}^{n-1}{\left| A[i+1] - A[i] \right|}$, the sum of absolute differences between items in consecutive cells. 

Aamand et al.~\cite{online_sort} study the problem as modelling certain geometric online packing problems. In particular, they use it to show lower bounds on the competitive ratio of such problems\footnote{The \emph{competitive ratio} of an online algorithm is the worst-case ratio of its cost to the optimum (offline) cost over inputs of a certain size $n$. The competitive ratio of a problem is the best competitive ratio achievable by an online algorithm for the problem. }. It is easy to see that the optimal (offline) solution of online sorting is to place the entries in sorted (increasing) order, and the question is how to approximate this solution in an online setting where the items are revealed one by one.

The problem evokes familiar scenarios where we must commit \emph{step-by-step} to an ordering of items, such as when scheduling meetings in a calendar, writing recipes in a notebook, planting trees in a garden, or writing data into memory cells. In such situations, some local coherence or sortedness is often desirable, but once the location of an item has been assigned, it is expensive to change it; for instance, it may be difficult to reschedule meetings or to migrate data items once memory locations are referenced from elsewhere. One must then carefully balance between placing similar items next to each other and leaving sufficiently large gaps amid uncertainty about future arrivals\footnote{The cost measure of online sorting is also natural as a measure of the \emph{unsortedness} of a sequence. For this, the \emph{number of inversions} (i.e., the number of pairs $a<b$ where $b$ appears before $a$) is perhaps more widely used. 
We argue, however, that with this latter cost, one cannot obtain a nontrivial competitive ratio. 
Consider an adversary that outputs $n/2$ copies of $0.5$, followed by either all $0$s or all $1$s until the end. It is easy to see that one of these choices results in at least $n^2/8$ inversions for any online algorithm, whereas the optimal (sorted) sequence has zero inversions.}.

Moreover, online sorting can be firmly placed among familiar and well-studied online problems; we briefly mention two. As they differ from online sorting in crucial aspects, a direct transfer of techniques appears difficult.
\begin{itemize}
\item \emph{list labeling} or \emph{order maintenance}~\cite{itai1981sparse, dietz1982maintaining, bender2002two, saks2018online, bender2022online}: in this problem, a sequence of values are to be assigned labels consistent with their ordering (in effect placing the values into an array). The main difference from online sorting is that the sequence must be fully sorted and the goal is to minimize \emph{recourse}, i.e., movement of previously placed items. 
\item \emph{matching on the line}~\cite{gupta2012online, raghvendra2018optimal, gupta2019stochastic, balkanski2023power}: here, a sequence of clients (e.g., drawn from $[n]$) are to be matched uniquely and irrevocably to servers (say, locations in $[n]$). The problem differs from online sorting mainly in its cost function; the goal here is to minimize the sum of distances between matched client-server pairs. 
\end{itemize}
One of the main results of Aamand et al.~\cite{online_sort}  is an algorithm for online sorting with competitive ratio $O(\sqrt{n})$. Aamand et al.\ require the input entries to come from the unit interval $[0,1]$ and to contain the endpoints $0$ and $1$. Conveniently, this makes the offline cost equal to $1$ and the competitive ratio equal to the online cost. We show that the same competitiveness result can be obtained even if these assumptions are relaxed. 

Aamand et al.~\cite{online_sort} also show that the $O(\sqrt{n})$ bound cannot be improved by any deterministic algorithm, and leave it as an open question whether it can be improved through randomization (assuming that the adversary is oblivious, i.e., that it does not know the ``coin flips'' of our algorithm).
Note that in several online problems such as \emph{paging} or \emph{$k$-server}, randomization can lead to asymptotic improvements in competitiveness (against an oblivious adversary);  %
e.g., see~\cite{Borodin}. As our first main result, we show that for online sorting, randomization (essentially) does not help.
\begin{restatable}{theorem}{rand1}\label{thm:rand1}
The (deterministic and randomized) competitive ratio of online sorting is $\Theta(\sqrt{n})$. 
The lower bound $\Omega(\sqrt{n})$ holds even when the input numbers are from $[0,1]$.
\end{restatable}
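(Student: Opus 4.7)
Since the upper bound $O(\sqrt n)$ and the deterministic lower bound $\Omega(\sqrt n)$ are both established in~\cite{online_sort}, it remains to prove the randomized lower bound against an oblivious adversary. The plan is to invoke Yao's minimax principle: it suffices to exhibit a distribution $\mathcal D$ over inputs in $[0,1]^n$ with $\E_{I\sim\mathcal D}[\OPT(I)] = O(1)$ such that every deterministic online algorithm $A$ satisfies $\E_{I\sim\mathcal D}[A(I)] = \Omega(\sqrt n)$.

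The construction of $\mathcal D$ proceeds by \emph{derandomizing} the adaptive adversary of~\cite{online_sort}. That adversary acts in rounds: at each round it inspects the algorithm's current placement and releases the value (from a small template) that maximizes the cost it can force. Our distribution removes the adaptivity by, at each round $t$, flipping an independent fair coin $c_t$ and releasing one of two predetermined values $a_t, b_t \in [0,1]$. The templates $(a_t, b_t)$ are functions of $t$ and possibly of the prior coin flips, but are oblivious to the algorithm, so $\mathcal D$ is a genuine input distribution.

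The core of the argument is a \emph{per-round lemma}: conditioned on any realization of rounds $1,\ldots,t-1$ and any deterministic behavior of the algorithm so far, the expected cost increment at round $t$, taken over $c_t$, is $\Omega(1/\sqrt n)$. Summing over $n$ rounds by linearity of expectation gives $\E[A(I)] = \Omega(\sqrt n)$, while $\E[\OPT(I)] = O(1)$ is immediate since all values lie in $[0,1]$ and the offline optimum places them in sorted order.

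The main obstacle is establishing the per-round lemma uniformly across algorithm states. The cost increment depends intricately on where the algorithm places the new item relative to the existing partial placement, and one must argue that neither $a_t$ nor $b_t$ lets the algorithm off cheaply. The natural approach is a case analysis on the structure of the empty cells right before round $t$: if they form few long runs, one template choice forces the algorithm to break monotonicity inside a run; if they form many short runs, at least one option creates a large value jump between the new item and an adjacent already-placed cell. Either case contributes $\Omega(1/\sqrt n)$ in expectation. Engineering the templates so that this either--or forcing persists across all $n$ rounds --- preventing the algorithm from exploiting its knowledge of $\mathcal D$ to neutralize both options simultaneously --- is the crux of the argument.
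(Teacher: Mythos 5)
Your reduction via Yao's principle and the observation that $\OPT \le 1$ pointwise (so it suffices to force expected online cost $\Omega(\sqrt{n})$) match the paper. But the heart of your argument --- the per-round lemma asserting that, conditioned on \emph{any} history and algorithm state, the expected cost increment of the next item is $\Omega(1/\sqrt{n})$ --- is false as stated, not merely unproven. Under the natural accounting (a pair of neighbors is charged when the second of them is filled), an algorithm can place items in the interior of long empty runs at zero marginal cost; e.g., by occupying every third cell it pays nothing for roughly the first $n/3$ rounds, no matter what two values $a_t,b_t$ you offer. So no oblivious distribution can yield a uniform per-step lower bound; the cost of ``spreading out'' is deferred, and any correct argument must be amortized, charging that deferred cost to a potential. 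This is exactly what the paper does: the potential is the number of gaps (maximal runs of empty cells), and the input distribution is not a per-item coin flip between two values but a mixture of whole \emph{epochs} of $\sqrt{n}$ distinct, evenly spaced values with a small-probability ($\tfrac{1}{2\sqrt{n}}$ per epoch) threat of flooding the entire remainder with all $0$s or all $1$s.

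Concretely, the paper proves two lemmas your sketch has no counterpart for: (i) if the gap count stays below $\sqrt{n}/8$ across an epoch, the epoch itself costs at least $3/16$, because most of its $\sqrt{n}$ distinct values must be attached next to occupied cells and few of those neighbors predate the epoch; and (ii) filling all gaps with $0$s plus filling them with $1$s costs at least $G(T)$ in total, so the flood threat makes each gap worth $\Omega(1/\sqrt{n})$ in expectation. These feed a two-state recursion (few gaps vs.\ many gaps) over the $\sqrt{n}$ epochs, solved by induction to give $\Omega(\sqrt{n})$. Your ``few long runs / many short runs'' dichotomy gestures at this tension, but with only two predetermined values per round the algorithm can reserve separate regions for each and neutralize both options, and you yourself flag the template engineering as an open crux. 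As it stands the proposal is missing both the correct (amortized) cost accounting and the adversarial mechanism that makes many gaps expensive, so the central step does not go through.
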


\mypara{Online TSP.} 
Ordering real values with the above cost (sum of differences between consecutive items) can be naturally viewed as a one-dimensional variant of the Traveling Salesperson Problem (TSP). 
Suppose that $n$ cities are revealed one by one (with repetitions allowed), and a salesperson must decide, for each occurrence $c$ of a city, on a timeslot for visiting $c$, i.e., the position of $c$ in the eventual tour. The cost is then the length of the fully constructed tour\footnote{A small technicality is whether the salesperson must return to the starting point or not. The effect of this in our cost regime, however, is negligible.}.
Formally, given a sequence $(x_1,\dots,x_n)$ of items $x_i \in {S}$, for a metric space $S$ with distance function $d(\cdot,\cdot)$, the goal is to assign the items bijectively to array cells $A[1],\dots, A[n]$ %
in an online fashion such as to minimize %
$\sum_{i=1}^{n-1}{d\left(A[i],A[i+1]\right)}$. We refer to this problem\footnote{This model is sometimes referred to as the \emph{online-list model}; it has been considered mostly in the context of scheduling problems~\cite{pruhs2004online, FiatW99}. To our knowledge, TSP has not been studied in this setting before. Note however, that a different online model, called the \emph{online-time model} has been used to study TSP~\cite{ausiello2001algorithms,megow2012power, bjelde2020tight}. In that model new cities can be revealed while the salesperson is already executing the tour; this may require changing the tour on the fly. Results in the two models are not comparable.} as \emph{online TSP in $S$}.

\emph{Online TSP in $\mathbb{R}$} is exactly online sorting. A natural $d$-dimensional generalization is \emph{online TSP in $\mathbb{R}^d$}, with the Euclidean distance $d(\cdot,\cdot)$ between items.\footnote{One may also view this task as a form of \emph{dimensionality reduction}: we seek to embed a $d$-dimensional data set in a one-dimensional space (the array), while preserving some distance information.} A difficulty arising in dimensions two and above is that the optimal cost is no longer constant; in $\mathbb{R}^d$, 
the (offline) optimum may reach 
$\Theta(n^{1-\frac{1}{d}})$
even if the input points come from a unit box. Computing the optimum exactly is NP-hard even if $d=2$~\cite{tsp_hard}. %
Our second main result is an online algorithm whose cost is $O(n^{1-\frac{1}{d+1}})$ and a competitiveness guarantee close to the one-dimensional case. %

\begin{restatable}{theorem}{thmHighDUB} \label{thm3}
There is a deterministic algorithm for online TSP in $\mathbb{R}^d$ with competitive ratio $\sqrt{d} \cdot 2^d \cdot O(\sqrt{n \log{n}})$. %
\end{restatable}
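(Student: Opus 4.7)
The plan is to reduce online TSP in $\mathbb{R}^d$ to the one-dimensional online sorting algorithm of the previous section, using a hierarchical space-filling order on the unit cube as the interface. After a constant-factor rescaling so that the input lies in $[0,1]^d$ (permissible because $\OPT$ is at least the diameter of the point set), I would impose a quadtree of depth $k = \Theta(\log n)$ on the cube with a random shift, and linearize its leaves by a canonical traversal (Hilbert or Z-order). This assigns every point $p$ a scalar $\phi(p) \in [0,1]$ that is computable online. The algorithm then simply feeds $\phi(p_1), \phi(p_2), \ldots$ to the 1-D online sorter and places each $p_i$ in the returned array slot.

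For the analysis, the naive approach---bounding $\|A[i{+}1] - A[i]\|$ by $O(\sqrt{d}) \cdot |\phi(A[i{+}1]) - \phi(A[i])|^{1/d}$ (since a scalar gap $\Delta$ forces both points into a common quadtree cell of side $O(\Delta^{1/d})$) and applying H\"older's inequality globally---yields only an $n^{1 - 1/(2d)}$ bound, too weak for the theorem. Instead, I would decompose the algorithm's cost by quadtree level: for $j \in \{0,\ldots,k\}$, let $N_j$ count the consecutive array pairs whose lowest common quadtree ancestor sits at level $j$; such a pair contributes at most $O(\sqrt{d}) \cdot 2^{-j}$ to the total cost, which is therefore at most $\sum_{j} O(\sqrt{d}) \cdot 2^{-j} N_j$. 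A random-shift charging argument then shows that the scalar cost induced on $\phi$ by the optimal $d$-dimensional tour is at most $O(2^d) \cdot \OPT$ in expectation---each tour edge of length $\|p-q\|$ crosses a level-$j$ cell boundary with probability $O(\sqrt{d} \cdot 2^j \cdot \|p-q\|)$, and each crossing inflates the scalar cost by at most a factor of $O(2^d/\sqrt{d})$ due to the traversal. Applying the $O(\sqrt{n})$-competitive 1-D guarantee at each quadtree level separately and summing over the $O(\log n)$ levels yields $\sum_j 2^{-j} N_j = O(2^d \sqrt{n \log n}) \cdot \OPT$; combined with the geometric $\sqrt{d}$ factor this gives the stated competitive ratio. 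Derandomization is by iterating over a polynomial-size grid of shifts.

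The main obstacle is the level-by-level analysis: each quadtree level must be treatable as its own online sorting instance (on coarsened scalars obtained by truncating $\phi$ to its first $j$ base-$2^d$ digits) so that the $O(\sqrt{n})$ guarantee applies per-level rather than only globally. This is delicate because the 1-D sorter operates on the full scalars, meaning the algorithm's assignment cannot in general be reinterpreted exactly as $k+1$ independent 1-D sortings; bridging this gap while preserving the $\sqrt{n}$ dependence---rather than the $n^{1-1/(2d)}$ dependence of the naive global H\"older approach---is where the heart of the argument will lie.
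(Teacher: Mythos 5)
The gap you flag at the end is not a technicality to be bridged later; it is the whole proof, and as stated the plan does not close it. Running a single 1-D sorter on the space-filling-curve scalars only guarantees that the \emph{total scalar} cost $\sum_i |\phi(A[i+1])-\phi(A[i])|$ is $O(\sqrt{n})$ times the scalar optimum; it gives no handle on the per-level switch counts $N_j$. Under a Hilbert or Z-order, two leaf cells whose lowest common ancestor sits at a coarse level $j$ can still be adjacent in the traversal, so a level-$j$ switch may contribute as little as $2^{-kd}$ to the scalar cost; consequently a small scalar cost does not bound $\sum_j 2^{-j}N_j$, and there is no per-level online sorting instance to which the $O(\sqrt{n})$ guarantee could be applied, since the algorithm is one sorter on the full scalars, not $k+1$ coarsened sorters. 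Without that step the only valid conversion is the per-pair bound $\|A[i+1]-A[i]\| \le O(\sqrt{d})\,|\phi(A[i+1])-\phi(A[i])|^{1/d}$ plus H\"older, i.e.\ exactly the $n^{1-1/(2d)}$ bound you correctly reject. The auxiliary devices (random shift, the claim that the optimal tour induces scalar cost $O(2^d)\cdot\OPT$, derandomization over a grid of shifts) do not repair this, because the bottleneck is on the algorithm side (controlling $N_j$), not on the lower-bound side.

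The paper takes a genuinely different and much simpler route, single-scale rather than multi-scale: partition $[0,1]^d$ into $t^d$ boxes with $t=\sqrt{n/\log n}$ per dimension, declare two points equal iff they lie in the same box, and run the \emph{uniform-metric} online sorting algorithm (Claim~\ref{lemgame}, analyzed via a coin-removal game), which guarantees at most $O(K\log n)$ switches between distinct types, $K$ being the number of occupied boxes. This is precisely the missing ingredient in your plan: an algorithm that directly bounds the number of coarse transitions, rather than the scalar cost. Each switch costs at most $\sqrt{d}$, each same-box adjacency at most $\sqrt{d}/t$, and the geometric lower bound $\OPT \ge K/(2^d t)$ (among any $2^d$ consecutive edges of a tour of box representatives, some edge joins non-neighboring boxes) yields the ratio $\sqrt{d}\cdot 2^d\cdot O(\sqrt{n\log n})$ by a two-case analysis on $K\le t$ versus $K>t$. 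If you wish to salvage a hierarchical scheme, you would need a switch-counting (uniform-metric-style) subroutine at each level instead of one real-valued sorter on $\phi$ --- but then, as the paper shows, a single well-chosen level already suffices.
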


As this setting includes online sorting as a special case, the lower bound of $\Omega(\sqrt{n})$ applies. 
A key step in obtaining Theorem~\ref{thm3} is the study of the \emph{uniform metric} variant of the problem, i.e., the case of distance function $d$ with $d(x,y) = 1$ if and only if $x \neq y$. This captures the natural problem where items (or tasks) fall into a certain number of \emph{types}, and we wish to order them such as to minimize the number of \emph{switches}, i.e., consecutive pairs of items of different types. For this case we prove a tight (deterministic and randomized) competitive ratio, independent of the number of different types, which may be of independent interest. Our algorithm is a natural greedy strategy; we analyze it by modelling the evolution of contiguous runs of empty cells as a coin-removal game between the algorithm and adversary.

\begin{restatable}{theorem}{thmCompUniform}
\label{thm:comp_uniform}
The competitive ratio of online sorting of $n$ items under the uniform metric is $\Theta(\log{n})$.
The upper bound $O(\log{n})$ is achieved by a deterministic algorithm and the lower bound $\Omega(\log{n})$ also holds for randomized algorithms.%
\end{restatable}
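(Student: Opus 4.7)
To prove the upper bound I will analyze a natural greedy algorithm: when an item of type $t$ arrives, if there is an empty cell adjacent to an existing block of type $t$, place $t$ in such a cell (extending the block); otherwise, place $t$ at (or near) the midpoint of the longest current run of empty cells, starting a new block. The final cost equals the number of maximal same-type blocks minus one, so it suffices to show that the algorithm creates $O(k \log n)$ blocks, where $k$ is the number of distinct types (since $\OPT = k - 1$).

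The analysis views the evolution of maximal empty runs as a coin-removal game. Each empty cell is a coin, grouped into piles corresponding to maximal empty runs; every placement either removes a coin (extension) or splits a pile into two piles of roughly half the size (creation of a new block in the middle of a run). The adversary's choice of the next item's type dictates which option the algorithm is forced into. The core lemma I would prove states that the number of splits attributable to any single type $t$ is at most $O(\log n)$: after a new block of type $t$ is created in a pile of size $\ell$, a substantial number of items of other types must be placed adjacent to $t$'s new block before both sub-piles are ``sealed off'' and a further split of $t$ can become necessary. Tracking these sealing events and using the fact that the associated split tree has depth $O(\log n)$ (since the algorithm always splits the largest pile roughly in half) yields $O(\log n)$ splits per type, hence $O(k \log n)$ total blocks and the desired $O(\log n)$ competitive ratio.

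For the lower bound $\Omega(\log n)$, I will apply Yao's minimax principle to obtain a bound that holds for randomized algorithms against oblivious adversaries. I aim to construct a distribution over input sequences on which every deterministic algorithm has expected cost $\Omega(\log n \cdot \OPT)$. The plan is a hierarchical construction with $\log n$ levels: at level $i$, a burst of $\Theta(n/2^i)$ items arrives whose type is chosen at random from a small alphabet, with the total number of distinct types kept small enough that $\OPT$ remains modest. At each level the algorithm must commit to placing the batch in some contiguous region of the array before learning the types of future batches; by a concentration argument, with constant probability the chosen location becomes ``wrong'' at a later level (its surrounding cells get sealed by other types), forcing the batch's type to spawn additional blocks. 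Summing such failures across levels yields the $\Omega(\log n)$ factor.

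The main obstacle on the upper bound is proving the key per-type split lemma: the precise accounting of how other types' placements seal off an established $t$-block needs careful case analysis interleaving the coin-pile structure with the type-wise placement decisions. The main obstacle on the lower bound is designing the distribution so that no deterministic algorithm can defeat it by a clever placement strategy such as placing distinct types at opposite ends of the array (which yields optimal ratio for only two types); this is why the construction needs enough distinct types per level and needs to interleave the levels so that earlier commitments substantially constrain later options.
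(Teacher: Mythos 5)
Your upper-bound route has a genuine gap: the key per-type lemma (``each type causes only $O(\log n)$ splits'') is false for your greedy, and for any variant that starts a new block only when all existing blocks of that type are sealed. Counterexample: take $K=\sqrt{n}$ types; first present one item of each of types $2,\dots,K$ (this creates about $K$ empty runs of length $\Theta(n/K)$, each bounded by singleton blocks), then present only type $1$ for the remaining $\approx n-K$ items. Type $1$ can open a new block only after every run it has previously entered is completely filled, so it is forced to open a fresh block in $\Theta(K)=\Theta(\sqrt{n})$ distinct runs --- $\omega(\log n)$ splits attributable to a single type. The flaw is the appeal to the depth of the split tree: consecutive splits by the same type need not lie on a root-to-leaf path; they typically occur in different runs of comparable size (siblings or cousins in the split tree), so the sizes of the runs split by a fixed type need not decrease geometrically. (The example does not contradict the theorem --- there the total number of blocks is $O(K)$ and the ratio is $O(1)$ --- it only breaks the per-type accounting.) The paper's argument is instead global and amortized across types: order \emph{all} splits chronologically; the sizes of the split runs form a non-increasing sequence, and since at any moment only about $K$ runs can have size in $[t/2,t]$, the size must halve within every window of roughly $K$ consecutive splits, giving at most $K\log_2 n$ splits in total over all types. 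Some such cross-type amortization is unavoidable; the per-type bound cannot be repaired as stated.

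On the lower bound, your sketch (a random type per geometrically shrinking burst, then Yao) is in the right spirit and matches the paper's structure, but the mechanism you propose --- that a committed placement becomes ``wrong'' at a later level when its surroundings get sealed --- is not the step one actually proves, and it is unclear how to control it. The clean finish, and the paper's, is a counting argument \emph{at the time of the burst}: for each type, count the free cells that can receive it at zero cost (free cells whose nearest occupied neighbors include that type); these counts sum to at most twice the number $n'$ of free cells, so a uniformly random type has at most $O(n'/K)$ costless cells with probability at least $1/2$, and a burst of that many copies forces a cost increase within the same epoch. Since each epoch has length $\Theta(n'/K)$, the free space shrinks by a factor $1-\Theta(1/K)$ per epoch, so there are $\Omega(K\log n)$ epochs, expected cost $\Omega(K\log n)$, and with $\OPT=K-1$ (for $K\ge 3$) the ratio $\Omega(\log n)$ follows for randomized algorithms via Yao's principle.
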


Aamand et al.\ also consider the setting where the array is only partially filled (placing $n$ items into $m>n$ cells). In this case, the cost is understood as the sum of distances $d(x,y)$ over pairs of items $x,y$ with no other item placed between them. %
We thus extend our previous result to arrays of a larger size, obtaining a tight characterization. 

\begin{theorem}\label{thm:comp_uniform_large}
The competitive ratio (deterministic and randomized) of online sorting of $n$ items with an array of size $\ceil{\gamma n}$, with $\gamma>1$, under the uniform metric is $\Theta(1 + \log{\frac{\gamma}{\gamma-1}})$.
\end{theorem}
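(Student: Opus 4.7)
\cref{thm:comp_uniform_large} extends \cref{thm:comp_uniform} to arrays of size $m = \ceil{\gamma n}$ with $\gamma > 1$. The target $\Theta(1 + \log\frac{\gamma}{\gamma-1})$ smoothly interpolates between $\Theta(\log n)$ as $\gamma \to 1$ (since $\log\frac{\gamma}{\gamma-1} = \log\frac{m}{m-n}$ tends to $\log n$ when $m-n = O(1)$, recovering \cref{thm:comp_uniform}) and $\Theta(1)$ when $\gamma$ is bounded away from $1$ (e.g.\ $\gamma \geq 2$). I will set $L := \ceil{\log\frac{\gamma}{\gamma-1}}$; informally, $L$ is the number of times one can halve the array before each block runs out of slack (empty cells per block).

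For the upper bound, I would adapt the greedy algorithm from \cref{thm:comp_uniform}: when a new item of type $t$ arrives, place it adjacent to an existing same-type run if possible, otherwise start a fresh run in the middle of the largest empty block. The coin-removal analysis of \cref{thm:comp_uniform} amortizes switches across $O(\log n)$ dyadic levels of the array; here I would truncate the argument at level $L$, arguing that beyond this level each remaining block still retains enough free space (relative to the items it will yet absorb) that the greedy rule always extends an existing run without opening a new one, hence incurs no further switches. The resulting bound is $O\pars{(L+1)\cdot\OPT}$, matching the claimed competitive ratio.

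For the lower bound, I would adapt the balanced-binary-tree adversary underlying \cref{thm:comp_uniform}, truncated at depth $L$ instead of $\log n$. At each of the $L$ levels the adversary forks a current type into two sub-types, forcing the algorithm to commit to a new boundary in the compressed sequence; the total switch cost is $\Omega(L \cdot \OPT)$ with $\OPT = \Theta(2^L)$ distinct types in play. Yao's minimax principle, applied as in \cref{thm:comp_uniform}, lifts the bound to randomized algorithms; the additive $+1$ reflects the trivial ratio lower bound of $1$ whenever $\OPT \geq 1$.

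The main obstacle will be calibrating the truncation depth $L$ so the two sides align tightly. On the upper-bound side, one needs a clean potential argument to certify that once past level $L$ the free cells in each block suffice to extend same-type runs through all remaining arrivals, despite the adversarial order. On the lower-bound side, one must show that the adversary can force a switch at \emph{every} level up to $L$ in spite of the $(\gamma-1)n$ slack cells, which requires sizing each burst of same-type items so that no matter how the algorithm parks future items in the surplus space, at least one new switch is paid per level.
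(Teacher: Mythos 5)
Your upper-bound sketch is in the right spirit (the paper also reuses the greedy algorithm of \cref{lemgame} and its coin-game model, simply terminating the game early), but the mechanism you state is not the one that works, and the step you flag as "the main obstacle" is precisely where it breaks. It is not true that past some dyadic level the greedy rule "always extends an existing run without opening a new one": cursors keep colliding with occupied cells and jumping to the midpoint of the largest empty interval throughout the process, so switches can occur arbitrarily late. What saves the bound is a quantitative fact about pile sizes, not their absence: since the game starts with $\gamma n$ coins and stops while $(\gamma-1)n$ coins remain, and a split is always performed on the largest of at most $K$ piles, every split acts on a pile of size at least $(\gamma-1)n/K$; combined with the halving property $n_{i+K}\le n_i/2$ (and $n_{2K}\le \gamma n/K$), this caps the number of splits at $K\bigl(2+\log_2\frac{\gamma}{\gamma-1}\bigr)$, giving ratio $O\bigl(1+\log\frac{\gamma}{\gamma-1}\bigr)$ against $\OPT=K-1$. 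Your proposal never identifies this lower bound on the split sizes, which is the actual content of the truncation.

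The lower bound is a more serious gap. A binary-tree adversary that "forks a current type into two sub-types" has no leverage in the uniform metric: a new sub-type is just a fresh value at distance $1$ from everything, the algorithm can park it in a large empty region at no extra switch, and $\OPT$ pays for each new type anyway, so forking per se does not drive the ratio above a constant. Moreover, a deterministic forking schedule is known to the algorithm, which can reserve space for the types it knows will appear. The working argument (the one behind \cref{lemlb}, which the paper reuses here) is probabilistic: in each epoch a type is chosen uniformly at random among the $K$ types, and a counting argument over \emph{friendly cells} shows that with constant probability the chosen type has at most $O(m'/K)$ cells it can occupy for free, so an epoch of length $\Theta(m'/K)$ forces a new switch; the number of such epochs before the free space shrinks from about $\gamma n$ to $(\gamma-1)n$ is $\Omega\bigl(K\log\frac{\gamma}{\gamma-1}\bigr)$, and Yao's principle then applies. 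Your sketch contains neither the random choice of the repeated type nor the friendly-cell counting that calibrates the epoch length, so as written it does not force the algorithm to pay $\Omega(L)$ per type, and the claimed $\Omega(L\cdot\OPT)$ cost is unsubstantiated.
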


\mypara{Stochastic input.}
Given the (rather large) $\Omega(\sqrt{n})$ lower bound on the competitive ratio of online sorting, %
it is natural to ask whether we can overcome this barrier by relaxing the worst-case assumption on the input. Such a %
viewpoint has become influential recently in an attempt to obtain more refined and more realistic guarantees in online settings (e.g., see~\cite{bwac_book, gupta2019stochastic}). A natural model is to view each item as drawn independently from some distribution, e.g., uniformly at random from a fixed interval. Our next main result shows an improved competitive ratio for such stochastic inputs: %

\begin{restatable}{theorem}{stochasticone}\label{thm:stochastic1}
There is an algorithm for online sorting of $n$ items drawn independently and uniformly at random from $(0,1]$ that achieves competitive ratio $O((n\log n)^{1/4})$ with probability at least $1-2/n$.  
\end{restatable}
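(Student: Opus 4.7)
The plan is to reduce the problem to sorting inside small buckets and then apply the $O(\sqrt{k})$-competitive algorithm of Aamand et al.~\cite{online_sort} as a subroutine, using concentration of the uniform input to keep overflow manageable. Partition $(0,1]$ into $m=\sqrt{n/\log n}$ equal-width intervals $I_1,\dots,I_m$ and the array into $m$ consecutive blocks $B_1,\dots,B_m$ of $B\approx\sqrt{n\log n}$ cells each, reserving an extra buffer of $K=\Theta(n^{3/4}(\log n)^{-1/4})$ cells at the end for overflow. When item $x\in I_i$ arrives, place $x$ inside $B_i$ using the subroutine of~\cite{online_sort} rescaled to a sub-interval of width $1/m$; if $B_i$ is already full, route $x$ into the buffer, which itself is processed by a recursive call of the same algorithm.

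For the main cost estimate, let $N_i$ denote the number of input items lying in $I_i$. Since $N_i\sim\mathrm{Bin}(n,1/m)$ has mean $B$, a Chernoff bound together with a union bound over the $m$ buckets yields $|N_i-B|=O(\sqrt{B\log n})$ for every $i$ with probability at least $1-2/n$. On this event, applying \cite{online_sort} to the items that fit inside $B_i$ (at most $B$ items from an interval of width $1/m$) costs at most $O(\sqrt{B}/m)$ per block, summing to $O(\sqrt{B})=O((n\log n)^{1/4})$ over all $m$ blocks. The $m-1$ transitions between consecutive blocks contribute at most $2$ by telescoping, and the single transition into the buffer contributes at most $1$.

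The main obstacle is handling the overflow forced by the load factor being exactly~$1$: one cannot leave slack in every block without exceeding the $n$ total cells. The key step is to prove that the total overflow $\sum_i(N_i-B)^+$ concentrates at $\Theta(m\sqrt{B})=\Theta(K)$ with high probability, so that the buffer suffices; this follows from the fact that each summand has mean $\Theta(\sqrt{B})$ combined with a Bernstein-type inequality. Moreover, within each bucket the overflowing items form a uniformly random subset of the items drawn from $I_i$, so the buffer's contents are (approximately) i.i.d.\ uniform on $(0,1]$, and by induction on problem size a recursive call adds only $O((K\log K)^{1/4})=O(n^{3/16}(\log n)^{3/16})$ to the cost, a lower-order contribution. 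Summing the three contributions gives the claimed $O((n\log n)^{1/4})$ bound with probability at least $1-2/n$.
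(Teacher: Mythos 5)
Your high-level design---hash items by value into equal-width buckets, reserve an overflow region, and run the algorithm of \cite{online_sort} inside each bucket---is the same as the paper's, and your accounting of the in-bucket cost ($O(\sqrt{B}/m)$ per block, $O(\sqrt{B})=O((n\log n)^{1/4})$ in total, $O(1)$ for block transitions) is correct. But two steps fail as written. First, the capacity analysis. Since the array has exactly $n$ cells, the buffer must be carved out of them, so each block has capacity $B=n/m-K/m$. With your $K=\Theta(n^{3/4}(\log n)^{-1/4})$ and $m=\sqrt{n/\log n}$, the slack $K/m=n^{1/4}(\log n)^{1/4}$ is only of the order of one standard deviation of $N_i\sim\mathrm{Bin}(n,1/m)$, so each block fails to fill with constant probability; because $\sum_i(N_i-B)^+-\sum_i(B-N_i)^+=K$ identically, any deficit in the blocks forces the buffer to receive more than $K$ items, and the algorithm has no cell left for some item. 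For the same reason, ``the total overflow concentrates at $\Theta(K)$'' is the wrong target: the overflow is \emph{always} at least $K$, its expectation strictly exceeds $K$ unless every block fills, and it is at most $K$ precisely when no block underfills. What you need is a per-block lower-tail bound $\Pr[N_i<B]\le n^{-\Omega(1)}$ plus a union bound over the $m$ blocks, which requires slack $K/m=\Omega(\sqrt{(n/m)\log n})$, i.e.\ $K=\Omega(n^{3/4}(\log n)^{1/4})$; this is exactly the paper's Chernoff-based failure claim (Claim~\ref{claim:ballsintobins}), and with the larger $K$ your final bound is unaffected.

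Second, and more fundamentally, the recursion on the buffer is unsupported. Your induction hypothesis applies to i.i.d.\ uniform input, but the buffer stream is not that: conditioned on the counts, it consists of $(N_i-B)^+$ uniform values from each interval $I_i$, and these counts fluctuate by a constant factor of their means, so the buffer's empirical distribution deviates from uniform at scale $1/m$ and the items are not exchangeable with a uniform sample. The parenthetical ``(approximately) i.i.d.\ uniform'' is the crux of your whole argument---if it fails you must fall back on the worst-case $O(\sqrt{K})=n^{3/8}$ bound for the buffer, which destroys the $n^{1/4}$ claim---so it needs a genuine proof, not an aside. Even granting it, the recursive call only succeeds with probability about $1-2/K\approx1-2/n^{3/4}$, which already exceeds your $2/n$ failure budget, so the induction would have to be strengthened (the paper does this via its parameter $c$). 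The paper avoids both problems by recursing \emph{inside the buckets}, where the rescaled values are exactly i.i.d.\ uniform, and running the deterministic worst-case algorithm on a small backyard; that is precisely why it needs a recursion of depth roughly $\log\log n$ to push the exponent from $1/3$ down to $1/4$, whereas your single level plus a buffer recursion would be simpler---but only if the distributional claim about the buffer were established.
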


The algorithm illuminates a connection between online sorting and hash-based dictionaries~\cite[\S\,6.4]{Knuth3}.
In the latter, the task is to place a sequence of $n$ keys in an array of size $O(n)$, with the goal of minimizing \emph{search time}. Fast searches are achieved by hashing keys to locations in the array. Due to hash collisions, not all elements can be stored exactly at their hashed location, and various paradigms have been employed to ensure that they are stored ``nearby'' (for the search to be fast). We adapt two such paradigms to online sorting. The first is to hash elements into buckets and solve the problem separately in each bucket. Each bucket has a fixed capacity, and so an additional \emph{backyard} is used to store keys that do not fit in their hashed bucket~\cite{ arbitman2010backyard,DBLP:conf/stoc/BenderFKKL22}. In~\Cref{thm:stochastic1}, we use the values of the entries to assign them to buckets and employ a similar backyard design. We note some critical technical differences: in our design, all %
buckets must be full and we solve the problem recursively in each bucket; we also operate in a much tighter balls-into-bins regime, as we are hashing $n$ elements into exactly $n$ locations.

When the array size is allowed to be bigger than $n$, we employ yet another way for resolving hash collisions: the linear probing approach of Knuth~\cite{knuth63linprobe}. Here, %
the value $\lceil \upalpha n \rceil$ serves as the hash location of an entry $\upalpha \in (0,1)$. Intuitively, this is a good approximation for where the entry would appear in the sorted order. If we make sure that we place the entry close enough to this intended location, we can hope for a small overall cost. That is, we use the fact that linear probing places similar values close to each other. 
We get the following:

\begin{restatable}{theorem}{stochasticLarge}\label{stochasticLarge}
  For any \(\gamma > 1\), there is an algorithm for online sorting of \(n\) items drawn independently and uniformly from \((0, 1)\) into an array of size \(\ceil{\gamma n}\) that achieves expected competitive ratio  \(O\left(1 + \frac{1}{\gamma-1}\right)\).
\end{restatable}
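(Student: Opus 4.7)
The plan is to adapt the classical linear-probing strategy of Knuth~\cite{knuth63linprobe}. Let $m = \lceil \gamma n \rceil$. For each incoming item $\alpha \in (0,1)$, define its ``ideal'' location $h(\alpha) = \lceil \alpha m \rceil \in \{1,\ldots,m\}$. If cell $h(\alpha)$ is empty, place $\alpha$ there; otherwise, probe forward (cyclically) to the first empty cell. Because the $\alpha_i$ are i.i.d.\ uniform on $(0,1)$, the hash values $h(\alpha_i)$ are i.i.d.\ uniform on $\{1,\ldots,m\}$, matching exactly the classical linear-probing model with load factor $\lambda = n/m \leq 1/\gamma < 1$.

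Next I would bound the cost of the resulting placement. Let $D_p = (p - h(A[p])) \bmod m$ be the displacement of the item at cell $p$. Since $|\alpha - h(\alpha)/m| \leq 1/m$, for any two occupied cells $p < p'$ the triangle inequality yields $|A[p] - A[p']| \leq 2/m + |h(A[p]) - h(A[p'])|/m$. Ignoring cyclic wrap-around (addressed below), $|h(A[p]) - h(A[p'])| \leq (p'-p) + |D_p - D_{p'}|$. Summing over consecutive occupied pairs, the $p' - p$ terms telescope to at most $m$, each $D_p$ appears in at most two pairs so $\sum |D_p - D_{p'}| \leq 2\sum_p D_p$, and the additive $2/m$ terms contribute at most $2n/m \leq 2/\gamma$. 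Hence the total cost is at most $1 + 2/\gamma + (2/m)\sum_p D_p$.

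Knuth's classical analysis gives expected per-item displacement under linear probing with load $\lambda$ equal to $\Theta(\lambda/(1-\lambda))$; with $\lambda \leq 1/\gamma$ this yields $\E[\sum_p D_p] = O(n/(\gamma-1))$, and therefore $\E[\text{cost}] = O(1 + 1/(\gamma-1))$. Since the offline optimum is $\Theta(1)$ in expectation (place the sorted sequence in $n$ consecutive cells, incurring cost $\max_i \alpha_i - \min_i \alpha_i \leq 1$), the expected competitive ratio is $O(1 + 1/(\gamma - 1))$, as desired.

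The main technical obstacle is the cyclic wrap-around: if $h(A[p]) > p$ then the bound $|h(A[p]) - h(A[p'])| \leq (p' - p) + |D_p - D_{p'}|$ can fail by as much as $m$, potentially blowing up the cost estimate. I plan to handle this by exploiting that $\lambda$ is bounded strictly below $1$: a Chernoff-type concentration argument on the loads of array suffixes shows wrap-around occurs with probability $o(1)$ and contributes at most $O(1)$ to the expected cost. Alternatively, one may use non-cyclic probing and place the (very rare) overflow items by some default rule whose expected contribution to the cost is negligible.
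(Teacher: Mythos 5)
Your overall route is essentially the paper's: treat the value of each item as a hash location, insert by linear probing, bound the sorting cost by a constant plus the total displacement scaled by the cell width, and invoke Knuth's analysis of linear probing at constant load; your telescoping displacement bound plays the role of the paper's decomposition of the cost into merge, extend, and separation terms (the paper's probe count $s(i)$ is just displacement plus one). The one genuine gap is the wraparound step. Because you hash into the whole array of $m=\lceil\gamma n\rceil$ cells, wraparound is \emph{not} a low-probability event: already the event that two items hash to the last cell forces a wraparound, and this happens with constant probability (the number of items hashing to a fixed cell is Binomial$(n,1/m)$ with mean about $1/\gamma$). So the claimed Chernoff argument that ``wrap-around occurs with probability $o(1)$'' is false in your parameterization, and your fallback (``the very rare overflow items contribute negligibly'') needs the same correction: the quantity to control is the \emph{expected number} of items that overflow past the end (each such item adds only $O(1)$ to the cost), not the probability that any overflow occurs. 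That expectation is indeed $O(1+1/(\gamma-1))$, e.g.\ by a ballot/negative-drift argument on suffix loads, but this is an argument you still have to supply, and it is where your proposal currently fails.

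The paper sidesteps the issue more simply: it hashes only into the first $\beta n$ cells for some $1<\beta<\gamma$, reserving a buffer of $\alpha n=\Theta(n)$ cells at the end of the array; then wraparound requires a probe run of length $\Omega(n)$, whose probability really is $O(1/n^{2})$ by a Chernoff bound, and Knuth's bound is applied with load factor $n/(\beta n)$, which still gives $O(1+1/(\gamma-1))$. Adopting that rescaled hash (or carrying out the expected-overflow bound sketched above) closes the gap; the rest of your argument, including the displacement telescoping and the $\Theta(1)$ bound on the offline optimum, is sound.
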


\mypara{Paper structure.} 
In \S\,\ref{sec2} we present our results for the standard (one-dimensional) online sorting, in particular the lower bound for the randomized competitive ratio (Theorem~\ref{thm:rand1}). Results for online TSP in $\mathbb{R}^d$ and results for the uniform metric (Theorems~\ref{thm3}, \ref{thm:comp_uniform}, \ref{thm:comp_uniform_large}) are in \S\,\ref{sec3}. Our results for online sorting with stochastic input (Theorems~\ref{thm:stochastic1}, \ref{stochasticLarge}) are in \S\,\ref{sec4}. We conclude with a list of open questions in \S\,\ref{sec5}.  We defer some proofs and remarks to the Appendix.

\section{Competitiveness for online sorting}\label{sec2}

Given a sequence $X = (x_1, \dots, x_n) \in \mathbb{R}^n$ and a bijection $f \colon [n] \rightarrow [n]$ assigning ``array cells'' $A[i] = x_{f(i)}$, %
let 
$\DD_f(X) = \sum_{i=1}^{n-1}{|A[i+1]-A[i]|}$. 
Let $\OPT(X)$ denote the quantity $\min_f{\DD_f(X)}$, i.e., the \emph{offline optimum}. 

An \emph{online} algorithm is one that constructs the mapping $f$ incrementally. Upon receiving $x_j$, for $j=1, \dots, n$, the algorithm immediately and irrevocably assigns $f(i) = j$, for some previously unassigned $i \in [n]$. 
For an online algorithm $\AA$, we denote by $\AA(X)$ the cost $\DD_f(X)$ for the function $f$ constructed by algorithm $\AA$ on input sequence $X$.
We are interested in the competitive ratio $\CC_\AA(n)$ of an algorithm $\AA$, i.e., the supremum of $\AA(X)/\OPT(X)$ over all inputs $X$ of a certain size $n$, and in the best possible competitive ratio $\CC = \CC(n) = \inf_\AA{\CC_\AA(n)}$ obtainable by any online algorithm $\AA$ for a given problem.

\mypara{Upper bound.} 
Aamand et al.~\cite{online_sort} show that $\CC \in \Theta{(\sqrt{n})}$, under the additional restrictions that $x_i \in [0,1]$ for all $i \in [n]$, and that $\{0,1\} \subseteq \{x_1,\dots,x_n\}$. As our first result, we employ a careful doubling strategy to show the same upper bound without the two restrictions, for general sequences of $n$ reals (the lower bound clearly continues to hold).

\begin{restatable}[Proof in~\cref{appb}]{claim}{thmClaimOne}
\label{thm:allreals}
There is a deterministic online algorithm $\AA$ for online sorting of an arbitrary sequence of $n$ reals, with $\CC_\AA \in O(\sqrt{n})$.
\end{restatable}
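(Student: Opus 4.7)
The strategy is a doubling scheme that maintains a running estimate $\hat\Delta$ of the input range and devotes successively smaller portions of the array to successively larger ranges. Concretely, I would partition the array as $B_0, B_1, \ldots, B_{\lceil \log n \rceil}$ laid out consecutively, with $|B_k| = \Theta(n / 2^{k+1})$, so that $\sum_k |B_k| \leq n$. Each block $B_k$ is associated with a range estimate $\hat\Delta_k = 2^k \hat\Delta_0$, and on this block we run an independent instance of the Aamand et al.\ algorithm scaled to that range. The first block $B_0$ is seeded using the first couple of distinct input items, which set $\hat\Delta_0$ and an anchor value $L$.

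When a new item $x$ arrives, the algorithm checks whether $x$ lies in the range $[L, L+\hat\Delta_k]$ covered by the currently active block $B_k$. If so, it is placed using the Aamand et al.\ instance on $B_k$. If $x$ falls outside that range, the algorithm advances the active block pointer (by one or, when $x$ is far out of range, by several steps simultaneously), updates $\hat\Delta$ by the corresponding powers of two, and starts placing items in the new active block. Since $\OPT(X)=\Delta$ equals the true range $R-L$ of the input, all items are eventually covered by $\hat\Delta_K = O(\Delta)$.

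For the cost, within $B_k$ the (scaled) Aamand et al.\ algorithm incurs $O(\hat\Delta_k \sqrt{|B_k|}) = O\!\left(2^k \hat\Delta_0 \sqrt{n/2^{k+1}}\right) = O(\hat\Delta_0 \sqrt{n} \cdot 2^{(k-1)/2})$. Summing geometrically over $k=0,\ldots,K$ yields $O(\hat\Delta_0 \sqrt{n} \cdot 2^{K/2})$, and using $2^K \hat\Delta_0 = O(\Delta)$ this is $O(\sqrt{\hat\Delta_0 \cdot \Delta \cdot n}) \leq O(\Delta \sqrt{n})$, since $\hat\Delta_0 \leq \Delta$. The transitions between consecutive blocks contribute at most $\Delta$ each, and there are at most $O(\log n)$ of them, adding $O(\Delta \log n) = O(\Delta \sqrt{n})$. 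Combined with $\OPT(X) = \Delta$, this gives competitive ratio $O(\sqrt{n})$.

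The main obstacle is ensuring that the full sequence of $n$ items fits into $O(\log n)$ blocks even against an adversary that tries to force many range doublings (e.g., by interleaving items very close to and very far from the anchor). I would handle this by (i) carefully selecting $\hat\Delta_0$ using the first few distinct items so that only a logarithmic number of doublings can be triggered before enough cells are consumed, and (ii) letting an out-of-range item that would require more than one doubling jump directly to the block whose range accommodates it, with the intermediate blocks either left to absorb future in-range items or rolled into the advancement bookkeeping. The analysis then reduces to verifying the two technical invariants that $\hat\Delta_k \leq 2\Delta$ for every active $k$, and that the total number of cells committed across all epochs is at most $n$; these follow from the doubling rule and the geometric sizing of the blocks.
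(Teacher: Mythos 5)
There is a genuine gap in your capacity accounting, and it is exactly the point where the problem is hard. You pre-allocate block sizes $|B_k| = \Theta(n/2^{k+1})$ \emph{a priori}, tying the amount of array space to the range scale $2^k\hat\Delta_0$. But the adversary controls how many items arrive at each scale, and this need not correlate with $2^{-k}n$. Concretely: after your first two items fix $\hat\Delta_0$ and the anchor $L$, the adversary can send a single item far from $L$, forcing you to jump directly to the largest scale $K$ (with $2^K\hat\Delta_0 \approx \Delta$), and then send the remaining $n-3$ items all clustered near $L+\Delta$. These items lie in the range of $B_K$ only (they are outside $[L, L+\hat\Delta_k]$ for every $k<K$), yet $|B_K| = \Theta(n/2^{K+1})$ can be as small as a constant. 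Your scheme has nowhere to put them: either it overflows, or it must place large-range items into blocks whose associated range does not contain them, and then the per-block cost bound $O(\hat\Delta_k\sqrt{|B_k|})$ on which your geometric sum rests no longer applies. The two invariants you propose to verify (``$\hat\Delta_k \le 2\Delta$'' and ``total cells committed $\le n$'') do not address this, because the second one is simply false under this adversary: essentially all $n$ items demand space in the one block with the smallest capacity. Choosing $\hat\Delta_0$ more carefully cannot help either, since the adversary decides the final range only after seeing your layout commitments.

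The paper's proof avoids this by \emph{not} segregating array space by range scale. It always splits the \emph{current} range estimate $J$ into $\sqrt{n}$ equal boxes and the array into $2\sqrt{n}$ equal-size blocks; when the range doubles, it re-splits $J$ into new boxes but keeps the existing blocks and their assignments, so every block can serve items at every scale, and when the blocks run out (which guarantees at least half the items are placed) it recurses on the leftover cells treated as a contiguous array. The price of each range doubling is paid only at the margins of non-full blocks, at most $2^{q+1}\sqrt{n}$ for the doubling to scale $2^q$, and summing this geometrically over $q \le \lceil\log_2 \OPT\rceil$ gives $\OPT\cdot O(\sqrt{n})$ — this per-doubling boundary charge, together with block reuse across scales and the recursion on free space, is the mechanism your proposal is missing. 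If you want to rescue your architecture, you would have to allow large-scale items to spill into earlier (equal- or larger-size) blocks and redo the within-block and interface cost analysis for mixed scales, at which point you essentially reconstruct the paper's argument.
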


\mypara{Lower bound.}
The competitive ratio $\CC_{\AA}$ 
of a \emph{randomized} algorithm $\AA$ is the supremum of $\mathbb{E}[\AA(X)]/\OPT(X)$ over all inputs $X$, where the expectation is over the random choices of $\AA$. 
Aamand et al.\ \cite{online_sort} leave open the question of whether a lower bound of $\Omega(\sqrt{n})$ on the competitive ratio also holds for randomized algorithms.  
We settle this question in the affirmative. It is important to emphasize that the random choices of $\AA$ are not known to the adversary, i.e., we assume the \emph{oblivious} model~\cite{Borodin}.

\begin{claim} \label{randomizedLowerBound}
$\CC_{\AA} \in \Omega(\sqrt{n})$ for every (possibly randomized) online algorithm $\AA$. 
\end{claim}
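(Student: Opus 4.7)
My plan is to apply Yao's minimax principle: it suffices to construct an oblivious distribution $\mathcal D$ over inputs in $[0,1]^n$ such that every deterministic online algorithm incurs expected cost $\Omega(\sqrt n)$ on $\mathcal D$. Since $\OPT \le 1$ for such inputs, this implies the bound. For $\mathcal D$ I would propose a randomized analogue of the adaptive adversary behind the deterministic lower bound: set $k=\Theta(\sqrt n)$, partition the input into $k$ phases of $n/k$ identical items each, draw i.i.d.\ uniform bits $b_1,\dots,b_{k-1}$, and let the phase-$i$ value $v_i$ be the label at depth $i-1$ of the random root-to-leaf path in the binary tree whose root has label $1/2$ and whose depth-$d$ node labelled $v$ has children $v\pm 2^{-(d+2)}$. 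The entire sequence depends only on the $b_j$'s, so $\mathcal D$ is oblivious, and all values lie in $(0,1)$.

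To analyze a fixed deterministic $\AA$ on $\mathcal D$, I would reason phase by phase. At the start of phase $i+1$ the algorithm has seen only $b_1,\dots,b_{i-1}$, so conditioned on the history $v_{i+1}$ is uniformly one of two values symmetric around $v_i$. Because placements are irrevocable, the algorithm must commit to where to put the phase-$(i+1)$ items without knowing which of the two targets is coming; a hedging argument (swap the two candidate layouts and observe that whichever one the adversary picks, the other's ``reserve'' is wasted) then yields an expected cost contribution $\delta_i>0$ that captures the imbalance between the empty-cell reservations in the two candidate regions. Summing over the $\Theta(\sqrt n)$ phases, one would try to obtain $\sum_i \delta_i=\Omega(\sqrt n)$.

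The main obstacle is making this aggregation tight. Since $|v_{i+1}-v_i|=2^{-(i+1)}$, a direct per-phase bound gives only $\delta_i=\Omega(2^{-i})$, summing to $O(1)$ and missing the target by a factor of $\sqrt n$. The fix, I believe, is to charge cost at \emph{coarser} scales as well: a displaced phase-$i$ block later forces transitions that cross value-gaps from many earlier levels simultaneously, contributing $\Omega(1)$ rather than $\Omega(2^{-i})$ per occurrence. Formalizing this via a potential function that credits each random bit $b_i$ with $\Omega(1)$ of expected cost -- conditional on the state at its reveal, and without double-counting across levels -- is the technical heart of the proof; the discrete tree structure should enable this via a combinatorial counting of ``misaligned'' value-runs rather than an entropy argument.
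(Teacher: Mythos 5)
Your overall framing (Yao's principle, an oblivious distribution with $\OPT\le 1$, expected cost $\Omega(\sqrt n)$ for every deterministic algorithm) matches the paper, but the distribution you propose cannot work, and the step you defer as ``the technical heart'' is not just missing -- it is provably impossible to supply. In your construction the phase lengths are deterministic and all $n/k$ items within a phase are identical, so a deterministic algorithm tailored to the distribution can simply place the items in arrival order, left to right: the only nonzero adjacencies are the $k-1$ phase transitions, of total cost $\sum_{i\ge 1} 2^{-(i+2)} \le 1/4$, on \emph{every} realization of the random bits. Since $\OPT \ge |v_2-v_1| = 1/8$ whenever $k\ge 2$, this algorithm has competitive ratio $O(1)$ against your distribution, so no charging scheme, potential function, or ``coarser-scale'' accounting can extract an $\omega(1)$ lower bound from it. The hedging intuition also fails for the same reason: the algorithm never has to commit space for the unknown value $v_{i+1}$ before it arrives, because contiguous free space always suffices and all future values lie in an interval of length $O(2^{-i})$, so the uncertainty is essentially free.

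The paper's distribution is built around two features your construction lacks, and both are essential. First, each ``epoch'' consists of $\sqrt n$ \emph{distinct}, equally spaced values $0,1/\sqrt n,\dots,(\sqrt n-1)/\sqrt n$ spanning the whole interval, so an algorithm that keeps few gaps must attach many of these to occupied cells and pay $\Omega(1)$ per epoch (Lemma~\ref{lem_ratio}); an algorithm can only avoid this by fragmenting the empty space into many gaps. Second, after each epoch the input terminates prematurely, with probability $\Theta(1/\sqrt n)$, in a run of all $0$s or all $1$s; since every gap boundary pays $|A[i]-0|+|A[i]-1|=1$ across the two terminations (Lemma~\ref{lem_both}), maintaining many gaps costs $\Omega(1)$ per epoch in expectation as well. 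The interplay is then made rigorous by tracking the number of gaps as a state and solving the two-case recursion for $\LL(i)$ and $\HH(i)$ by induction (Lemma~\ref{thmClInd}), giving $\Omega(1)$ expected cost per epoch over $\Theta(\sqrt n)$ epochs. To repair your proof you would need to redesign the distribution so that the algorithm faces this kind of forced trade-off -- unpredictable termination values at the extremes plus epochs of many distinct well-spread values -- rather than a predictable sequence of identical blocks with geometrically shrinking value gaps.
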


In the remainder of the section we prove Claim~\ref{randomizedLowerBound}, which implies Theorem~\ref{thm:rand1}. As usual, to lower bound the performance of a randomized algorithm (a distribution over deterministic algorithms), we lower bound instead the performance of a deterministic algorithm on an (adversarially chosen) distribution over input sequences. %

\mypara{Input distribution.}
Assume for simplicity that $\sqrt{n}$ is an integer. Repeat the following until a sequence of length \(n\) is obtained.
With probability \(\frac{1}{2 \sqrt{n}}\) set the remainder of the sequence to $0$s.
With probability \(\frac{1}{2 \sqrt{n}}\) set the remainder of the sequence to $1$s.
With probability \(1 - \frac{1}{\sqrt{n}}\) offer $\frac{k}{\sqrt{n}}$ for $k=0,\dots,\sqrt{n}-1$ as the %
next \(\sqrt{n}\) elements of the sequence, and flip a new coin.
We refer to the \(\sqrt{n}\) elements produced by the third option as an \emph{epoch}. 
Notice that %
$\OPT \leq 1$, %
thus it is enough to lower bound the expected cost of the algorithm. 
 
\mypara{Notation.}
Let \(T\) be a partially filled array, \(|T|\) be the number of stored elements and \(G(T)\) be the number of ``gaps'', i.e., maximally contiguous groups of empty cells. 

Let \(f(T)\) be the minimum cost of \emph{filling} \(T\) by an online algorithm, when the input sequence is chosen according to the distribution described above, where the cost of two neighboring elements \(T[i], T[i+1]\) are accounted for when \(T[i]\) or \(T[i+1]\) is filled, whichever happens the latest.

More formally, define the cost of a partially filled array \(T\) to be
\[c(T) = \sum_{\substack{i=0 \\ T[i], T[i+1] \text{ are nonempty}}}^{n-1} \size{T[i] - T[i+1]},\]
and let \(\AA(T)\) be the final array produced by an online algorithm \(\AA\) when the remaining \(n-\size{T}\) elements of the input sequence are generated as described above.
Then \(f(T) = c(\AA(T)) - c(T)\) is the difference between the final cost and the cost of the partially filled array \(T\).
Thus \(f(T) = 0\) if \(|T| = n\), and \(\mathbb E[f(\emptyset)]\) is the value we wish to bound, using \(\emptyset\) for the empty array.

For mappings \(T_1, T_2\) where \(T_2[i] = T_1[i]\) for all indices where \(T_1[i]\) is nonempty we say that \(T_2\) can be obtained from \(T_1\).
For such a pair of mappings let \(c(T_1, T_2) = c(T_2) - c(T_1)\) be the cost of transforming \(T_1\) into \(T_2\) according to the way of accounting given above.

Let \(\mathcal T(T)\) be the set of mappings that can be obtained by inserting an epoch into \(T\).
That is, \(\mathcal{T}(T)\) contains all mappings \(T'\) which can be obtained from \(T\) where the difference between \(T'\) and \(T\) corresponds to the elements of an epoch.
Note that \(|T'| = |T| + \sqrt{n}\) for \(T' \in \mathcal T(T)\).

Let \(\mathcal{L} = \{ T \colon G(T) \leq \frac{\sqrt{n}}{8}\}\) and \(\mathcal{H} = \{ T \colon G(T) > \frac{\sqrt{n}}{8}\}\) be the sets of all partially filled arrays with a low/high number of gaps.

\begin{lemma}\label{lem_ratio}
    Let \(T_1, T_2 \in \LL\) with \(T_2 \in \mathcal T(T_1)\).
    Then \(c(T_1, T_2) \geq \frac{3}{16}\).
\end{lemma}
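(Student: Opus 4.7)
The plan is to lower-bound $c(T_1, T_2)$ by decomposing the newly filled cells into maximal contiguous ``pieces'' inside each gap of $T_1$, and exploiting the fact that the epoch consists of $\sqrt{n}$ distinct, equally spaced reals in $[0,1)$.

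First, I would set up a piece/sub-gap count inside every gap of $T_1$. For each gap $G_i$ of $T_1$, let $p_i$ denote the number of maximal contiguous runs of cells that are empty in $T_1$ but filled in $T_2$ (the ``pieces'' of new epoch values), and let $e_i$ denote the number of maximal contiguous runs of cells that remain empty in $T_2$. These two types of runs strictly alternate within $G_i$, so a quick case check (on whether the leftmost and rightmost cells of $G_i$ are filled in $T_2$, and on the degenerate case $p_i = 0$ where $e_i = 1$) gives $p_i \le e_i + 1$. Every gap of $T_2$ sits inside a unique gap of $T_1$, so $\sum_i e_i = G(T_2)$, and summing yields
\[
p := \sum_i p_i \;\le\; G(T_1) + G(T_2) \;\le\; \sqrt{n}/4.
\]
So across all gaps of $T_1$ there are at most $\sqrt{n}/4$ pieces.

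Second, I would lower-bound the contribution of each piece to $c(T_2) - c(T_1)$ by its range. Within a piece, every pair of consecutive cells is filled in $T_2$ but at least one of them was empty in $T_1$, so each $|v_\ell - v_{\ell+1}|$ contributes to $c(T_2)-c(T_1)$; summing these telescopes to at least $\max - \min$ over the piece's values. The $k$ values sitting in a piece are distinct and drawn from the arithmetic progression $\{0, 1/\sqrt n, 2/\sqrt n, \ldots, (\sqrt n -1)/\sqrt n\}$, so this range is at least $(k-1)/\sqrt n$. Adding up over the $p$ pieces, whose sizes $k_j$ sum to $\sqrt n$:
\[
c(T_1, T_2) \;\ge\; \sum_{j=1}^{p}\frac{k_j - 1}{\sqrt n} \;=\; 1 - \frac{p}{\sqrt n} \;\ge\; \frac34,
\]
which comfortably beats the required $3/16$. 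Boundary adjacencies between a piece and an adjacent $T_1$-cell on either end of $G_i$ only add extra cost and are safely ignored.

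The only step requiring real care is the structural inequality $p_i \le e_i + 1$, which is a clean consequence of the alternation of pieces and sub-gaps inside $G_i$ once the boundary cases are enumerated. The remainder is a one-line averaging computation, so I do not anticipate significant obstacles.
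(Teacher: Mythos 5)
Your proof is correct, and it takes a genuinely different route from the paper's. The paper argues \emph{per insertion}: it classifies each of the $\sqrt{n}$ epoch insertions as an ``attachment'' (placed next to an occupied cell) or not, uses $G(T_1),G(T_2)\le \sqrt{n}/8$ to force at least $\tfrac{7\sqrt{n}}{16}$ attachments, and then discounts the at most $2G(T_1)\le \sqrt{n}/4$ attachments that can be cost-free because they abut a cell already occupied in $T_1$ with an identical value, leaving $\tfrac{3\sqrt{n}}{16}$ attachments of cost $\ge 1/\sqrt{n}$ each. Your argument is instead \emph{static}: you compare $T_1$ and $T_2$ directly, decompose the newly filled cells into maximal runs (``pieces''), bound the number of pieces by $G(T_1)+G(T_2)\le \sqrt{n}/4$ via the alternation inequality $p_i\le e_i+1$ together with $\sum_i e_i = G(T_2)$, and lower-bound each piece's internal cost by its value range $\ge (k_j-1)/\sqrt{n}$, all of which checks out (every within-piece adjacency is new, epoch values are distinct multiples of $1/\sqrt{n}$, and ignored boundary adjacencies only help). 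This buys you two things: you never need the temporal bookkeeping of when neighbours get filled, and you sidestep the paper's case analysis of zero-cost attachments to pre-epoch cells, since you simply drop all piece--boundary terms. It also yields the sharper constant $c(T_1,T_2)\ge 3/4$ (which is essentially tight for this construction), comfortably implying the stated $3/16$ and, if propagated through the induction of Lemma~\ref{thmClInd}, slightly better constants, though of course the same $\Omega(\sqrt{n})$ asymptotics.
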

\begin{proof}
    If an element is placed between two empty cells, the number of gaps will increase by one.
    If it is placed next to one or two occupied cells, the number of gaps stays constant or is reduced by one -- call such an insertion an \emph{attachment}.
    As \(G(T_2) \leq G(T_1) + \frac{\sqrt{n}}{8}\) at least \(\frac{7\sqrt{n}}{16}\) of the \(\sqrt{n}\) insertions of the epoch must be attachments.
    
    An attachment incurs cost at least \(\frac{1}{\sqrt{n}}\) unless the value(s) in the neighboring cell(s) and the newly inserted value are identical.
    As all values within an epoch are distinct, an attachment can only be without cost if the neighboring cell was occupied in \(T_1\) (that is, before the epoch). As
    \(G(T_1) \leq \frac{\sqrt{n}}{8}\) at most \(\frac{\sqrt{n}}{4}\) occupied cells border an empty cell. The remaining \(\sqrt{n}(\frac{7}{16} - \frac{1}{4}) =  \frac{3\sqrt{n}}{16}\) attachments will incur non-zero cost.
\end{proof}

As a shorthand, let \((T+0)\) and \((T+1)\) be the mappings obtained by filling all gaps in \(T\) with $0$s/$1$s, respectively.
\begin{lemma}\label{lem_both}  $c(T, (T+0)) + c(T, (T+1)) \geq G(T). $
\end{lemma}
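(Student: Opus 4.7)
The plan is to decompose the combined cost $c(T, (T+0)) + c(T, (T+1))$ gap by gap and exploit the fact that all input values lie in $[0,1]$. I would fix a gap $g$ of $T$ and let $B(g) \subseteq \{L,R\}$ denote the set of sides of $g$ adjacent to an occupied cell of $T$; so $|B(g)| = 2$ for an interior gap, $|B(g)| = 1$ for a gap touching the left or right end of the array, and $|B(g)| = 0$ only if $T$ is entirely empty (a case we may exclude, as the lemma is invoked after at least one insertion). For each $s \in B(g)$, let $v_s \in [0,1]$ be the value stored in the occupied cell on side $s$.

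Next, I would unpack how filling a single gap changes the cost. Since $c$ only accumulates contributions between pairs of occupied consecutive cells, filling $g$ entirely with $0$s creates no new cost inside $g$ (every internal pair is $(0,0)$) and contributes exactly $\sum_{s \in B(g)} |v_s - 0| = \sum_{s \in B(g)} v_s$ at its boundary with $T$. Filling $g$ with $1$s analogously contributes $\sum_{s \in B(g)} (1 - v_s)$. Adding the two and using the identity $v_s + (1 - v_s) = 1$ for $v_s \in [0,1]$, the combined per-gap contribution is exactly $|B(g)|$.

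Finally, summing over all gaps gives
\[
c(T, (T+0)) + c(T, (T+1)) \;=\; \sum_{g} |B(g)| \;\geq\; G(T),
\]
since $|B(g)| \geq 1$ for every gap of a non-empty $T$. No step here is a real obstacle; the only thing that needs genuine care is confirming the per-gap decomposition of $c(T, \cdot)$, i.e., that filling a gap with a single repeated value contributes to $c$ only at its one or two boundary transitions with $T$. Once that bookkeeping is in place, the complementary identity $|v| + |1 - v| = 1$ on $[0,1]$ does all the work.
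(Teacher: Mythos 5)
Your proof is correct and follows essentially the same route as the paper: each gap is bordered by at least one occupied cell holding a value $v\in[0,1]$, and the complementary identity $|v-0|+|v-1|=1$ gives a combined contribution of at least $1$ per gap. Your version is just a more detailed bookkeeping of the same argument (and your explicit exclusion of the all-empty array is a reasonable extra care that does not affect how the lemma is used).
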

\begin{proof}
    Each gap in \(T\) is bordered by at least one non-empty cell \(A[i]\). We have 
    \(|A[i] - 0| + |A[i] - 1| = 1\).
\end{proof}

For a mapping \(T\) the expected remaining cost can be bounded from below by
\begin{multline*}
    \mathbb E[f(T)] \geq \frac{1}{2\sqrt{n}} \cdot c(T, (T+0)) + \frac{1}{2\sqrt{n}} \cdot c(T, (T+1)) \\
    + \left(1 - \frac{1}{\sqrt{n}}\right) \cdot \min_{T' \in \mathcal T(T)} \{ c(T, T') + \mathbb E[f(T')] \}
\end{multline*}
and by \cref{lem_both} we thus have
\begin{align} \label{eq:Eft}
    \mathbb{E}[f(T)] \geq \frac{G(T)}{2\sqrt{n}} + \left(1 - \frac{1}{\sqrt{n}}\right) \cdot \min_{T' \in \mathcal T(T)} \{ c(T, T') + \mathbb E[f(T')] \} \, .
\end{align}
\begin{align*}
\textrm{Let~}
    \LL(i) = \min_{\mathclap{\substack{T \in \LL \\ |T|=n - i\cdot\sqrt{n}} }} {~\mathbb E[f(T)]},  \mathrm{\quad and \quad }
    \HH(i) = \min_{\mathclap{ \substack{T \in \HH \\ |T|=n - i\cdot\sqrt{n}}}} {~\mathbb E[f(T)]}
\end{align*}
be the minimum expected cost of filling \emph{any} array with \(i \cdot \sqrt{n}\) empty cells, and which contains a low/high number of gaps, for $i\in \{0,1,\dots,\sqrt{n}\}$, respectively $i\in\{0,1,\dots,\sqrt{n}-1\}$. (Note that $\HH(\sqrt{n})$ is undefined as an empty array cannot have a high number of gaps.)  %

Combining \cref{eq:Eft} with \cref{lem_ratio} we obtain
\begin{align*}
    \LL(i) &\geq \left(1 - \frac{1}{\sqrt{n}}\right) \cdot \min \{{3}/{16} + \LL(i-1),\, \HH(i-1)\}, \\
    \HH(i) &\geq {1}/{16} + \left(1 - \frac{1}{\sqrt{n}}\right) \cdot \min \left\{\LL(i-1),\, \HH(i-1)\right\},
\end{align*}
with \(\LL(0) = \HH(0) = 0\). Our next \lcnamecref{thmClInd}, proved by induction (see \Cref{appb}), leads to the lower bound.

\begin{restatable}{lemma}{thmClInd} \label{thmClInd}
$\LL(\sqrt{n}) \in \Omega(\sqrt{n})$.
\end{restatable}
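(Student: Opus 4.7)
The plan is to extract an explicit lower bound from the coupled recurrences by showing that, after the first step, both minimizations are resolved by the same branch. This collapses the system into a scalar recursion that unrolls into a geometric sum, from which the $\Omega(\sqrt{n})$ bound falls out after substituting $i=\sqrt{n}$. Throughout I write $\rho := 1 - 1/\sqrt{n}$.

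\emph{Step 1: a stabilizing invariant.} I would first establish, by induction on $i \geq 1$, the two-sided invariant
\[ 0 \;\leq\; \HH(i) - \LL(i) \;\leq\; \tfrac{1}{16}. \]
The base case is direct from $\LL(1) \geq 0$ and $\HH(1) \geq \tfrac{1}{16}$. Under the invariant, $\HH(i-1) < \LL(i-1) + \tfrac{3}{16}$, so the minimum in the $\LL$-recurrence is attained at $\HH(i-1)$ and, symmetrically, the minimum in the $\HH$-recurrence is attained at $\LL(i-1)$. The system therefore simplifies to
\[ \LL(i) \geq \rho\, \HH(i-1), \qquad \HH(i) \geq \tfrac{1}{16} + \rho\, \LL(i-1). \]
Subtracting these, the difference $\Delta(i) := \HH(i) - \LL(i)$ satisfies the affine contraction $\Delta(i) = \tfrac{1}{16} - \rho\,\Delta(i-1)$, which maps $[0,\tfrac{1}{16}]$ into itself and propagates the invariant.

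\emph{Step 2: closed form.} Plugging the first simplified inequality into the second yields the scalar recursion $\HH(i) \geq \tfrac{1}{16} + \rho^2\, \HH(i-2)$ for $i \geq 3$, with $\HH(1), \HH(2) \geq \tfrac{1}{16}$. Unrolling gives, for even $i = 2k$,
\[ \HH(2k) \;\geq\; \tfrac{1}{16}\sum_{j=0}^{k-1}\rho^{2j} \;=\; \frac{1-\rho^{2k}}{16(1-\rho^2)}. \]

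\emph{Step 3: asymptotics.} Evaluating at $i=\sqrt{n}$ and using the standard estimates $\rho^{\sqrt{n}} = (1-1/\sqrt{n})^{\sqrt{n}} \to e^{-1}$ and $1-\rho^2 = 2/\sqrt{n} - 1/n = \Theta(1/\sqrt{n})$, we get $\HH(\sqrt{n}) \in \Omega(\sqrt{n})$, hence $\LL(\sqrt{n}) \geq \rho\,\HH(\sqrt{n}-1) \in \Omega(\sqrt{n})$. Parity at $i=\sqrt{n}$ is cosmetic: it costs at most a constant factor.

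The step I expect to be the main obstacle is Step 1: identifying an invariant tight enough to pin both minima to a fixed branch throughout the induction. Without such control, the two $\min$'s could switch branches and spoil the telescoping. The chosen invariant is convenient because $\Delta(i)$ satisfies a contraction with factor $\rho < 1$ around an equilibrium inside $[0,\tfrac{1}{16}]$, so it stabilizes automatically from the base case.
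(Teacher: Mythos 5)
Your Step 1 contains a genuine gap: it treats the displayed recurrences as if they were equalities defining $\LL(i)$ and $\HH(i)$, whereas they are only one-sided lower bounds on the true quantities (minimum expected completion costs over actual partially filled arrays). Because of this, the invariant $0 \le \HH(i)-\LL(i) \le \tfrac{1}{16}$ cannot be established as stated. Already the base case fails: from $\LL(1)\ge 0$ and $\HH(1)\ge\tfrac1{16}$ you can conclude neither $\HH(1)\ge\LL(1)$ (which needs an upper bound on $\LL(1)$) nor $\HH(1)-\LL(1)\le\tfrac1{16}$ (which needs an upper bound on $\HH(1)$), and no upper bounds on these quantities are available or proved. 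The propagation step is likewise invalid: ``subtracting'' $\LL(i)\ge\rho\,\HH(i-1)$ from $\HH(i)\ge\tfrac1{16}+\rho\,\LL(i-1)$ does not yield the equality $\Delta(i)=\tfrac1{16}-\rho\,\Delta(i-1)$, nor any two-sided control of $\Delta(i)$, since both hypotheses point in the same direction. Without the invariant, the branch-pinning that drives the telescoping in Steps 2--3 has no foundation.

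The computational core is salvageable by a standard minorization that you do not mention: define auxiliary sequences $\ell(i),h(i)$ by running the two recurrences \emph{with equality} from $\ell(0)=h(0)=0$; since each right-hand side is monotone in the previous values, a trivial induction gives $\LL(i)\ge\ell(i)$ and $\HH(i)\ge h(i)$, and your invariant, branch analysis, and geometric sum are then all correct for $(\ell,h)$, yielding $\ell(\sqrt n)\in\Omega(\sqrt n)$ and hence the lemma. The paper instead avoids branch-pinning entirely: it guesses the explicit bounds $\HH(i)\ge(1-1/\sqrt n)^i\, i/32$ and $\LL(i)\ge(1-1/\sqrt n)^i (i-1)/32$, verifies them in one joint induction by lower-bounding \emph{both} arguments of each $\min$, and finishes with $(1-1/\sqrt n)^{\sqrt n}\ge 0.3$. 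Both routes give the same $\Omega(\sqrt n)$; yours, once repaired along the lines above, is a legitimate alternative, but as written the invariant step does not go through.
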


As the empty mapping \(\emptyset\) is contained in \(\LL\) we have \(\mathbb{E}[f(\emptyset)] \geq \LL(\sqrt{n}) \in \Omega(\sqrt{n})\).
Yao's minmax principle~\cite[Prop.~2.6]{motwani1995} implies the lower bound on the expected cost of randomized algorithms for a worst-case input. 

\section{Competitiveness for online TSP}\label{sec3}

We now consider the generalization of online sorting that we call \emph{online TSP}. Given a sequence $X = (x_1, \dots, x_n) \in S^n$ for some metric space $S$, and a bijection $f \colon [n] \rightarrow [n]$, and $A[i] = x_{f(i)}$, %
let 
$\DD_f(X) = \sum_{i=1}^{n-1}{d(A[i+1], A[i])}$. Here, $d(\cdot,\cdot)$ is a metric over $S$. 
As before, $\OPT(X) = \min_f{\DD_f(X)}$, i.e., the \emph{offline optimum}, and $\AA(X)$ is the cost $\DD_f(X)$ for a function $f$ constructed by an online algorithm $\AA$ on input sequence $X$.
We define $\CC_\AA$ and $\CC$ as before. 

Our main interest is in the case $S = \mathbb{R}^d$, and particularly $d=2$, with $d(\cdot,\cdot)$ the Euclidean distance. 
As a tool in the study of the Euclidean case, we first look at a simpler, \emph{uniform metric} problem in \S\,\ref{sec31}, showing a tight $\Theta(\log{n})$ bound on the competitive ratio. Then, in \S\,\ref{sec32} we study the Euclidean $\mathbb{R}^2$ and $\mathbb{R}^d$ cases. As the uniform metric case is natural in itself, we revisit it in \S\,\ref{sec34} in the setting where the array size is larger than $n$. 

\subsection{Uniform metric}\label{sec31}

Let $S$ be an arbitrary discrete set and consider the distance function $d(x,y)=0$ if $x=y$ and $d(x,y)=1$ otherwise. Let $K = 
K(X)$ denote the number of distinct entries in the input sequence $X$, i.e., $K = |\{x_1, \dots, x_n\}|$. We give instance-specific bounds on the cost in terms of $K$ and $n$. The following claim is obvious.

\begin{claim}\label{lemoptunif}
$\OPT(X) = K-1$.
\end{claim}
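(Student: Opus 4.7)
The plan is to prove equality by establishing matching upper and lower bounds of $K-1$ on the offline cost.

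For the upper bound, I would exhibit a bijection achieving cost exactly $K-1$: group all items of the same value into contiguous blocks and place the $K$ blocks one after another in any order. Within each block all consecutive distances are $0$, and there are exactly $K-1$ boundaries between blocks, each contributing distance $1$. Hence $\OPT(X) \le K-1$.

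For the lower bound, consider any bijection $f$ and the resulting array $A[1],\dots,A[n]$. Partition $[n]$ into maximal runs of consecutive indices holding the same value; call these \emph{blocks}, and let $B$ be their number. Each consecutive pair $A[i],A[i+1]$ either lies inside a single block (contributing $0$) or straddles a block boundary (contributing $1$), so $\DD_f(X)=B-1$. Since every one of the $K$ distinct values must occupy at least one block, $B \ge K$, hence $\DD_f(X) \ge K-1$. Taking the minimum over $f$ yields $\OPT(X) \ge K-1$.

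Combining the two inequalities gives $\OPT(X)=K-1$. The argument uses nothing beyond the definition of the uniform metric and the pigeonhole observation that distinct values cannot share a block, so there is no real obstacle; the statement is essentially a direct consequence of counting block boundaries.
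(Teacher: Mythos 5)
Your proof is correct: the upper bound via contiguous same-value blocks and the lower bound via counting maximal constant runs (each of the $K$ values needing its own run, so at least $K-1$ unit-cost boundaries) is exactly the standard argument, which the paper leaves implicit by declaring the claim obvious.
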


Next, we give a bound on the online cost and show that it is asymptotically optimal.

\begin{claim}\label{lemgame}
There is an online algorithm $\AA$ with cost $\AA(X) \leq K \log_2{n}.$
\end{claim}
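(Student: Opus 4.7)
I would analyze the natural greedy algorithm: when an item of type $t$ arrives, if some empty cell is adjacent to a cell already holding $t$, place the new item in that cell (extending an existing $t$-run at zero new cost); otherwise, place it at the middle of the longest maximal contiguous run of empty cells (a ``gap''), creating a new $t$-run. The cost $\AA(X)$ equals the number of adjacent cell-pairs of distinct types in the final array, which is $\sum_t r_t - 1$ where $r_t$ is the number of maximal monochromatic $t$-runs. Since extensions cannot increase $r_t$ and merges of $t$-runs only decrease it, $r_t \leq N_t$, where $N_t$ is the number of new-$t$-run creation events. It thus suffices to show $\sum_t N_t \leq K \log_2 n + 1$.

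\textbf{Coin-removal game and geometric-decay bound.} I would abstract the execution as a coin-removal game whose state consists of the current gaps, each annotated with up to two ``endpoint colors'' (the types of the cells immediately bordering the gap). A type-$t$ arrival either extends a $t$-tagged gap (removing one coin at the $t$-tagged end and preserving the tag unless the gap is emptied), or, if no gap is $t$-tagged, splits the globally largest gap at its middle and tags both halves with $t$. For each type $t$ I would prove by induction on $k$ that the $k$-th creation of $t$ in a gap of size at least two uses a gap of size at most $n/2^{k-1}$, bounding such ``large'' creations to $\log_2 n + 1$ per type. The induction leverages that (i) the $(k-1)$-th large creation produces two $t$-tagged sub-gaps of size at most $s^t_{k-1}/2$, both of which must be fully consumed before $t$'s next creation (since middle-splits by any type never destroy a $t$-tag and the greedy extends whenever a $t$-tagged gap is present); and (ii) the greedy's ``split the largest gap'' rule forces any gap of size exceeding $n/2^{k-1}$ to be split during the interim. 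Creations in size-$1$ gaps, which do not produce new $t$-tags and may therefore recur, are handled by a separate counting argument: each such size-$1$ gap arises as a sub-gap of some earlier large creation, so their total number across all types is linearly bounded by the number of large creations.

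\textbf{Main obstacle.} The delicate aspect of the proof is the interaction between different types' creations on gaps carrying multiple type tags: a creation by some type $t' \neq t$ can split a gap that is $t$-tagged on one side, propagating rather than removing the $t$-tag in the corresponding sub-gap. The coin-removal game abstraction cleanly tracks these tags through splits and extensions, and this is crucial for maintaining the per-type inductive invariant despite the adversary's arbitrary interleaving of types, and for combining the bounds on large and small creations into the desired $\sum_t N_t \leq K \log_2 n + O(1)$.
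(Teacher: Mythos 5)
Your algorithm and the reduction to counting run-creation events are essentially the paper's (the paper's cursor algorithm and its coin-removal game are your greedy with gap/tag bookkeeping), but the heart of your argument --- the per-type induction that the $k$-th creation of type $t$ uses a gap of size at most $n/2^{k-1}$ --- is not established, and this is a genuine gap. Step (ii) of your inductive step assumes that every gap of size exceeding $n/2^{k-1}$ gets split ``during the interim'' between the $(k-1)$-th and $k$-th creations of $t$, but nothing forces any split to occur then: the adversary can feed only $t$'s, each of which extends into a $t$-tagged gap and touches no other gap. So an \emph{untagged} gap whose size is close to $s^t_{k-1}$ (e.g.\ a second gap of maximal size present at the $(k-1)$-th creation) can survive completely untouched --- gaps only shrink when items are placed inside them --- and then host the $k$-th creation once the $t$-tagged gaps are exhausted. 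The inductive step therefore only yields $s^t_k \le s^t_{k-1}$, not $s^t_{k-1}/2$. (Your claim (i) is also slightly off as stated: after splits by other types, only the $t$-tagged descendants of the two halves need to be consumed, not the halves themselves; but even granting (i), the induction does not close.) A telling symptom is that your argument never uses $K$: if it worked, it would bound each type's creations by $\log_2 n + O(1)$ regardless of the other types' behaviour, and it is precisely this $K$-independent per-type decay that the dynamics do not force.

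The paper's proof avoids this by arguing \emph{globally over all splits in time order}: letting $n_i$ be the size of the gap split at the $i$-th creation (of any type), the sequence $n_i$ is non-increasing, and $n_{i+K} \le n_i/2$, because after splitting the largest gap of size $s$ there are at most $K-2$ other gaps of size in $(s/2, s]$ (here the bound of $K$ on the number of active types/cursors is used), and no gap of size exceeding $s/2$ is ever created afterwards; since $n_1 \le n$ and $n_i \ge 1$, there are at most $K \log_2 n$ splits in total. This window-of-$K$ halving is exactly the decay mechanism your per-type invariant tries to capture but cannot. Your separate accounting of creations in size-one gaps is fine in principle (each gap hosts at most one such creation and the number of gaps ever created is at most one plus the number of splits), but note it would at best give a bound of roughly $2K\log_2 n$ rather than the claimed $K\log_2 n$; in the paper this issue does not arise because the cost is charged directly to splits of the coin game. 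To repair your proof you would need to replace the per-type induction by such a global argument (or prove a per-type statement of the form ``within any $K$ consecutive creations the relevant gap size halves'').
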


\begin{claim}\label{lemlb}
For every (possibly randomized) algorithm $\AA$ and all $K \geq 3$, there is an input distribution $X$ with $K = K(X)$ such that $\Ep{\AA(X)} \in \Omega(K \log{n})$.
\end{claim}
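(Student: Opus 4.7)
By Yao's minimax principle, to lower bound the expected cost of any randomized online algorithm on an adversarially chosen input, it suffices to exhibit, for each $K \geq 3$, a single input distribution $\mathcal{D}$ on sequences of length $n$ from $K$ types such that every \emph{deterministic} online algorithm incurs expected cost $\Omega(K \log n)$ on $\mathcal{D}$. As a candidate, I propose a \emph{phase-based} distribution: partition the sequence into $M = \Theta(K \log n)$ equal-size phases of length $n/M$, and independently draw the type of each phase uniformly at random from $[K]$. By independence, the expected number of adjacent phase pairs with distinct types is $M(1 - 1/K) = \Omega(K \log n)$.

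The core of the argument is to show that each such phase transition in the input contributes $\Omega(1)$ to the algorithm's final cost in expectation. The natural tool is the coin-removal game framework (analogous to the one behind the upper bound of \cref{lemgame}): model each maximal contiguous empty region as a coin whose value equals its size. An item placement either decreases a coin's value (at the boundary) or splits the coin into two (in the interior); a new block in the final array arises whenever an item is placed in a previously empty region or whenever two adjacent cells end up holding items of different types. The plan is to argue that, under the random phase structure, whenever a phase begins with a type differing from the previous one, the algorithm must---with constant probability---either split a coin or create a type mismatch at some region boundary, each contributing $\Omega(1)$ to the final cost. Summing over the $\Omega(K \log n)$ expected phase transitions yields the bound.

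The main obstacle is ruling out clever ``type-reservation'' strategies in which the algorithm pre-allocates space for each of the $K$ types in a hierarchical manner, absorbing phase transitions without creating new blocks. The assumption $K \geq 3$ is essential here: for $K = 2$ the algorithm achieves $O(1)$ cost by simply growing the two types from opposite ends of the array, but for $K \geq 3$ any reservation scheme corresponds to a recursive subdivision of the empty space, and the $\Omega(K \log n)$ random phase transitions cannot all be absorbed by a scheme of depth $o(\log n)$. Formalizing this---for instance, via a potential function tracking, for each type, the number of empty boundaries available to accept it---should complete the argument.
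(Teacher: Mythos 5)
Your overall framing (Yao's principle, exhibiting a hard distribution for deterministic algorithms) matches the paper, but the distribution you propose and the key step you leave open do not work as stated. The crux of your plan is that each of the $\Omega(K\log n)$ phase transitions contributes $\Omega(1)$ to the algorithm's expected final cost. This is exactly the part that is unsubstantiated, and with fixed-length phases of length $n/M = \Theta\bigl(n/(K\log n)\bigr)$ it is very likely false: early in the process the free space is $\Theta(n)$, so each type can be kept supplied with a reserved contiguous block of size far exceeding one phase length, and a transition to a new type can be absorbed at zero marginal cost (the new run simply continues in that type's reserved block; no new boundary beyond the at most $K-1$ unavoidable ones need ever be created). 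A pigeonhole-type argument can only force cost once the phase length is comparable to a $1/K$ fraction of the \emph{remaining} free cells, and with your fixed phase length this happens only during the last $O(K)$ phases, yielding $\Omega(K)$ rather than $\Omega(K\log n)$. Your own closing paragraph concedes that ruling out such reservation strategies is the main obstacle and offers only a sketch of a potential function, so the proposal is a plan rather than a proof.

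The paper's construction differs in precisely the two places your argument is missing. First, epoch lengths are not fixed: at the start of each epoch a type $y$ is drawn uniformly from $[K]$ and repeated $\Theta(n'/K)$ times, where $n'$ is the \emph{current} number of free cells; this keeps the epoch length at the critical scale throughout, and since the free space shrinks by a factor $1-\Theta(1/K)$ per epoch, the number of epochs is $\Omega(K\log n)$. Second, the per-epoch cost is forced by a counting argument on ``friendly'' cells: each free cell is friendly for at most two types, so $\sum_j z_j \le 2n'$, hence a uniformly random type has at most $O(n'/K)$ friendly cells with constant probability, and an epoch of $\Theta(n'/K)$ identical items then cannot be placed entirely in friendly cells, forcing at least one unit of cost. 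If you want to salvage your approach, you would need to replace the fixed-length phases by epochs scaled to the remaining free space and supply a quantitative argument of this friendly-cell type; the number of input type-changes alone does not lower bound the cost.
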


Note that the condition $K\geq3$ is essential; if $K=1$, then $\AA(X) = \OPT(X) = 0$ for every algorithm $\AA$, and if $K=2$, then there is an online algorithm $\AA$ that achieves $\AA(X) = \OPT(X) = 1$. (Place the two types of elements at the opposite ends of the array.)

Claims~\ref{lemoptunif}, \ref{lemgame}, \ref{lemlb} together yield the main result of this subsection. 

\thmCompUniform*

\begin{proof}[Proof of Claim~\ref{lemgame}]

We describe algorithm $\AA$ (see Appendix~\ref{appc} for an alternative), noting that it %
is not required to know $K$ in advance. 
Assume without loss of generality that $\{x_1,\dots,x_n\} = \{1,2,\dots,K\}$. 
For each $j$, with $1 \leq j \leq K$, maintain a \emph{cursor} $c_j \in [n]$ indicating the array cell where the next item $x_i$ is placed if it equals $j$.  More precisely, if $x_i = j$, then let $f({c_j}) = i$, and move the cursor to the right: $c_j = \min\{c_j+1,n\}$.

If $f(c_j)$ is already assigned (i.e., the cell $A[c_j]$ is already written), set $c_j$ to the mid-point of the largest empty contiguous interval. %
Similarly, when $j$ is encountered the first time, then initialize $c_j$ at the mid-point of the largest empty contiguous interval. Thus, initially, $c_{x_1} = \lfloor {n/2} \rfloor$, i.e., place the first element at the middle of the array.

Clearly, $\AA$ can always place $x_i$ \emph{somewhere}, so $\AA$ correctly terminates. 
It remains to prove the upper bound on the number of unequal neighbors at the end of the process.

We model the execution of $\AA$ as a \textbf{coin-game}. Consider a number of up to $K$ \emph{piles} of coins. The game starts with a single pile of $n$ coins. An adversary repeatedly performs one of two possible operations:
(1) remove one coin from an arbitrary pile, (2) split the \emph{largest} pile into two equal parts. Operation (2) is only allowed when the number of piles is less than $K$, and only for a pile of at least two coins. The game ends when all coins have been removed. 

It is easy to see that this game models the execution of $\AA$ in the sense that for any execution of $\AA$ on $X$ there is an execution of the coin-game in which the sizes of the piles correspond at each step to the lengths of the contiguous empty intervals in the array. Moreover, the number of consecutive unequal pairs at the end of algorithm $\AA$ (i.e., the cost $\AA(X)$) is at most the number of splits (i.e., operations (2)) of the coin-game execution. It is thus sufficient to upper bound the number of splits in \emph{any} execution of the coin game. 

Let $n_i$ denote the size of a pile before its split, for the $i$-th split operation. As $n_i$ is the size of the largest pile and piles can only get smaller, the sequence $n_i$ is non-increasing. 
Suppose the $i$-th split replaces a pile of size $t$ with two piles of size $t/2$. Then, after the split there are at most $K-2$ piles with sizes in $[t/2,t]$ and no pile greater than $t$. Thus, after at most $K-2$ further splits, we split a pile of size at most $t/2$. (Possible operations (1) can only strengthen this claim, as they make some piles smaller.)
It follows that $n_{i+K} \leq n_i/2$ for all $i$. As $n_1 \leq n$, and $n_i \geq 1$ for all $i$, the number of splits is at most $K \log_2{n}$.
\end{proof}

\begin{proof}[Proof of \cref{lemlb}]
  Let $\A$ be an algorithm filling the items into an array $A$ of size $n$.
  We present a distribution over inputs $X$ incurring cost $\Ep{X} \geq \Omega(K \log n)$.
  By Yao's minmax principle~\cite[Prop.\ 2.6]{motwani1995} we thus get that every randomized algorithm has a worst-case input of cost $\Omega(K \log n)$.

  For each free cell $A[i]$ we say that $A[i]$ is \emph{friendly} for the two (possibly identical) elements $y_k$ and $y_l$ that are placed closest to the left and right of $A[i]$ in $A$.
When asked to place an element $y$ into $A$, $\A$ will increase the cost of the partial solution \emph{unless} $y$ is placed in a cell that is friendly for~$y$.
When placing $y$ in a cell that is friendly for $y$, the number of friendly cells for $y$ will decrease by one. The number of friendly cells for other elements may or may not decrease, but no element will have more friendly cells than before the insertion of~$y$.

Let $z_1 \leq z_2 \leq \cdots \leq z_K$
be the number of friendly cells for each of the $K$ values, sorted nondecreasingly.
With $n'$ free cells in $A$, $\sum_{i=1}^K z_i \leq 2n'$, and hence $z_1 \leq 2n' / K$.
Consider the median $z_{\floor{K/2}}$. As $\sum_{i=\floor{K/2}}^K z_i \leq 2n'$, we have $z_1 \leq \dots \leq z_{\floor{K/2}} \leq \frac{2n'}{\frac{1}{2}K} = 4n'/K$.

We now describe our input distribution, proceeding in \emph{epochs}:
At the start of each epoch a value $y$ is chosen uniformly at random from $\set{1, \dots, K}$, and this element is presented all through the epoch.
If $K \in \set{3, 4}$, the epoch will consist of $2n'/K$ copies of $y$, where $n'$ is the number of unoccupied cells at the start of the epoch.
By the first observation above, $\Prp{z_y \leq 2n'/K} \geq 1/K \geq 1/4$, in which case the input forces $\A$ to increase the cost of the partial solution.

If $K > 4$, we instead let the epoch be of length $4n'/K$, and by the second observation above we obtain that $\Prp{z_y \leq 4n'/K} \geq 1/2$, again increasing the cost of the partial solution with constant probability.

To establish a lower bound for the expected cost produced by this input, it remains to lower bound the number of epochs processed.
For small $K$, the epoch leaves $n' \cdot \frac{K-2}{K}$ free cells for the coming epochs. For $K > 4$ the epoch leaves $n' \cdot \frac{K - 4}{K}$ free cells. Thus, in both cases, the number of epochs is at least
\[
\log_{\frac{K}{K-4}} (n) = \frac{\log_2(n)}{\log_2 \left( 1 + \frac{4}{K-4} \right)}
\geq \frac{K-4}{4} \log_2(n) \, .
\]
\noindent %
Hence, the input will force \(\A\) to produce a solution of expected cost $\Omega(K \log n)$.
\end{proof}

\subsection{Online TSP in \texorpdfstring{$\mathbb{R}^d$}{R\^{}d}}\label{sec32}
We now proceed to the case where $S = \mathbb{R}^d$ and $d(\cdot,\cdot)$ is the Euclidean distance.
We start with the first new case, $d=2$. 
For ease of presentation, we omit floors and ceilings in the analysis. We assume that the input points are from the unit box $[0,1]^2$ and that the optimum length is at least $1$. These assumptions can be relaxed by a similar doubling-approach as in the proof of Claim~\ref{thm:allreals}.

The following result is well known~\cite{Beardwood_Halton_Hammersley_1959, DBLP:journals/siamdm/Karloff89,Few_1955} (consider, e.g., a $\sqrt{n} \times \sqrt{n}$ uniform grid). %

\begin{claim}\label{geombound}
For all sequences $X$ of $n$ points in $[0,1]^2$, we have $\OPT(X) \in O(\sqrt{n})$. Moreover, there exists a sequence $X$ of $n$ poins in $[0,1]^2$ such that $\OPT(X) \in \Omega(\sqrt{n})$.
\end{claim}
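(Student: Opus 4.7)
The plan is to prove the two parts separately using standard geometric constructions; both are classical, so I would keep the argument short and concrete.

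For the upper bound, I would exhibit a specific tour via the well-known boustrophedon (snake) construction. Partition $[0,1]^2$ into $\sqrt{n}$ horizontal strips of height $1/\sqrt{n}$, and sort the input points first by the strip they lie in (say, top to bottom) and then by $x$-coordinate, alternating left-to-right and right-to-left in consecutive strips so that consecutive strips are joined by a short ``U-turn.'' I would then account for the total cost in three buckets: (i) horizontal movement inside strips is at most $1$ per strip, so $O(\sqrt{n})$ in total; (ii) vertical movement between consecutive points inside one strip is at most $1/\sqrt{n}$ per point, totaling $O(n/\sqrt{n}) = O(\sqrt{n})$; (iii) the vertical jump between the last point of one strip and the first point of the next strip is at most $2/\sqrt{n}$, and there are $\sqrt{n}-1$ such jumps, contributing $O(1)$. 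Summing these three contributions gives $\OPT(X) = O(\sqrt{n})$.

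For the lower bound, I would take the explicit instance $X$ consisting of the $n$ points of the uniform $\sqrt{n} \times \sqrt{n}$ grid $\{(i/\sqrt{n}, j/\sqrt{n}) : 0 \le i,j < \sqrt{n}\} \subseteq [0,1]^2$. The key observation is that the minimum Euclidean distance between any two distinct points of this set is exactly $1/\sqrt{n}$. Since any ordering $f$ produces a path with $n-1$ edges, and every edge has length at least $1/\sqrt{n}$, we get $\DD_f(X) \ge (n-1)/\sqrt{n} = \Omega(\sqrt{n})$, and hence $\OPT(X) = \Omega(\sqrt{n})$.

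Neither direction presents a real obstacle: the argument is essentially a counting/accounting exercise once the snake tour and the uniform grid are fixed. The only minor subtlety is making sure the three buckets in the upper-bound accounting do not double-count the U-turn edges, which is easily handled by assigning each edge of the tour to exactly one of the three categories (intra-strip horizontal, intra-strip vertical, or inter-strip vertical).
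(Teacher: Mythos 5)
Your proof is correct, and it is worth noting that the paper itself gives no proof of this claim: it is cited as a classical fact about Euclidean TSP (Beardwood--Halton--Hammersley, Few, Karloff), with only the parenthetical hint about the $\sqrt{n}\times\sqrt{n}$ uniform grid, which is exactly your lower-bound instance. Your lower bound is fine as stated: all $n$ grid points are distinct, the minimum pairwise distance is $1/\sqrt{n}$, and any ordering uses $n-1$ edges, giving $\OPT(X)\geq (n-1)/\sqrt{n}$ (up to the usual assumption that $\sqrt{n}$ is an integer, which only affects constants).

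One small accounting slip in your upper bound: the edge joining the last point of one strip to the first point of the next has not only a vertical component (at most $2/\sqrt{n}$, as you say) but also a horizontal component that can be as large as $1$. The boustrophedon order does not make these two endpoints horizontally close: the strip traversed left-to-right ends at its rightmost point, and the next strip (traversed right-to-left) starts at \emph{its} rightmost point, and these two $x$-coordinates can differ by up to $1$. So the three buckets you list do not cover the whole tour length, and the ``assign each edge to exactly one bucket'' fix does not work either, since an inter-strip edge has both a horizontal and a vertical part. The repair is immediate: decompose every edge into its horizontal and vertical components (using $\sqrt{a^2+b^2}\leq a+b$) and add a fourth bucket for inter-strip horizontal movement, at most $1$ per strip boundary, i.e.\ at most $\sqrt{n}-1$ in total. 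This leaves the overall bound at $O(\sqrt{n})$, so the claim stands; it is a bookkeeping omission rather than a genuine gap.
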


As a warm-up before our competitiveness result, we give an upper bound on the cost of an online algorithm and we show the tightness of this bound. The arguments extend the one-dimensional ones in a straightforward way. The proofs of the following can be found in~\cref{appc}.
\begin{restatable}{claim}{thmClaimOnlineUB}   \label{claim_online_UB}
There is an online algorithm $\AA$ such that $\AA(X) \in O(n^{2/3})$ for all $X \in [0,1]^2$.
\end{restatable}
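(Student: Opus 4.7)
The plan is to use a strip-based decomposition that reduces $2$D online sorting to a collection of independent $1$D online sorting instances. Let $k = \lceil n^{1/3}\rceil$, partition $[0,1]^2$ into $k$ vertical strips $S_1,\dots,S_k$ of width $1/k$, and reserve for each strip $S_i$ a contiguous block $B_i$ of $\lceil n/k\rceil = \Theta(n^{2/3})$ array cells, with $B_1,\dots,B_k$ placed consecutively in the array. When a point $p=(p_x,p_y)\in S_i$ arrives, I would place it into the next slot of $B_i$ using the $1$D online sort algorithm from \cref{thm:allreals} applied to the $y$-coordinate.

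For the cost analysis I bound each Euclidean step by $|\Delta x|+|\Delta y|$. Within each block $B_i$, assuming it does not overflow:
\begin{itemize}
\item $\sum |\Delta y| = O(\sqrt{n/k}) = O(n^{1/3})$ by \cref{thm:allreals}, since the $y$-coordinates lie in $[0,1]$ and the block has $\Theta(n^{2/3})$ cells;
\item $\sum |\Delta x| \le (n/k)\cdot (1/k) = O(n^{1/3})$, since all points placed in $B_i$ lie in a strip of width $1/k$.
\end{itemize}
Summing over the $k = n^{1/3}$ blocks yields $O(k\cdot n^{1/3}) = O(n^{2/3})$, and the $k-1$ transitions between consecutive blocks contribute at most $O(k)=O(n^{1/3})$, giving the claimed $O(n^{2/3})$ bound. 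Choosing $k=n^{1/3}$ is exactly the point where the intra-block $x$-cost, the intra-block $y$-cost, and the inter-block transition cost balance.

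The hard part is handling overflow: in the worst case an adversary could place all $n$ points inside a single strip $S_i$, so $B_i$ cannot fit them all. I would deal with this by reserving the second half of the array as an overflow region, recursively running the same algorithm on it, and dedicating only $n/(2k)$ cells per strip block in the main half. Points that cannot fit in their home $B_i$ are forwarded to the overflow region; to keep the algorithm well-defined when the adversary concentrates all mass in one strip, the empty cells of under-used $B_j$'s can be attached to the recursive overflow call. Writing $T(n)$ for the worst-case cost on $n$ points in $[0,1]^2$ with $n$ cells, the preceding analysis gives $T(n) \le c\cdot n^{2/3} + T(\lfloor n/2\rfloor)$, which telescopes to $T(n)=O(n^{2/3})$ as a geometric series with ratio $(1/2)^{2/3}<1$.
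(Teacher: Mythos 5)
Your strip decomposition and the per-level accounting (intra-block $y$-cost via Claim~\ref{thm:allreals}, intra-block $x$-cost $\le (n/k)\cdot(1/k)$, plus $O(k)$ block transitions) are sound, but the overflow handling---the part you yourself identify as the hard part---has a genuine gap. With a static allocation of $n/(2k)=n^{2/3}/2$ cells per strip, an adversary that concentrates essentially all points in a single strip places only $n^{2/3}/2$ points in the main half and forwards roughly $n-n^{2/3}/2$ points to the overflow call. Attaching the unused cells of the other blocks keeps the algorithm well-defined, but it destroys the size bound on the recursive instance: the recurrence becomes $T(n)\le c\,n^{2/3}+T(m)$ with $m$ possibly close to $n$, not $T(\lfloor n/2\rfloor)$, so the geometric series collapses. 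In the concentrated-strip scenario the instance needs $\Theta(n^{1/3})$ levels just to halve, and if each level is only bounded by $c\,m_\ell^{2/3}$ as in your analysis, the resulting bound is $\Theta(n)$. A secondary issue: once partially filled blocks are handed to the recursion, each boundary between an old strip point and a later, arbitrary point costs $\Theta(1)$, and since the subroutine of Claim~\ref{thm:allreals} scatters its items over $\Theta(\sqrt{n/k})$ of its own sub-blocks, a single partially filled strip block can create up to $\Theta(n^{1/3})$ such interfaces---so the interface cost per level is not the $O(k)$ you charge for block transitions (it is still $O(n^{2/3})$ per level, but this again only helps if the level sizes shrink geometrically).

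The paper avoids exactly this trap by making the space assignment dynamic rather than static: it partitions $[0,1]^2$ into $n^{2/3}$ square boxes of side $n^{-1/3}$, partitions the array into $2n^{2/3}$ blocks of size $n^{1/3}/2$, gives a box a fresh block each time its current block fills, and recurses only when no empty block remains. At that moment at most $n^{2/3}$ blocks (one active block per box) can be non-full, so at least half of the array is provably occupied; hence the recursive instance genuinely has at most $n/2$ points, which is what legitimizes $t(n)\le O(n^{2/3})+t(n/2)$. No 1D subroutine is needed, since within a block all points come from one tiny box and each adjacency costs $O(n^{-1/3})$. To repair your proof you would need an analogous certificate---e.g.\ over-provision blocks and let a strip claim a new block whenever its current one fills, so that exhausting the blocks guarantees a constant fraction of the array is full---rather than a fixed one-block-per-strip quota.
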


\begin{restatable}{claim}{thmClaimLB}   \label{claimLB}
For every deterministic $\AA$ there is an input $X$ such that $\AA(X) \in \Omega(n^{2/3})$.
\end{restatable}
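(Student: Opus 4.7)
The plan is to embed $k=n^{1/3}$ independent deterministic 1D hard instances from~\cite{online_sort} on parallel horizontal lines in $[0,1]^2$, so that the per-line $\Omega(\sqrt{m})=\Omega(n^{1/3})$ lower bound with $m=n/k=n^{2/3}$ compounds to $\Omega(k\cdot n^{1/3})=\Omega(n^{2/3})$.

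Concretely, I would take lines $\ell_i=[0,1]\times\{i/k\}$ for $i=1,\dots,k$ and let the adversary proceed in $k$ epochs; epoch $i$ reveals $m$ points on $\ell_i$, choosing their $x$-coordinates adaptively by the deterministic 1D adversary of~\cite{online_sort} applied to the induced sub-algorithm on line $\ell_i$. This sub-algorithm is obtained by restricting $\AA$'s placements to line-$\ell_i$ points: each $\ell_i$-point arrival forces $\AA$ to commit to an array cell, and hence to the new point's rank among already-placed $\ell_i$-points — precisely the feedback the 1D adversary requires. The 1D lower bound thus guarantees $C_i\ge\Omega(\sqrt m)=\Omega(n^{1/3})$, where $C_i=\sum_{j=1}^{m-1}|x^{(i)}_{j+1}-x^{(i)}_{j}|$ and $x^{(i)}_1,\dots,x^{(i)}_m$ are the $x$-coordinates of the $\ell_i$-points, listed in order of their array positions $p^{(i)}_1<\dots<p^{(i)}_m$.

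To aggregate these per-line bounds I use that for each line $\ell_i$, the triangle inequality gives $\AA(X)\ge\sum_j d(A[p^{(i)}_{j+1}],A[p^{(i)}_j])\ge C_i$, but summing the $C_i$'s requires handling the fact that the virtual sub-paths $[p^{(i)}_1,p^{(i)}_m]$ may share 2D edges. I plan a case split. Call the schedule \emph{segregated} if each line's points occupy a single contiguous array block; then the $k$ sub-paths are pairwise edge-disjoint, $\sum_i C_i\le\AA(X)$, and $\AA(X)\ge\Omega(k\cdot n^{1/3})=\Omega(n^{2/3})$. Otherwise, every pair of adjacent array cells holding points from different lines contributes at least $1/k$ to $\AA(X)$ (their vertical separation), and a counting argument shows that either such cross-line adjacencies alone yield $\Omega(n^{2/3})$ vertical cost, or a constant fraction of lines remain ``mostly contiguous'' and the summable $C_i$'s on those lines still dominate.

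The main obstacle will be formalizing the charging: the per-line bounds live on possibly overlapping virtual sub-paths, so I must show each 2D edge contributes to only $O(1)$ lines. I expect to do so by charging edge $(j,j+1)$ only to the at most two lines containing $A[j]$ and $A[j+1]$ and balancing against the vertical cost from cross-line adjacencies, which together force $\AA(X)\in\Omega(n^{2/3})$ in both regimes; the input $X$ realizing the bound is the sequence produced by the adversary against the fixed deterministic $\AA$.
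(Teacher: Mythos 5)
The central step of your plan---that each per-line cost $C_i$ is $\Omega(\sqrt{m})$ with $m=n^{2/3}$ because the 1D adversary of~\cite{online_sort} can be run against ``the induced sub-algorithm on line $\ell_i$''---has a genuine gap. The $\Omega(\sqrt{m})$ lower bound is specific to placing $m$ items into an array of \emph{exactly} $m$ cells; during your epoch $i$ the algorithm may intersperse the $\ell_i$-points with cells left empty for later epochs anywhere in the $n$-cell array, so the induced one-dimensional problem is online sorting of $m$ items into an array of size up to $n=km$. That is the ``larger array'' regime, where Aamand et al.\ show the achievable cost drops far below $\sqrt{m}$ (subpolynomial in $m$ already for a constant-factor surplus of cells), so no per-line $\Omega(\sqrt m)$ can be extracted epoch by epoch; only the very last epoch provably faces a tight array, which yields merely $\Omega(n^{1/3})$. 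Relatedly, the 1D adversary needs the gap structure of a fixed size-$m$ array (to pick values far from all gap boundaries and to decide when to flood with $0$s/$1$s); during epoch $i$ that structure does not exist, since which $m$ cells will eventually hold line $i$'s points is undetermined, so the ``rank among already-placed $\ell_i$-points'' is not the feedback the adversary actually requires and the reduction is not faithful. Your proposed repair---charging cross-line adjacencies $1/k$ each and splitting into ``segregated'' versus ``interleaved'' schedules---is exactly where the difficulty lies and is not substantiated: an algorithm can give each early epoch a constant-factor surplus of cells interleaved only with a few \emph{neighboring} lines, keeping per-line horizontal cost at $m^{o(1)}$, the number of cross-line adjacencies well below $\Omega(n)$, and their vertical spans at $O(1/k)$; neither branch of your case split covers this intermediate behavior, and it is not clear that the two cost types must sum to $\Omega(n^{2/3})$ without a substantially more careful charging or potential argument.

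For contrast, the paper's proof avoids any per-line decomposition and instead adapts the 1D adversary directly to the plane: the allowed inputs are the $n^{1/3}\times n^{1/3}$ grid points of $[0,1]^2$; as long as some grid point has no placed copy adjacent to an empty cell, the adversary emits that point, forcing cost $\Omega(n^{-1/3})$ per item, and once every grid point sits next to an empty cell it floods with $(0,0)$, so that the $n^{2/3}$ distinct grid points contribute total cost $\Omega(n^{2/3})$ via their distances to the origin. You may still be able to push a product-of-1D-instances construction through, but you would need a genuinely new argument quantifying the trade-off between horizontal slack and vertical adjacency cost; as written, the per-line lower bound you invoke is simply not available.
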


Now we move to the study of the competitive ratio. Note that the lower bound of Claim~\ref{randomizedLowerBound} immediately applies to our setting. %
Combined with Claim~\ref{claim_online_UB}, it follows that $\CC \in O(n^{2/3}) \cap \Omega(n^{1/2})$. In the following we (almost) close this gap, proving the following.

\begin{restatable}{claim}{thmPlanarUB} 
\label{thm2}
The competitive ratio of online TSP in $\mathbb{R}^2$ is $O(\sqrt{n \log{n}})$.
\end{restatable}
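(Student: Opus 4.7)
The plan is to reduce to the uniform-metric case via a grid decomposition. Tile $[0,1]^2$ with an axis-aligned grid of side $r := \sqrt{(\log n)/n}$; every cell has Euclidean diameter $O(r)$. Treat each grid cell as a ``type'' and feed the induced type stream into the $O(\log n)$-competitive online algorithm for uniform metrics from Theorem~\ref{thm:comp_uniform}. Two consecutive array entries whose points lie in the same grid cell contribute cost at most $O(r)$, while a ``switch'' between two different types costs at most $O(1)$ (the diameter of $[0,1]^2$).

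The crucial geometric step is to bound the number $K$ of non-empty grid cells by $L := \OPT(X)$. The offline optimum is a polygonal path of total length $L$ that passes through every input point, and hence meets every non-empty cell. Covering this path by $O(1+L/r)$ balls of diameter $r$, each of which intersects only $O(1)$ grid cells, gives $K \le O(1+L/r)$.

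By Theorem~\ref{thm:comp_uniform}, the uniform-metric algorithm makes at most $O(K\log n) = O((1+L/r)\log n)$ type-switches, each of cost $O(1)$, in addition to at most $n-1$ same-type transitions of cost $O(r)$. Summing,
$$ \AA(X) \;\le\; O\!\left(\frac{L\log n}{r} + \log n + nr\right). $$
With $r = \sqrt{(\log n)/n}$ and the standing assumption $L \ge 1$, both dominant terms simplify to $O(L\sqrt{n\log n})$, so dividing by $L$ yields competitive ratio $O(\sqrt{n\log n})$.

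The main obstacle is establishing the geometric bound $K \le O(1+L/r)$; once this is in hand the analysis is routine accounting. A minor point worth checking is that the uniform-metric bound in Theorem~\ref{thm:comp_uniform} applies even though the set of types is revealed online, which follows from inspecting that proof---only the final number of distinct types matters. The normalization assumptions (inputs lying in $[0,1]^2$ and $L \ge 1$) can be removed exactly as in the proof of Claim~\ref{thm:allreals}.
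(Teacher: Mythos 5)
Your proposal is correct and follows essentially the same route as the paper: the same grid at scale $r = \sqrt{(\log n)/n}$ (i.e., $t=\sqrt{n/\log n}$), the same reduction of inter-cell structure to the uniform-metric algorithm of Theorem~\ref{thm:comp_uniform}/Claim~\ref{lemgame}, and the same accounting $O(K\log n)$ switches at cost $O(1)$ plus $n$ same-cell transitions at cost $O(r)$ (the paper's Claim~\ref{claimdist}). The only difference is how the key geometric bound $K\le O(1+\OPT/r)$ is established: you cover the optimal Hamiltonian path by $O(1+\OPT/r)$ balls of diameter $r$, each meeting $O(1)$ cells, whereas the paper (Lemma~\ref{lemopt}) picks a representative point per touched cell and argues that among any five consecutive representatives on the induced tour two lie in non-neighboring cells, giving $\OPT\ge K/(4t)$; both arguments are valid and give the same bound up to constants, with yours perhaps slightly more robust to the exact grid geometry and the paper's avoiding any covering argument.
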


Partition the box $[0,1]^2$ into $t \times t$ boxes of sizes $1/t \times 1/t$, for $t$ to be set later.
Let $K$ %
be the number of boxes that are \emph{touched}, i.e., that contain some input point $x_i$.
We first need to lower bound the optimum.

\begin{lemma}\label{lemopt}
$\OPT(X) \geq K/4t$. %
\end{lemma}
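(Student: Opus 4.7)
The plan is to apply a standard tubular-neighborhood area argument. Let \(\gamma\) denote the polygonal curve tracing an optimal (offline) tour of \(X\); its length equals \(\OPT(X)\) and every input point lies on it. For each touched box \(B_i\), pick any input point in \(B_i\): that point lies on \(\gamma\), and since \(B_i\) has diameter \(\sqrt{2}/t\), every point of \(B_i\) is within distance \(\sqrt{2}/t\) of \(\gamma\). Hence \(\bigcup_i B_i \subseteq N_r(\gamma)\), the \(r\)-tubular neighborhood of \(\gamma\), with \(r = \sqrt{2}/t\).

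Next I invoke the classical planar estimate that a rectifiable curve of length \(L\) has an \(r\)-tubular neighborhood of area at most \(2Lr + \pi r^{2}\) (tight for a straight segment, whose \(r\)-neighborhood is a stadium of exactly this area). Since the \(K\) touched boxes are interior-disjoint with total area \(K/t^{2}\), combining gives
\[
\frac{K}{t^{2}} \;\le\; 2\,\OPT(X)\cdot\frac{\sqrt{2}}{t} \;+\; \frac{2\pi}{t^{2}},
\]
which rearranges to \(\OPT(X) \ge (K - 2\pi)/(2\sqrt{2}\,t)\). Because \(1/(2\sqrt{2}) > 1/4\), this already yields \(\OPT(X) \ge K/(4t)\) once \(K\) exceeds an absolute constant (a short calculation shows \(K \ge 2\pi(2+\sqrt{2}) \approx 22\) suffices), which is the regime in which the lemma is applied in the proof of \cref{thm2}.

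The main obstacle I anticipate is purely cosmetic: pinning down the stated constant \(1/4\) uniformly in \(K\), rather than the slightly better \(1/(2\sqrt{2})\) that the area argument naturally delivers up to an additive \(O(1/t^{2})\) end-cap loss. The few small-\(K\) cases can be absorbed by a direct counting argument (for instance, a tour visiting points in \(K \ge 2\) distinct axis-aligned cells of side \(1/t\) has length at least the Euclidean distance between any two such cells' representatives in non-adjacent cells, which is already \(\Omega(1/t)\)); none of this bookkeeping affects the subsequent competitive analysis, which only relies on the essential content \(\OPT(X) = \Omega(K/t)\) that the area bound provides.
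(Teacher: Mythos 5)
Your tubular-neighborhood argument is a genuinely different route from the paper's and is sound in the regime that matters. The paper argues combinatorially: pick one representative input point per touched box, note that \(\OPT(X)\) is at least the optimal tour of the representatives (triangle inequality), and observe that among any five consecutive representatives two lie in non-adjacent boxes (at most four grid boxes are pairwise adjacent), so each block of four consecutive edges has total length at least \(1/t\), giving \(\OPT(X)\ge K/(4t)\) once \(K\ge 4t\). Your stadium bound \(\mathrm{area}(N_r(\gamma))\le 2\,\OPT(X)\,r+\pi r^2\) with \(r=\sqrt2/t\) buys a slightly better leading constant \(1/(2\sqrt2)\) at the price of the additive end-cap term \(2\pi/t^2\), and it adapts to \(\mathbb{R}^d\) via the volume of a tube (with a different \(d\)-dependence than the paper's \(2^d\) pigeonhole in the proof of \cref{thm3}).

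The one genuine issue is your treatment of small \(K\), which is not merely cosmetic as written. If all touched boxes are pairwise adjacent (e.g.\ up to four boxes meeting at a grid corner), there is no pair of non-adjacent touched cells, the tour can be arbitrarily short, and yet \(K/(4t)\) can be as large as \(1/t\); and even when a non-adjacent pair exists, a single \(1/t\) lower bound does not reach \(K/(4t)\) for \(K\) up to your threshold \(\approx 22\). So your proposed counting patch cannot, by itself, establish the literal inequality for small \(K\). The correct fix is the one the paper uses as its first line: the standing assumption of \S\,\ref{sec32} that \(\OPT(X)\ge 1\), which disposes of every case with \(K/(4t)<1\) — in particular all of your constantly many small-\(K\) cases once \(t\ge 6\). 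With that substitution (and since the application in \cref{thm2} only invokes the lemma when \(K>t\), where your area bound applies directly), your proof is complete and the downstream analysis is unaffected.
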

\begin{proof}
If $K/4t<1$, we are done, as $\OPT(X) \geq 1$ by assumption.
Assume therefore $K \geq 4t$. Choose an arbitrary representative point of $X$ from each touched box. Observe that the optimum cannot be shorter than the optimal tour of the representatives (by triangle inequality).
Among any five consecutive vertices of the representatives-tour, two must be from non-neighboring boxes, thus the total length of the four edges connecting these points is at least $1/t$. It follows that at least $K/4$ edges have this length, yielding the claim. 
\end{proof}

Now we need an algorithm that does well in terms of $K$. For any two neighboring entries in the array we consider only whether they come from the same box. This allows us to reduce our problem to the uniform metric case (Claim~\ref{lemgame}).

\begin{claim}\label{claimdist}
There is an online algorithm $\AA$ such that $\AA(X) \in O( K\log{n} + n/t)$.
\end{claim}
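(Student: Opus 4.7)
The plan is to reduce to the uniform-metric algorithm of Claim~\ref{lemgame} by treating each of the $t\times t$ boxes as a distinct ``type.'' When point $x_i$ arrives, identify the box $b$ containing it; then use the placement rule of the algorithm from the proof of Claim~\ref{lemgame} with the label $b$ to decide which unused cell $A[j]$ to write $x_i$ into. This is a valid online algorithm (it does not need to know $K$ in advance, since neither does the algorithm of Claim~\ref{lemgame}).

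Next, I would bound the resulting cost by splitting the sum $\sum_{i=1}^{n-1} d(A[i], A[i{+}1])$ according to whether the two endpoints lie in the same box or in different boxes. Two consecutive cells holding points from different boxes contribute at most $\sqrt{2}$ each (the diameter of $[0,1]^2$), while two consecutive cells holding points from the same box contribute at most $\sqrt{2}/t$ each (the diameter of a single box). If $S$ denotes the number of indices $i$ with $A[i]$ and $A[i{+}1]$ in different boxes, then
\begin{align*}
\AA(X) \;\le\; \sqrt{2}\,S + \frac{\sqrt{2}}{t}\,(n-1-S) \;\le\; \sqrt{2}\,S + \frac{\sqrt{2}\,n}{t}.
\end{align*}

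The final step is to observe that $S$ is exactly the cost of the induced uniform-metric instance on the sequence of box labels: consecutive cells are in different boxes precisely when the labels differ. Since this sequence has at most $K$ distinct labels, Claim~\ref{lemgame} gives $S \le K\log_2 n$. Substituting yields $\AA(X) \in O(K\log n + n/t)$, as claimed.

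The only mildly subtle point is making sure the algorithm of Claim~\ref{lemgame}, which is stated for inputs from a discrete label set, applies cleanly here even though our actual values are Euclidean points: but since that algorithm only uses the label of each item to choose a cell, we can simply feed it the box labels and place the actual Euclidean point into the cell it selects. No further calculation is needed.
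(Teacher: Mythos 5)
Your proposal is correct and follows essentially the same route as the paper: treat each box as a type, run the uniform-metric algorithm of Claim~\ref{lemgame} on the box labels, charge at most $\sqrt{2}$ to each of the $O(K\log n)$ differing-label adjacencies and at most the box diameter $O(1/t)$ to each same-label adjacency. No gaps.
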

\begin{proof}
We treat points from the same box as having the \emph{same value}. We run the algorithm for the uniform metric given in Claim~\ref{lemgame}, incurring a total number of $O(K \log{n})$ differing pairs of neighbors. For these pairs we account for a maximum possible cost of $\sqrt{2}$. All other neighboring pairs have the same value (= come from the same box), incurring a cost of at most $O(1/t)$ each, for a total of $O(n/t)$. This completes the proof. 
\end{proof}

Together with Lemma~\ref{lemopt}, this yields Claim~\ref{thm2}. Indeed, set $t = \sqrt{\frac{n}{\log{n}}}$. 
If $K \leq t$, then using $\OPT \geq 1$, we have $\AA/\OPT \leq \frac{K\log{n}+n/t}{1} \in O(\sqrt{n\log{n}})$.
If $K>t$, then using Lemma~\ref{lemopt}, $\AA/\OPT \leq \frac{K\log{n}+n/t}{K/4t} = 4t\log{n} + 4n/K \in O(\sqrt{n\log{n}})$.

\mypara{Online TSP in $\mathbb{R}^d$.} 
We now extend the bound on the competitive ratio (Claim~\ref{thm2}) to the higher dimensional case, also considering the dependence on $d$, leading to the claimed result. %

\thmHighDUB*

Analogously to Claim~\ref{geombound}, we first state an absolute bound on the optimal TSP cost in $d$ dimensions, treating $d$ as a constant~\cite{Beardwood_Halton_Hammersley_1959, DBLP:journals/siamdm/Karloff89,Few_1955}.

\begin{claim}\label{geombound2}
For all sets $X$ of $n$ points in $[0,1]^d$, we have $\OPT(X) \in O(n^{1-1/d})$. Moreover, there exists a set $X$ of $n$ poins in $[0,1]^d$ such that $\OPT(X) \in \Omega(n^{1-1/d})$.
\end{claim}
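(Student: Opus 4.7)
The statement has two parts: an absolute upper bound on the optimal TSP tour for any set of $n$ points in the unit $d$-cube, and a matching lower bound exhibited by some set. Both are classical estimates and the plan is to give short, self-contained grid-based arguments that mirror the $d=2$ case (Claim~\ref{geombound}).

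For the upper bound I would partition $[0,1]^d$ into a regular axis-aligned grid of $n$ sub-cubes of side length $n^{-1/d}$, each of Euclidean diameter $\sqrt{d}\cdot n^{-1/d}$. Traversing the sub-cubes in a boustrophedon (snake) order ensures that any two consecutive sub-cubes share a $(d-1)$-face, so the hop from one to the next has length at most $\sqrt{d}\cdot n^{-1/d}$. Inside each sub-cube one visits all input points it contains in arbitrary order; every such intra-cube transition also costs at most $\sqrt{d}\cdot n^{-1/d}$. Summing the at most $n-1$ point-to-point transitions and the at most $n$ inter-cube transitions yields a tour of total length $O(\sqrt{d}\cdot n^{1-1/d})$, which gives the claimed upper bound (with $d$ treated as a constant).

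For the lower bound I would exhibit an explicit hard instance: place the $n$ points on a regular $n^{1/d}\times\cdots\times n^{1/d}$ grid in $[0,1]^d$ with spacing $n^{-1/d}$. Then any two distinct grid points are at Euclidean distance at least $n^{-1/d}$, so every edge of any Hamiltonian path through these $n$ points has length at least $n^{-1/d}$. Summing over the $n-1$ edges of a tour gives $\OPT(X)\geq (n-1)\cdot n^{-1/d} = \Omega(n^{1-1/d})$.

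No single step is genuinely hard here, as both bounds are classical. The only points requiring mild care are (i) checking that the snake path really can jump between adjacent rows (and, iterating, between adjacent slabs in higher dimensions) via a single edge of length $\sqrt{d}\cdot n^{-1/d}$, which follows by construction since adjacent rows/slabs share a face, and (ii) handling the divisibility issue when $n^{1/d}$ is not an integer, which is resolved by rounding the per-axis count up to $\lceil n^{1/d}\rceil$ and absorbing the resulting constant into the big-$O$, exactly as in the $d=2$ case.
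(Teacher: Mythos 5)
Your proposal is correct, but note that the paper does not actually prove Claim~\ref{geombound2} (nor its planar analogue, Claim~\ref{geombound}): both are stated as classical facts with citations to Beardwood--Halton--Hammersley, Karloff, and Few, the only hint being the parenthetical ``consider, e.g., a $\sqrt{n}\times\sqrt{n}$ uniform grid.'' What you supply is the standard folklore argument behind those citations, and it matches the paper's hint exactly on the lower-bound side: the uniform grid with spacing $\Theta(n^{-1/d})$ forces every one of the $n-1$ consecutive pairs to cost at least the spacing, giving $\OPT(X)\in\Omega(n^{1-1/d})$. Your upper bound via a boustrophedon traversal of $n$ sub-cubes of side $n^{-1/d}$ is likewise the classical space-filling-curve-style proof. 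Two small points of care, neither of which affects the asymptotics since $d$ is treated as a constant: a hop between two face-adjacent sub-cubes of side $s$ connects arbitrary points of the two cubes, so its length is bounded by the diameter of their union, $s\sqrt{d+3}$, rather than $s\sqrt{d}$; and when some sub-cubes contain no input points you should either route through them anyway (charging the hop to the skipped cells) or bound the total inter-cube travel by the length of the full snake path through all $n$ cube centers, which is $O(\sqrt{d}\,n^{1-1/d})$ either way. With those cosmetic fixes your argument is a complete, self-contained proof of a statement the paper leaves to the literature, which is arguably a useful addition.
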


A straightforward generalization of Claims~\ref{claim_online_UB} and \ref{claimLB} yields (for all constant $d$):

\begin{claim}
There is an online algorithm $\AA$ for online TSP in $\mathbb{R}^d$ such that $\AA(X) \in O(n^{1-\frac{1}{d+1}})$ for all $X$. For every deterministic algorithm $\AA$ there is an input $X$ such that $\AA(X) \in \Omega(n^{1 - \frac{1}{d+1}})$.
\end{claim}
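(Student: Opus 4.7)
The plan is to extend the two-dimensional proofs of Claims~\ref{claim_online_UB} and~\ref{claimLB} to general $d$ via natural geometric generalizations, with the lower bound being the more delicate of the two steps. For the upper bound, I would partition the cube $[0,1]^d$ into $t^d$ sub-boxes of side length $1/t$, and split the array $A[1..n]$ into $t^d$ contiguous chunks of $n/t^d$ cells each, ordered so that spatially adjacent sub-boxes are assigned to array-adjacent chunks (e.g., via a row-major enumeration of the grid; the precise space-filling order only affects constant factors). When a point arrives, identify its sub-box and place it in the next empty cell of the corresponding chunk, absorbing overflows with a doubling-type fallback analogous to the one-dimensional algorithm of Aamand et al.~\cite{online_sort}. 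Each intra-chunk consecutive pair lies within a single sub-box and so contributes at most $\sqrt{d}/t$, for a total of $O(n\sqrt{d}/t)$; the at most $t^d$ inter-chunk transitions contribute at most $\sqrt{d}$ each, for a total of $O(\sqrt{d}\,t^d)$. Balancing by setting $t = n^{1/(d+1)}$ yields total cost $O(\sqrt{d}\, n^{1-1/(d+1)})$.

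For the lower bound, the plan is to stack the one-dimensional adversary across a grid of $t^{d-1}$ thin ``fibers'' partitioning $[0,1]^d$, each of cross-section $(1/t)^{d-1}$ and extending along the last coordinate. The adversary proceeds in rounds, visiting the fibers in round-robin order and, inside each fiber, presenting a step of the one-dimensional adversary of Claim~\ref{randomizedLowerBound} along the distinguished coordinate. Each fiber then carries an independent one-dimensional subinstance of size $n/t^{d-1}$ and contributes cost $\Omega(\sqrt{n/t^{d-1}})$ by the one-dimensional lower bound. Summing over the $t^{d-1}$ fibers and choosing $t = n^{1/(d+1)}$ yields
\[
\Omega\!\left(t^{d-1}\sqrt{n/t^{d-1}}\right) \;=\; \Omega\!\left(\sqrt{n\cdot t^{d-1}}\right) \;=\; \Omega\!\left(n^{1-1/(d+1)}\right).
\]

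The main obstacle is to argue formally that the algorithm cannot exploit the extra $d-1$ coordinates of freedom to defeat the per-fiber lower bounds by interleaving items from different fibers at negligible extra cost. The key observation is that whenever two array-adjacent cells hold items from different fibers, their transverse coordinates alone contribute at least $1/t$ to the Euclidean distance, so such mixing can only add to the cost; combined with an epoch-style potential argument analogous to the recursion tracking $\LL(i)$ and $\HH(i)$ in the proof of Claim~\ref{randomizedLowerBound}, this should aggregate the per-fiber bounds into the desired global lower bound of $\Omega(n^{1-1/(d+1)})$.
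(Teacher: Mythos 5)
Your upper bound is essentially the paper's own argument: the claim is obtained there as a ``straightforward generalization'' of \cref{claim_online_UB}, using the same side-$1/t$ grid with $t=n^{1/(d+1)}$, intra-block cost $O(\sqrt d\, n/t)$ and $O(\sqrt d)$ per block boundary. The only deviation is that the paper allocates blocks to sub-boxes on demand with a factor-two surplus of blocks and recurses on the leftover cells once blocks run out; that is exactly what your vaguely described ``doubling-type fallback'' has to be, since a fixed one-chunk-per-sub-box assignment overflows as soon as one sub-box receives more than $n/t^d$ points. The row-major ordering of chunks buys nothing, as you charge inter-chunk transitions $\sqrt d$ anyway.

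The genuine gap is in the lower bound. The step ``each fiber contributes $\Omega(\sqrt{n/t^{d-1}})$ by \cref{randomizedLowerBound}'' does not follow from that claim as a black box: the one-dimensional lower bounds (both the adaptive deterministic one of Aamand et al.~\cite{online_sort} and the distributional one in \S\,\ref{sec2}) are proved for an algorithm that must place the $m$ items of the instance into a dedicated array of exactly $m$ cells, and their key quantities --- gaps, attachments, and the cost of flooding the remaining cells with $0$s or $1$s --- are defined relative to that dedicated array. In your construction the $t^{d-1}$ fibers share one array of $n$ cells, the algorithm allocates space across fibers adaptively and may interleave them, so a fiber's items need not border that fiber's own gaps, the ``fill the rest of the sequence'' step of the 1D distribution has no per-fiber meaning, and the recursion on $\LL(i),\HH(i)$ does not transfer. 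Your $1/t$-per-mixing-boundary observation (which, incidentally, requires the adversary to pin each fiber's points to the center of its cross-section, since points in adjacent fibers can otherwise be arbitrarily close) does not by itself rule out that mild interleaving defeats the per-fiber bounds, and with $t=n^{1/(d+1)}$ each fiber must deliver its full 1D bound with no asymptotic slack; making this aggregation rigorous is the entire difficulty, and you only assert that it ``should'' work. Note also that the statement only asks for a lower bound against deterministic algorithms, so the randomized machinery is unnecessary: the paper instead generalizes \cref{claimLB} directly --- take the $n^{d/(d+1)}$ points of the uniform grid of spacing $n^{-1/(d+1)}$ in $[0,1]^d$; while some grid point has no occurrence adjacent to an empty cell, present that point, so each insertion costs $\Omega(n^{-1/(d+1)})$; once every grid point occurs next to an empty cell, flood with a fixed corner point, forcing total cost proportional to the number of grid points. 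Either branch yields $\Omega(n^{1-1/(d+1)})$ with a short, self-contained argument.
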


\begin{proof}[Proof of Theorem~\ref{thm3}.]
We follow the proof of Claim~\ref{thm2}, with  minor changes. We assume the input to come from the $d$-dimensional unit box $[0,1]^d$. We partition this box into $t^d$ boxes of sizes $(1/t)^d$. When adapting Lemma~\ref{lemopt}, we have to consider $2^d$ (instead of $4$) consecutive tour edges, thus obtaining $\OPT(X) \geq \frac{K}{2^d t}$. When adapting Claim~\ref{claimdist}, our upper bound increases by a factor of $\sqrt{d}$, the distance between antipodal vertices of a unit $d$-cube, replacing the implicit $\sqrt{2}$ in the earlier bound. Thus, when bounding the competitive ratio, we incur an overall factor of $\sqrt{d} \cdot 2^d$, yielding the result. 
\end{proof}

\subsection{Uniform metric with a larger array}\label{sec34}

We study the problem of placing \(n\) elements \(\{1, \dots, K\}\) into an array of size \(\ceil{\gamma n}\) for a fixed \(\gamma > 1\).
The algorithm and distribution from \cref{lemgame,lemlb} can be reused in this setting, stopping either process after the insertion of the first \(n\) elements.

\begin{restatable}{claim}{uniformLargeLB}
For every algorithm $\AA$ and any number $K \geq 3$ of input values, there is an input distribution \(X\) such that $\Ep{\AA(X)} \in \Omega(K \cdot (1+\log(\gamma/(\gamma-1))))$.
\end{restatable}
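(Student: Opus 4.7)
The plan is to reuse almost verbatim the distribution and the ``friendly-cell'' argument from the proof of Claim~\ref{lemlb}, but to halt the process as soon as the first $n$ items have been presented, and then to re-evaluate how many epochs fit into this budget. By Yao's minmax principle it suffices, as in Claim~\ref{lemlb}, to exhibit a single input distribution on which every deterministic $\AA$ incurs expected cost $\Omega(K(1+\log(\gamma/(\gamma-1))))$.

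First I would dispose of the ``$+1$'' term by the trivial observation that in any placement of $n$ items containing $K$ distinct values, consecutive items of different types contribute at least $K-1$ switches in total, regardless of whether the array has extra empty cells in between. Hence $\AA(X) \geq K-1 \in \Omega(K)$ deterministically for any input $X$ with $K(X)=K$ distinct types, so the $\Omega(K)$ part requires no probabilistic argument.

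For the logarithmic part, I would run the epoch-based distribution of Claim~\ref{lemlb} unchanged: at the start of each epoch, pick a value $y\in\{1,\dots,K\}$ uniformly at random and present it $2n'/K$ times if $K\in\{3,4\}$ and $4n'/K$ times if $K>4$, where $n'$ is the current number of free cells. The friendly-cell analysis from Claim~\ref{lemlb} applies verbatim: with constant probability $\AA$ is forced to add at least one new switch during the epoch, contributing $\Omega(1)$ to the expected cost. After one epoch, the number of free cells drops from $n'$ to $n'\cdot (K-c)/K$ with $c\in\{2,4\}$. Initially $n'=\lceil\gamma n\rceil$; I would terminate the process as soon as $n$ items have been placed, i.e.\ as soon as $n' \leq (\gamma-1)n$. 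The number of epochs that fit inside this budget is therefore at least
\[
\log_{K/(K-c)}\!\left(\frac{\gamma n}{(\gamma-1)n}\right)
\;=\;\frac{\log_2\!\bigl(\gamma/(\gamma-1)\bigr)}{\log_2\!\bigl(1+c/(K-c)\bigr)}
\;\geq\;\frac{K-c}{c}\,\log_2\!\bigl(\gamma/(\gamma-1)\bigr),
\]
which is $\Omega(K\log(\gamma/(\gamma-1)))$ for $K>4$ and $\Omega(\log(\gamma/(\gamma-1)))=\Omega(K\log(\gamma/(\gamma-1)))$ for the constant cases $K\in\{3,4\}$. Summing the $\Omega(1)$ per-epoch contributions gives an expected cost of $\Omega(K\log(\gamma/(\gamma-1)))$.

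The two lower bounds combine: $\max\{\Omega(K),\,\Omega(K\log(\gamma/(\gamma-1)))\} = \Omega(K(1+\log(\gamma/(\gamma-1))))$, which is what we need. The only genuine care-points I expect are (i) making sure the last epoch is not truncated prematurely—one can just stop counting contributions once the budget of $n$ items is exhausted, which only loses a single epoch; and (ii) the regime $\gamma-1 = \Theta(1)$, where $\log(\gamma/(\gamma-1))=O(1)$ and the bound reduces to the trivial $\Omega(K)$ handled separately. No new technical machinery beyond Claim~\ref{lemlb} is needed, and the main obstacle is essentially bookkeeping: verifying that the epoch-count computation works out when the halting condition is ``$n$ items placed'' rather than ``array fully filled''.
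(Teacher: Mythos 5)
Your proposal follows the paper's proof almost exactly: the same epoch distribution from Claim~\ref{lemlb}, the same friendly-cell argument giving an $\Omega(1)$ expected contribution per epoch, the same halting rule (stop once $n$ items are placed, i.e.\ once the free-cell count drops to $(\gamma-1)n$), and the same epoch count $\approx \frac{K-c}{c}\log_2\frac{\gamma}{\gamma-1}$. The one place where you genuinely diverge is the $\Omega(K)$ term, and there your argument has a gap. You invoke the fact that any input containing $K$ distinct values forces cost at least $K-1$, but the distribution you actually construct (pure epochs with a uniformly random type per epoch) does \emph{not} guarantee that all $K$ values appear. This is not a corner case you can wave away: when $\gamma$ is large, a single epoch of length $4m'/K$ with $m'=\lceil\gamma n\rceil$ can already exceed the budget of $n$ items (whenever $K<4\gamma$), so the input may consist of one value repeated $n$ times, on which every algorithm pays $0$. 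In exactly the regime where $\log(\gamma/(\gamma-1))=O(1)$ and the $\Omega(K)$ term is the whole content of the claim, your distribution therefore proves nothing, and the final step ``$\max\{\Omega(K),\Omega(K\log(\gamma/(\gamma-1)))\}$'' silently assumes both bounds hold for the same distribution, which you have not ensured.

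The fix is exactly what the paper does: begin the input by presenting one copy of each of the $K$ values. This deterministically forces cost $K-1$, guarantees $K(X)=K$ (so the claim's premise and, downstream, the comparison with $\OPT=K-1$ are legitimate), and only shifts the starting free-cell count from $\lceil\gamma n\rceil$ to $\gamma n - K$, which changes the epoch count by a negligible amount (the paper's bound $\frac{K-4}{4}\log_2\bigl(\frac{\gamma}{\gamma-1}-\frac{K}{(\gamma-1)n}\bigr)$ absorbs this). With that amendment, and with your (correct) remarks about truncating the last epoch and about the $K\in\{3,4\}$ case, your argument coincides with the paper's proof.
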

\begin{restatable}{claim}{uniformLargeUB}
There is an online algorithm $\AA$ with cost $\AA(X) \in O(K \cdot (1 + \log(\gamma/(\gamma-1)))).$
\end{restatable}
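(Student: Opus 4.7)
The plan is to reuse the cursor-based online algorithm of \cref{lemgame}, which is oblivious to the array size, and re-analyze its execution via the same coin-game model, now starting with a single pile of $\ceil{\gamma n}$ coins and terminating after exactly $n$ type-(1) operations (the $n$ items to be placed). The algorithm's cost again equals the number of type-(2) operations (splits) performed in the game, so the task is to bound the maximum number of splits the adversary can force under these modified parameters.

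The core of the argument is a decomposition of the execution into meta-phases. In phase~$0$, the adversary performs $K-1$ splits without any type-(1) operations, leaving $K$ piles of roughly uniform size $\gamma n/K$. In every subsequent phase $j \geq 1$, the adversary must empty a pile to free a slot before each additional split; by a cycle-by-cycle accounting (``halve each pile'') one can show that phase $j$ performs $K-1$ more splits while emptying $K-1$ piles whose sizes sum to about $\gamma n/2^j$, thus consuming about $\gamma n/2^j$ type-(1) operations and halving the surviving coin count. Summing across $J+1$ phases, the total number of type-(1) operations is $\gamma n(1-2^{-J})$; requiring this to be at most $n$ yields $J \leq \log_2(\gamma/(\gamma-1))$, and hence at most $(K-1)(J+1) \in O(K(1+\log(\gamma/(\gamma-1))))$ splits, as claimed.

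The main obstacle is verifying that the phase-based play is worst-case optimal, i.e.\ that no other input sequence can force more splits from the cursor algorithm. I expect this to follow from an exchange argument on the coin game: any adversary that foregoes a possible split, or empties a pile larger than the smallest available, can be locally modified without decreasing the total split count, reducing the analysis to the phase strategy above. Alternatively, one can derive a uniform bound directly on the induced binary split tree via a Kraft-style inequality, using that at most $K$ of the $s+1$ leaves may survive non-empty and that the emptied leaves carry total Kraft weight $\sum 2^{-d(\pi)}$ at most $1/\gamma$ (since their total size is at most the $n$ performed removals, and each leaf of depth $d$ has initial size $\gamma n/2^d$).
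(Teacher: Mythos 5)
Your setup matches the paper's starting point: reuse the cursor algorithm of \cref{lemgame}, model its execution by the coin game started with $\ceil{\gamma n}$ coins and stopped after $n$ removals, and bound the cost by the number of splits. The gap is in the second half. What you actually compute is the number of splits forced by one particular phase-based adversary, whereas the claim is an upper bound over \emph{all} plays of the game; you yourself flag the worst-case optimality of the phase strategy as ``the main obstacle'' and do not prove it, so the proposal does not establish the statement. Neither sketched repair closes this. The Kraft-style argument is flawed as stated: once removals interleave with splits, a pile at depth $d$ of the split tree has size \emph{at most} $\gamma n/2^{d}$ at its creation (not exactly that), so ``total size of emptied leaves $\leq n$'' does not upper-bound their Kraft weight $\sum 2^{-d}$ by $1/\gamma$; and even granting such a bound, it controls neither the number of emptied leaves (they may be numerous but deep) nor, therefore, the number of internal nodes, i.e.\ splits. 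The exchange argument is also delicate, because the adversary does not choose which pile is split (the algorithm always halves the largest), so deferring or foregoing a split changes all future pile sizes non-locally; this would need a genuine proof, not a one-line appeal.

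The paper sidesteps the need to identify an optimal adversary. Since every split is applied to the largest of at most $K$ piles, and the total number of coins never drops below $\ceil{\gamma n}-n \geq (\gamma-1)n$, every split pile has size $n_i \geq (\gamma-1)n/K$. Combining this with the facts already established in \cref{lemgame}, namely $n_{i+K} \leq n_i/2$ and (by the same halving observation applied at the start) $n_{2K} \leq \gamma n/K$, gives at most $K\left(2+\log_2\left(\frac{\gamma}{\gamma-1}\right)\right)$ splits for \emph{every} execution of the game, hence the claimed $O(K(1+\log(\gamma/(\gamma-1))))$ cost. If you want to keep your phase decomposition, you need a uniform argument of this type (a monotone quantity valid for all plays), not an accounting of a single adversary strategy.
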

The two claims together imply Theorem~\ref{thm:comp_uniform_large}. We defer their proofs to \cref{proofs:uniformLarge}. They rely on similar arguments as the proofs of the claims mentioned above, with a refined argument for counting the number of epochs/moves performed.

\section{Competitiveness for stochastic online sorting}
\label{sec4}

In this section, we prove~\cref{thm:stochastic1} and~\cref{stochasticLarge} (\S\,\ref{sec41}).
\stochasticone*

We start by describing the general design of the algorithm from~\cref{thm:stochastic1} and showing some fundamental properties. We then describe a recursive algorithm and give the parameterizations that achieve the desired competitive cost.

\mypara{The general design.} We let $\alpha$ and $\beta$ denote two parameters in $(0,1)$ which we will set later.
We decompose the array $A$ of $n$ cells as such: the first $N = n-n^{\beta}$ cells are divided into $M = n^{\alpha}$ consecutive sub-arrays  $A_1,\ldots, A_{n^{\alpha}}$, and the remaining $n^{\beta}$ cells (at the end of the array) form one single subarray denoted by $B$. We refer to each of the sub-arrays $A_i$  as a \emph{bucket} and to $B$ as the \emph{backyard}. We note that each bucket $A_i$ can hold $C= N/M$ elements, which we refer to as the \emph{capacity} of the bucket.

We use the values of the elements to hash them into the array. Namely, for an element $x \in (0,1)$, we define $h \colon (0,1] \rightarrow \{1,\ldots, M\}$  by setting $h(x) = \ceil{x \cdot M}$. In other words, the elements in the interval $(0,1/M]$ will all hash to bucket $1$, elements in the interval $(1/M, 2/M]$ will hash to bucket $2$, etc. Since the elements are chosen independently and uniformly at random from $(0,1]$, we get that $h$ assigns the elements independently and uniformly at random into the $M$ buckets.\footnote{We assume that there is no element of value $0$. The probability of this happening is $0$.}

\newcommand{\Sdet}{\textsf{SortDet}}
\newcommand{\Sunif}{\textsf{SortUnif}}

\mypara{$\Sunif_1(A,n)$: the first algorithm.} Let $\Sdet(A,n)$ denote the deterministic algorithm from~\cite{online_sort}.
We now define the algorithm $\Sunif_1(A,n)$ as such: upon receiving $x$, it checks if the bucket $A_{h(x)}$ has any empty cells. If so, it forwards $x$ to $\Sdet(A_{h(x)},C)$. Otherwise, it forwards $x$ to $\Sdet(B, n^{\beta})$ (and we say that the corresponding bucket is \emph{full}). 

We first prove that $\Sunif_1(A,n)$ successfully places all items with high probability. Note that this is not always guaranteed: it could happen that 
both the bucket and the backyard are full (before we have managed to place all the $n$ elements). If this happens, then among all $n$ elements, strictly less than $C$ hash into some bucket. We call this event a \emph{failure} and show the following by employing a Chernoff bound (\cref{appd}):

\begin{restatable}{claim}{chernofffailure}\label{claim:ballsintobins} Given any $c>0$, if $\beta \geq \frac{1}{2} \cdot \parentheses{1+ \alpha + \frac{\ln\ln n + \ln(2(c+1))}{\ln n} }$, then $\Sunif_1(A,n)$ fails with probability at most $1/n^c$. 
\end{restatable}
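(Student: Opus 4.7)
The plan is to reduce the failure event to a simple statement about the maximum underload of a bin and then apply a standard Chernoff lower-tail bound together with a union bound over buckets.

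\textbf{Step 1 (Failure characterization).} The first step is to formalize the observation already noted in the paper: $\Sunif_1$ fails if and only if at least one bucket $A_j$ receives fewer than $C$ elements under $h$. To see this, let $B_j$ denote the total number of the $n$ input elements that hash to bucket $j$. If every $B_j \ge C$, then each bucket fills to capacity $C$ (using $MC = N = n - n^\beta$ of the elements), and the remaining $n^\beta$ elements overflow to the backyard, exactly filling it, so no element is rejected. Conversely, if some $B_j < C$, then $\sum_j \min(B_j, C) < N$, so more than $n^\beta$ elements must overflow, exceeding the backyard's capacity and causing a failure at some point during the online process.

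\textbf{Step 2 (Chernoff on a single bucket).} Since elements are drawn independently and uniformly from $(0,1]$ and $h(x) = \lceil xM\rceil$, each $B_j$ is $\mathrm{Binomial}(n, 1/M)$ with mean $\mu = n/M = n^{1-\alpha}$. Note that $C = \mu - n^{\beta-\alpha}$, so writing $\delta = n^{\beta-1}$ we have $C = (1-\delta)\mu$, with $\delta \in (0,1)$ since $\beta < 1$. The standard Chernoff lower-tail bound then gives
\[
\Pr[B_j < C] \;\le\; \exp\!\left(-\tfrac{\delta^2 \mu}{2}\right) \;=\; \exp\!\left(-\tfrac{1}{2}\, n^{2\beta - 1 - \alpha}\right).
\]

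\textbf{Step 3 (Union bound and parameter verification).} Taking a union bound over the $M = n^\alpha$ buckets, Step 1 gives
\[
\Pr[\text{failure}] \;\le\; M \cdot \exp\!\left(-\tfrac{1}{2}\, n^{2\beta-1-\alpha}\right) \;=\; n^\alpha \exp\!\left(-\tfrac{1}{2}\, n^{2\beta-1-\alpha}\right).
\]
Finally, the hypothesis on $\beta$ rearranges to $n^{2\beta - 1 - \alpha} \ge 2(c+1)\ln n$, so the exponential is at most $n^{-(c+1)}$, yielding $\Pr[\text{failure}] \le n^{\alpha - c - 1} \le n^{-c}$ as $\alpha \le 1$.

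\textbf{Main obstacle.} The only real insight is Step 1, which converts the messy online overflow process into a clean statement about bin loads; once that is in place, the rest is a routine Chernoff calculation. A minor subtlety to handle carefully is that the $B_j$ are dependent (their sum is fixed to $n$), but since we only use the marginal distribution of each $B_j$ together with a union bound, this dependence does not obstruct the argument.
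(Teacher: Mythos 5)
Your proof is correct and takes essentially the same route as the paper: a multiplicative Chernoff lower-tail bound on the (binomial) load of each bucket with $\mu = n^{1-\alpha}$ and $\delta = n^{\beta-1}$, followed by a union bound over the $n^{\alpha}$ buckets and the observation that the hypothesis on $\beta$ makes the per-bucket probability at most $n^{-(c+1)}$. The only difference is presentational: your Step 1 (failure occurs exactly when some bucket receives fewer than $C$ elements) is stated in the paper's main text when the failure event is defined rather than inside the proof, and your remark that the dependence among bucket loads is harmless is a nice, if routine, clarification.
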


\noindent We now bound the cost of $\Sunif_1(A,n)$:

\begin{claim}\label{claim:costuniformone} If $\Sunif(A,n)$ does not fail, then its cost is at most $O(\sqrt{C} + n^{\beta/2})$.
\end{claim}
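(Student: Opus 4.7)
The plan is to decompose the total cost of $\Sunif_1$ into three contributions: (i) the internal sorting cost within each bucket $A_i$, (ii) the internal sorting cost of the backyard $B$, and (iii) the boundary costs incurred between consecutive buckets and between the last bucket and the backyard. I then bound each piece separately.

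For (i), any item $x$ hashed to bucket $A_i$ satisfies $h(x) = i$, so $x \in ((i-1)/M, i/M]$, and hence the offline optimum of sorting the contents of $A_i$ is at most $1/M$. Invoking \cref{thm:allreals}, which provides an $O(\sqrt{C})$-competitive algorithm on arbitrary reals, the cost incurred inside $A_i$ by $\Sdet(A_i, C)$ is $O(\sqrt{C}/M)$. Summing over the $M$ buckets yields $M \cdot O(\sqrt{C}/M) = O(\sqrt{C})$. For (ii), items placed in $B$ lie in $(0,1]$, so their offline optimum is at most $1$, and applying the same competitive bound to $\Sdet(B, n^\beta)$ gives cost at most $O(n^{\beta/2})$.

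For (iii), I first observe that whenever $\Sunif_1$ does not fail, every cell of $A$ is occupied. Indeed, since the backyard has capacity $n^\beta$, at most $n^\beta$ items are placed there, so at least $n - n^\beta = N = MC$ items reside in the buckets; since no bucket is ever overfilled, each bucket must be exactly full. Consequently, for each $i \in \{1, \ldots, M-1\}$ the last cell of $A_i$ and the first cell of $A_{i+1}$ are both filled with values in adjacent intervals of length $1/M$, contributing at most $2/M$ to the cost. The $M-1$ such bucket boundaries then sum to less than $2$, and the lone boundary between $A_M$ and $B$ contributes at most $1$ since all values lie in $(0,1]$.

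Combining the three contributions gives total cost $O(\sqrt{C}) + O(n^{\beta/2}) + O(1) = O(\sqrt{C} + n^{\beta/2})$, as desired. The main step is the pigeonhole observation that all cells are occupied upon success, since otherwise one would need to bookkeep pairs of consecutive non-empty cells potentially spanning entirely empty buckets; with that in place, the remainder reduces mechanically to applying the one-dimensional $O(\sqrt{n})$-competitive subroutine separately on each region and exploiting the $1/M$ range of values per bucket.
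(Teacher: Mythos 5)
Your proof is correct and follows essentially the same route as the paper's: per-bucket cost $O(\sqrt{C}/M)$ from the $1/M$ value range, $O(n^{\beta/2})$ for the backyard, and $O(1)$ total for bucket-to-bucket and bucket-to-backyard boundaries. Your explicit pigeonhole observation that non-failure forces every bucket (and hence every cell) to be full is a nice clarification of a step the paper leaves implicit, but it does not change the argument.
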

\begin{proof}
	Since each bucket receives at least $C$ elements, the cost of placing elements inside bucket $A_1$ is given by the cost of $\Sdet(A_{1}, C)$ on elements from $(0,1/M)$. We bound this by $O(\sqrt{C} \cdot 1/M]$. For the remaining buckets, the elements are drawn from $(i/M, (i+1)/M)]$. This is equivalent to sorting elements from $(0,1/M]$, and so their cost is the same as that of $A_1$. Therefore, in total, the cost from each individual bucket is $O(\sqrt{C})$. In addition, we also have the cost from crossing from one bucket to the next. This is at most $2/M$, since the maximum difference between elements from consecutive buckets is at most $2/M$. Since there are $M$ buckets, this amounts to a cost of at most $2$. The cost of crossing from bucket $A_{M}$ to the backyard is $1$. Finally, the cost in the backyard is $O(n^{\beta/2})$, since it employs $\Sdet(B,n^{\beta})$ on elements from $(0,1)$.
\end{proof}

\noindent
We instantiate $\alpha$ and $\beta$ such that we get the following (proof in \cref{appd}):
\begin{restatable}{lemma}{optcostuniform}\label{lem:uniformone} Given any $c>0$, $\Sunif_1(A,n)$ has cost at most $O(n^{1/3} \cdot \ln^{1/6} n \cdot (2(c+1))^{1/6}  )$ with probability at least $1-1/n^c$.
\end{restatable}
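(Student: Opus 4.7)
The strategy is to combine the conditional cost bound of \cref{claim:costuniformone} with the failure-probability constraint of \cref{claim:ballsintobins}, and then choose the two parameters $\alpha$ and $\beta$ so as to balance the two additive terms in the cost. Since $C = N/M = (n - n^\beta)/n^\alpha \leq n^{1-\alpha}$, \cref{claim:costuniformone} yields the unconditional bound $O(n^{(1-\alpha)/2} + n^{\beta/2})$ on the cost of $\Sunif_1(A,n)$ whenever it does not fail. Write $\epsilon = (\ln\ln n + \ln(2(c+1)))/\ln n$, so that \cref{claim:ballsintobins} permits us to take any $\beta \geq \tfrac{1}{2}(1 + \alpha + \epsilon)$ at the price of failure probability at most $1/n^c$.

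The natural choice is to equate the two cost terms, i.e.\ to set $\beta = 1 - \alpha$, while saturating the failure-probability constraint. This yields the system $1 - \alpha = \tfrac{1}{2}(1 + \alpha + \epsilon)$, whose solution is
\[
\alpha \;=\; \frac{1-\epsilon}{3}, \qquad \beta \;=\; \frac{2+\epsilon}{3}.
\]
For $n$ large enough we have $\epsilon \in (0,1)$, so these values lie in $(0,1)$ and the assumptions of \cref{claim:costuniformone} and \cref{claim:ballsintobins} are satisfied.

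Plugging in, the cost (conditional on not failing) is $O(n^{\beta/2}) = O(n^{1/3} \cdot n^{\epsilon/6})$, and
\[
n^{\epsilon/6} \;=\; \exp\!\left(\frac{\ln\ln n + \ln(2(c+1))}{6}\right) \;=\; (\ln n)^{1/6}\,(2(c+1))^{1/6},
\]
giving the desired bound $O(n^{1/3} \cdot \ln^{1/6} n \cdot (2(c+1))^{1/6})$. By \cref{claim:ballsintobins}, the failure probability is at most $1/n^c$, completing the proof.

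The only subtle step is verifying that taking $N = n - n^\beta$ rather than $n$ causes no loss: since $N \leq n$ we have $C \leq n^{1-\alpha}$, so the per-bucket $O(\sqrt{C})$ bound only improves, and the contributions $O(1)$ from bucket crossings and $O(n^{\beta/2})$ from the backyard remain unchanged. In short, the proof is essentially a one-shot optimization — the hard work is already done in the two claims, and the remaining task is the algebra needed to check that the balanced choice of $\alpha, \beta$ produces exactly the stated bound.
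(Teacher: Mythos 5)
Your proof is correct and follows essentially the same route as the paper: both set $\beta = 1-\alpha$ to balance the bucket and backyard costs from \cref{claim:costuniformone}, saturate the failure-probability condition of \cref{claim:ballsintobins} (your system yields exactly the paper's choice $\alpha = \frac{1}{3} - \frac{\ln\ln n + \ln(2(c+1))}{3\ln n}$), and then compute $n^{\beta/2} = n^{1/3}(\ln n)^{1/6}(2(c+1))^{1/6}$. No gaps.
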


\mypara{$\Sunif_k$: recursing on the buckets.} We take advantage of the fact that within each bucket, the elements are chosen uniformly at random. That is, we can apply the same strategy recursively inside the bucket. We get a series of algorithms $\Sunif_k$ for $k\geq 2$. In $\Sunif_k$, we let $\alpha$ and $\beta$ be defined as in $\Sunif_1$. When we see an element $x$ with $h(x)=i$, if its bucket $A_i$ is not full, we place $(xM-i+1)$ using $\Sunif_{k-1}(A_{h(x)}, C)$. Note that we have already conditioned on the fact that $ \ceil{x \cdot M} = i$. In this case,    $(xM-i+1)$ becomes uniformly distributed in $(0,1)$. If $A_i$ is full, we place $x$ in the backyard according to $\Sdet(B,n^{\beta})$. 

Note that there are now several causes for failure: either some bucket $A_i$ is not full, or one of the algorithms inside the bucket fails. We bound the probability of either of these happening as follows, proving the claim by induction over $k$ (\Cref{appd}):

\begin{restatable}{lemma}{thmBBk} \label{lem:ballsintobinsk} Given any $c>0$,  $\Sunif_k(A,n)$ has cost at most $$O\parentheses{ n^{1/f_k} \cdot \ln^{1/4} n\cdot (2(c+1))^{B_k}   } $$ with probability at least $1 - 2/n^{c}$, where $f_k = 4- 1/2^{k-1}$ and $B_k = k/4$.
\end{restatable}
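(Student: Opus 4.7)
The plan is by induction on $k$. The base case $k=1$ follows from \cref{lem:uniformone}: with $f_1 = 3$ and $B_1 = 1/4$, the bound from \cref{lem:uniformone} is $O(n^{1/3} \ln^{1/6} n \cdot (2(c+1))^{1/6})$, and we may weaken $\ln^{1/6} n$ and $(2(c+1))^{1/6}$ to $\ln^{1/4} n$ and $(2(c+1))^{1/4}$ (losing only a constant absorbable into the $O$), and weaken the failure probability $1/n^c$ to $2/n^c$.

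For the inductive step, fix $k \geq 2$ and assume the bound for $k-1$. Consider $\Sunif_k(A, n)$ with parameters $\alpha, \beta$ to be determined. Conditional on no bucket overflowing, the elements inside each bucket $A_i$ are i.i.d.\ uniform on an interval of length $1/M$, which after affine rescaling is an instance on $(0,1]$. Applying the inductive hypothesis (with a parameter $c'$ to be chosen) to $\Sunif_{k-1}(A_i, C)$ with bucket capacity $C = \Theta(n^{1-\alpha})$ bounds the per-bucket cost by $(1/M) \cdot O\bigl(C^{1/f_{k-1}} \ln^{1/4}(C) \cdot (2(c'+1))^{B_{k-1}}\bigr)$ with failure probability at most $2/C^{c'}$. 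Summing over $M = n^\alpha$ buckets, plus the backyard cost $O(n^{\beta/2})$ of $\Sdet(B, n^\beta)$ and $O(1)$ inter-bucket transitions, yields total cost $O\bigl(n^{(1-\alpha)/f_{k-1}} \ln^{1/4}(n) \cdot (2(c'+1))^{B_{k-1}} + n^{\beta/2}\bigr)$.

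The key step is balancing the two dominant terms. I would set $\beta = (1+\alpha)/2 + o(1)$ as required by \cref{claim:ballsintobins} so that no bucket overflows with probability at least $1 - 1/n^c$. Solving $(1-\alpha)/f_{k-1} = (1+\alpha)/4$ yields $\alpha = (4-f_{k-1})/(4+f_{k-1}) = 1/(2^{k+1}-1)$; under this choice both terms equal $\Theta(n^{1/f_k})$, since $f_k = 4 - 1/2^{k-1}$. Then pick $c' = \lceil (c+1)/(1-\alpha) \rceil$, so that $M \cdot 2/C^{c'} \leq 2 n^\alpha / n^{c'(1-\alpha)} \leq 1/n^c$; a union bound with the overflow event gives total failure probability at most $2/n^c$.

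The remaining bookkeeping is to verify $(2(c'+1))^{B_{k-1}} \leq O\bigl((2(c+1))^{B_k}\bigr)$. Since $c'$ is a $k$-dependent constant multiple of $c+1$, $(2(c'+1))^{B_{k-1}}$ equals $(2(c+1))^{B_{k-1}}$ times a constant depending only on $k$; combined with $B_k - B_{k-1} = 1/4$, this constant is absorbed by the extra factor $(2(c+1))^{1/4} \geq 1$ and the $O$, yielding the stated bound. The main obstacle is precisely this bookkeeping: carefully tracking how $c'$, the recursive failure budget, and the accumulated $k$-dependent constants interact, while keeping the $\ln^{1/4} n$ exponent fixed across levels of recursion rather than letting it grow with $k$. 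The latter is what forces us to use the inductive statement with the uniform $\ln^{1/4} n$ factor rather than the slightly sharper base-case bound.
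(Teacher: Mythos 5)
Your overall route is the paper's route: induction on $k$ with \cref{lem:uniformone} as the base case, \cref{claim:ballsintobins} to keep buckets full, rescaling each bucket to a fresh uniform instance, summing $M$ bucket costs plus the $O(n^{\beta/2})$ backyard and $O(1)$ crossings, and balancing $(1-\alpha)/f_{k-1}=(1+\alpha)/4$, which indeed gives $\alpha=(4-f_{k-1})/(4+f_{k-1})$ and reproduces the recursion $f_k=2+f_{k-1}/2$. The exponent arithmetic is correct. The gap is exactly in the step you yourself flag as ``the remaining bookkeeping.'' With your choice $c'=\lceil (c+1)/(1-\alpha)\rceil\ge c+1$, the factor inherited from the inductive hypothesis is $(2(c'+1))^{B_{k-1}}\ge \bigl(\tfrac{c+2}{c+1}\bigr)^{(k-1)/4}\,(2(c+1))^{B_{k-1}}$, i.e.\ it overshoots the target by a factor that is exponential in $k$ (the base $\tfrac{c'+1}{c+1}$ is bounded away from $1$ uniformly in $k$), whereas the per-level slack you invoke, $(2(c+1))^{B_k-B_{k-1}}=(2(c+1))^{1/4}$, is a fixed constant. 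So this surplus is \emph{not} ``a constant depending only on $k$ absorbed by the $O$'' in any useful sense: closing the induction this way forces the implied constant to grow like $e^{\Theta(k^2)}$. That does verify the lemma literally for each fixed $k$, but the explicit $(2(c+1))^{B_k}$ and $\ln^{1/4}n$ factors exist precisely so that the hidden constant is uniform in $k$; the lemma is applied with $k=\Theta(\log\log n)$ in the proof of \cref{thm:stochastic1}, and a $k$-dependent constant of that size would destroy the $O((n\log n)^{1/4})$ bound.

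The paper resolves this differently: it also boosts the inner parameter (to $c+1$), but it pays for the resulting $(2(c+2))^{B_k}$ by inflating $\alpha$ with an explicit $\Theta\!\bigl(\tfrac{B_k\ln(2c+4)}{\ln n}\bigr)$ correction, so that $n^{(1-\alpha)/f_k}$ shrinks by a compensating $e^{\Theta(k)}$ factor, and then checks the inequality $2B_k\bigl(1-\tfrac{4}{4+f_k}\bigr)+\tfrac{1}{4+f_k}\le B_{k+1}$ using $\ln(2c+4)\le 2\ln(2c+2)$ and $B_k=k/4$; this is what keeps the $O$-constant absolute across levels. Your argument can be repaired either by copying that $\alpha$-correction, or by taking $c'$ only as large as the union bound actually requires, namely $c'\approx\tfrac{c+\alpha}{1-\alpha}$ rather than $\lceil(c+1)/(1-\alpha)\rceil$: then the per-level inflation is $(1+O(2^{-k}))^{(k-1)/4}$ and the product over levels converges, so a single absolute constant suffices. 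Two smaller nits: your union bound replaces $C$ by $n^{1-\alpha}$ although $C=n^{1-\alpha}-n^{\beta-\alpha}<n^{1-\alpha}$ (harmless, but it needs a line, e.g.\ $C\ge n^{1-\alpha}/2$ for large $n$), and the backyard term under the minimal admissible $\beta$ is $\Theta\!\bigl(n^{1/f_k}\ln^{1/4}n\,(2(c+1))^{1/4}\bigr)$, which is fine but should be stated since it is where the $\ln^{1/4}n$ in the target actually comes from.
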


Setting $c=1$ and $k=\log_2\parentheses{2/(4+\ln n-\sqrt{\ln^2 n+8\ln n})}$ yields the bounds claimed in \Cref{thm:stochastic1}; the details are given in \Cref{appd}.

\subsection{Stochastic online sorting in larger arrays}\label{sec41}
\stochasticLarge*

\mypara{The algorithm.}
Define \(\alpha = (\gamma-1)/10\) and \(\beta = \gamma - \alpha\) such that array \(A\) has size \((\beta + \alpha) n\).
We will refer to the final \(\alpha n\) cells of \(A\) as the \emph{buffer}.
The first \(\beta n\) cells we consider to correspond to the interval \([0, 1)\).
More specifically, cell \(A[i]\) represents the interval \(\left[\frac{i}{\beta n},\, \frac{i+1}{\beta n}\right)\).

Let \(h \colon [0, 1) \to \set{0, \beta n - 1}\) be the function which maps values \(x \in [0, 1)\) to the index of the corresponding cell: \(h(x) = \floor{\frac{x}{\beta n}}\).
Our algorithm \(\AA\) inserts each value \(x\) into \(A[h(x)]\) if the cell is available, and otherwise into the \emph{first} available cell following \(h(x)\), possibly wrapping around from \(A[\gamma n - 1]\) to \(A[0]\) (although, as we argue in \cref{sec:stochastic-large-appendix}, this is unlikely).
In other words, \(x\) is inserted into the cell \(A[h(x) + i \mod \gamma n]\) where \(i\) is the smallest non-negative integer such that the specified cell is available.

\mypara{Counting steps.}
As the values are drawn independently and uniformly from \([0, 1)\), the indices \(h(x)\) are also independently and uniformly distributed in \(\set{0, \dots, \beta n - 1}\).
  The algorithm thus mirrors the scheme of linear probing commonly used for implementing hash tables.
  In linear probing each key \(x\) that is to be added to the table (and which is generally not assumed to come from a known distribution) is hashed to an index of the array, and \(x\) is inserted into the first following free cell.
    
  When implementing a hash table, one is interested in analyzing the \emph{speed} of insertions, which for linear probing corresponds to the number of cells that are probed before a free cell is found.
  Although speed is not a concern for our online algorithm, we will nonetheless show that the number of steps performed is an important measure for bounding the cost of our solution.
  Formally, let \(s(k)\) be the number of steps performed when inserting the \(k\)-th value \(x_k\) of the input.
  If \(A[h(x_k)]\) is free, then \(s(k) = 1\); otherwise \(s(k) = i+1\), when \(x_k\) is inserted into cell \(A[h(x_k) + i \mod \gamma n]\).

  Now consider the linear probing process \(\AA_{LP}\) inserting \(n\) elements into a table \(T\) of size \(\beta n\) (that is, without the buffer space).
  Analogously to \(s(i)\) we let \(s_{LP}(i)\) be the number of steps performed by \(\AA_{LP}\) when inserting the \(i\)-th element.
  We couple the processes \(\AA\) and \(\AA_{LP}\) so that they encounter the same stream of indices \(h(x_1), h(x_2), \ldots, h(x_n)\) during execution.
  We then have \(s(i) \leq s_{LP}(i)\).
  Indeed, the only difference between the processes is that \(\AA\) has an extra \(\alpha n\) cells of buffer space to prevent wraparound at the end of \(A\).
  More generally, we thus have \(\Ep{s(i)} \leq \Ep{s_{LP}(i)}\). Combining the pieces we get that:

\begin{restatable}{claim}{stochasticLargeCost}\label{stochasticLargeCost}
$\Ep{\AA(X)} \leq O\left(1 + \frac{1}{\beta n} \sum_{i=1}^n\Ep{s(i)}\right).$%
\end{restatable}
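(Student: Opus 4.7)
The plan is to relate the cost of the final solution to the step counts $s(i)$ through a structural property of the final array. Observe first that since items are placed irrevocably, any cell empty at the end of the execution was empty throughout. I would combine this with the behavior of linear probing to prove the following key lemma: if $p<q$ are consecutive non-empty cells in the final array (so $A[p+1], \dots, A[q-1]$ are all empty) and no wraparound occurred when inserting the item at $A[q]$, then the hash of that item equals exactly $q$, i.e., $s(\text{item at }q)=1$. Indeed, were the hash in $[0,q-1]$, linear probing would have scanned forward from it and encountered one of the persistently empty cells $p+1,\dots,q-1$ before reaching $q$, contradicting that the item ended up at $A[q]$.

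Assuming the whole execution is wraparound-free, this immediately yields a per-pair bound. Writing $s_p$ for the step count of the item in $A[p]$, that item has hash $p-s_p+1$ and therefore lies in $\bigl[\tfrac{p-s_p+1}{\beta n},\tfrac{p-s_p+2}{\beta n}\bigr)$, while the item in $A[q]$ lies in $\bigl[\tfrac{q}{\beta n},\tfrac{q+1}{\beta n}\bigr)$, so
\[
|A[p]-A[q]| \;\le\; \frac{(q-p)+s_p}{\beta n}.
\]
Summing over all consecutive non-empty pairs, the gaps $(q-p)$ telescope to at most $\gamma n$, and each $s_p$ is charged at most once (the item in the rightmost non-empty cell is never charged), so the step terms sum to at most $\sum_{i=1}^{n} s(i)$. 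This yields a deterministic bound of $\gamma/\beta + (\sum_i s(i))/(\beta n) = O(1) + (\sum_i s(i))/(\beta n)$ whenever no wraparound occurs.

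To finish I would absorb the wraparound case into the expectation. The total cost is deterministically at most $n-1$ since all values lie in $[0,1)$, and the probability that \emph{any} wraparound occurs during the execution is $O(1/n)$, via a coupling to ordinary linear probing in a table of size $\beta n$ loaded below capacity (this is the bound promised by the referenced appendix, and is standard for sub-constant load factor). Splitting $\Ep{\AA(X)}$ according to whether wraparound occurs, the bad event contributes only $O(1)$, which combines with the good-case bound and linearity of expectation to give the claimed $O\!\bigl(1+\tfrac{1}{\beta n}\sum_{i=1}^{n}\Ep{s(i)}\bigr)$. The only delicate point is ensuring the structural lemma really fails \emph{only} when a wraparound happens; but this follows directly from the argument of the first paragraph, so no further case analysis is needed.
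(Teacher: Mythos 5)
Your structural lemma is only valid when at least one empty cell actually separates $p$ and $q$: the argument "the probe from a smaller hash would hit a persistently empty cell among $p+1,\dots,q-1$" is vacuous when $q=p+1$, and for adjacent filled cells the conclusion is simply false — the item at $A[q]$ may have been displaced arbitrarily far to the right of its hash. This is exactly where your per-pair bound breaks. Concretely, insert items hashing to cells $1,\dots,m$ (each placed in its own cell, $s=1$), then one more item hashing to cell $1$: it lands in cell $m+1$ with value $\approx 1/(\beta n)$, next to the value $\approx m/(\beta n)$, so $|A[m]-A[m+1]|\approx m/(\beta n)$ while your bound for that pair is $\frac{(q-p)+s_p}{\beta n}=\frac{2}{\beta n}$. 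Drops between adjacent cells are governed by the step count of the \emph{right} item, not the left one, and charging only $s_p$ lets these costs escape; with an alternating construction (in-place items interleaved with far-displaced ones) even the summed inequality you state fails by a factor approaching $2$, so the derivation as written is invalid, and your closing remark — that the lemma can only fail under wraparound — is precisely the point that is wrong. The repair is cheap: absent wraparound, the value in any filled cell $r$ lies in $\left[\frac{r-s_r+1}{\beta n},\frac{r-s_r+2}{\beta n}\right)$, hence $|A[p]-A[q]|\le \frac{(q-p)+s_p+s_q+1}{\beta n}$; summing, each $s(i)$ is charged at most twice and the gaps still telescope, which recovers the claim up to constants. (Your lemma, correctly restricted to gap-separated pairs, is the same observation the paper uses in \cref{stochLargeSep}; the paper avoids the adjacent-cell pitfall by charging costs at insertion time — merge, extend and separation costs in \cref{stochLargeMer,stochLargeExt,stochLargeSep} — so every cost is charged to the step count of the item being inserted.)

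A secondary point: you assert the wraparound probability $O(1/n)$ as "standard", but in the paper this estimate is part of the proof of the present claim (\cref{stochLargeWrap}): wraparound requires a run of length $\alpha n$, and a Chernoff bound over the $\beta n$ candidate windows gives probability $O(1/n^2)$. A complete write-up should include this argument (your coupling to linear probing in a table of size $\beta n$ is the right vehicle); also note the load factor here is the constant $1/\beta<1$, not sub-constant.
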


The proof can be found in \Cref{sec:stochastic-large-appendix}. We then invoke the following classic result by Knuth \cite{knuth63linprobe} to bound \(\sum_i \Ep{s_{LP}(i)}\):
\begin{theorem}[\cite{knuth63linprobe}]\label{knuthLinprobe}
  Consider the process of inserting \((1-\eps)m\) elements into an array of size \(m\) by linear probing.
  When hash values are assigned uniformly and independently,
  \[\sum_{i=1}^{(1-\eps)m} \Ep{s_{LP}(i)} \in O\left((1-\eps) m \cdot \frac{1}{\eps}\right) \, .\]
\end{theorem}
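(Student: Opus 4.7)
The plan is to follow Knuth's classical analysis of linear probing. The key quantitative fact I would establish is that the expected cost of the $(k{+}1)$-th insertion into a table of size $m$ already holding $k$ elements (with independent uniform hashes) satisfies
\[
\Ep{s_{LP}(k{+}1)} \;=\; O\!\left(\frac{1}{(1-\alpha_k)^2}\right) \;=\; O\!\left(\frac{m^2}{(m-k)^2}\right), \qquad \alpha_k := k/m.
\]
Given this per-step bound, a short summation finishes the proof:
\[
\sum_{i=1}^{(1-\eps)m} \Ep{s_{LP}(i)}
\;\leq\; \sum_{k=0}^{(1-\eps)m-1} O\!\left(\frac{m^2}{(m-k)^2}\right)
\;=\; O\!\left(m^2 \sum_{j=\eps m}^{m-1} \frac{1}{j^2}\right)
\;=\; O\!\left(\frac{m}{\eps}\right),
\]
which matches the claimed bound $O((1-\eps)\, m/\eps)$.

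To establish the per-step bound, I would analyse \emph{clusters}: maximal contiguous runs of occupied cells. When a new key with hash value $h$ is inserted, its probe count equals $1$ plus the length of the (possibly empty) prefix of occupied cells starting at position $h$. Since $h$ is uniform on $\{0,\dots,m-1\}$, it is enough to upper bound $\Ep{R_j}$, where $R_j$ is the length of the run of occupied cells starting at cell $j$ (and $0$ if $j$ is empty), averaged over a uniform $j$. Equivalently, I want $\sum_{j=0}^{m-1} \Ep{R_j} = O(m/(1-\alpha_k)^2)$, and the per-step bound follows.

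The main obstacle is the cluster-length estimate itself. I would attack it by bounding, for each $L \geq 1$, the probability that a fixed cell $j$ lies in a cluster of length at least $L$: this event forces at least $L$ of the $k$ i.i.d.\ uniform hashes to fall into some window of $2L$ consecutive cells near $j$. A Chernoff bound on a $\mathrm{Binomial}(k, 2L/m)$ count of hashes in such a window gives geometric decay in $L$ whenever $\alpha_k$ is bounded away from $1$, which sums to the desired $O(1/(1-\alpha_k)^2)$. For $\alpha_k$ close to $1$ one needs a sharper combinatorial argument --- this is the intricate heart of Knuth's original proof in~\cite{knuth63linprobe}, which goes through the enumeration of parking functions (a cycle-lemma / reflection argument works equally well). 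Since Theorem~\ref{knuthLinprobe} is explicitly cited from that reference, the cluster-length bound can alternatively be imported as a black box, with the summation above being the only remaining work.
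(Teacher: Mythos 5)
The paper does not prove this statement at all---it is imported as a black box from Knuth's analysis of linear probing---and your reduction to the per-insertion bound $\Ep{s_{LP}(k+1)} = O\!\left(1/(1-k/m)^2\right)$ followed by a summation over load factors is precisely the standard route to this summed form, so your proposal is essentially the same treatment (with the hard cluster-length estimate deferred, as you note, to the cited reference; your Chernoff sketch alone would only give a weaker polynomial dependence on $1/(1-\alpha_k)$ near full load). One minor point: to obtain the stated $O\!\left((1-\eps)m/\eps\right)$ rather than the weaker $O(m/\eps)$, evaluate the sum as $\sum_{j=\eps m}^{m} 1/j^2 = O\!\left((1-\eps)/(\eps m)\right)$---a distinction that matters only when $\eps$ is close to $1$ and is irrelevant for the paper's application, where $\eps=(\beta-1)/\beta$ is a fixed constant.
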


To apply \cref{knuthLinprobe}, set \(m = \beta n\) and \(\eps = (\beta - 1)/\beta\) such that \((1-\eps) m = n\), and thus
$\sum_{i=1}^n\Ep{s(i)} \leq \sum_{i=1}^n \Ep{s_{LP}(i)} \in O\left(n \cdot \left(\frac{\beta}{\beta - 1}\right)\right)$.
As \(\beta-1 = \frac{9}{10} \cdot (\gamma - 1)\), we have
$\Ep{\AA(X)} \in O\left(1 + \frac{1}{\beta} \cdot \frac{\beta}{\beta-1}\right) 
\subseteq O\left(1 + \frac{1}{\gamma-1} \right)$, 
proving \cref{stochasticLarge}.\qed

\section{Conclusion and open questions}\label{sec5}

In this paper, we studied the online sorting problem in some of its variants. Several questions and directions to further study remain, we mention those that we find most interesting. 

\mypara{Online sorting of reals.} For large arrays ($m = \ceil{\gamma n}$, for $\gamma>1$) significant gaps remain between the upper and lower bounds on the competitive ratio of online sorting (see~\cite{online_sort}).

In a different direction, consider online sorting and \emph{online list labeling}, mentioned in \S\,\ref{sec1}.  %
These can be seen as two 
extremes of an ``error vs. recourse'' trade-off: online sorting allows no recourse, while list labeling allows no error. %
Achieving a smooth trade-off between the two optimization problems by some hybrid approach is an intriguing possibility. %

\mypara{Stochastic and other models.} In the stochastic setting of online sorting, we could improve upon the worst-case bound, but the obtained ratio (Theorem~\ref{thm:stochastic1}) is likely not optimal; we are not aware of nontrivial lower bounds. For stochastic online sorting with large arrays  (Theorem~\ref{stochasticLarge}), a matching lower bound is likewise missing. We restricted our study to uniform distribution over an interval; extending the results to other distributions remains open. 

Other models that go beyond the worst-case assumption could yield further insight; we mention two possible models: (1) online sorting \emph{with advice} (e.g., from a machine learning model): suppose that each input item comes with a possibly unreliable estimate of its rank in the optimal (sorted) sequence; what is the best way to make use of this advice? (2) online sorting \emph{with partially sorted input}: for instance, suppose that the \emph{input cost} $\sum_{i=1}^{n-1}{|x_{i+1}-x_i|}$ is small. What is the best guarantee for the online sorting cost in this case?

\mypara{Online TSP in various metrics.}
For online TSP in $\mathbb{R}^d$, for fixed $d$, a small gap remains between the lower bound $\Omega(\sqrt{n})$ and upper bound $O(\sqrt{n \log{n}})$ on the competitive ratio (Theorem~\ref{thm3}). The dependence on $d$ (when $d \in \omega(1)$) is not known to be optimal.  Online TSP in $\mathbb{R}^d$ with large array, and/or with stochastic input are also interesting directions. %

Some non-Euclidean metrics pose natural questions, e.g., $L_0$ or $L_{\infty}$ in $\mathbb{R}^d$, or tree- or doubling metrics. More broadly, is the optimal competitive ratio $O(\sqrt{n})$ for arbitrary metrics?

\bibliography{ref.bib}

\newpage

\appendix

\section{Proofs from \S\,\ref{sec2}}\label{appb}

\thmClaimOne*
\begin{proof}

Our task is to place $n$ values $x_1, \dots, x_n \in \mathbb{R}$ bijectively into the array cells $A[1]$, $\dots$, $A[n]$. We revisit and extend the algorithm from~\cite{online_sort}. The main obstacle is that in~\cite{online_sort} it can be assumed that $\OPT = 1$. In our case, $\OPT = \max_i{\{x_i\}} - \min_i{\{x_i\}}$; as this quantity is not known until the end, we estimate it based on the available data. 

For ease of presentation assume that \(x_1 = 0\).
This is without loss of generality as we can simply transform $x_i$ to $x_i-x_1$ for all $i=1,\dots,n$ and the costs of both offline and online solutions are unaffected. 

Start by reading $x_1, x_2$ and placing them arbitrarily (this incurs a cost of at most $2\OPT$), and set $q = \lceil \log_2{|x_2 - x_1|} \rceil = \lceil \log_2{|x_2|} \rceil$ (as we assumed $x_1=0$). Throughout the algorithm we maintain $q$ so that the input values seen so far are in the interval $J = [-2^q,2^q]$. This clearly holds after reading $x_2$ . For simplicity, assume first that $x_3,\dots,x_n \in J$, and $q$ does not need to be updated.

Let $N_1 = \lfloor \sqrt{n} \rfloor$ and $N_2 = 2N_1$. Partition the interval $J$ into $N_1$ equal-size subintervals $J_1, \dots, J_{N_1}$, called ``boxes'', and partition the array $A$ into $N_2$ equal-size subarrays $A_1, \dots, A_{N_2}$, called ``blocks'' (as $n$ is generally not divisible by $N_2$, we allow one block to be smaller). 

Whenever a value $x_i \in J_k$ is received, if it is the first value from box $J_k$, then we assign $J_k$ to a yet unassigned (empty) block $A_t$; we place $x_i$ in the leftmost cell of $A_t$ and continue placing future values from $J_k$ into $A_t$ (left-to-right).
If the block $A_t$ of $J_k$ is full, we assign a new empty block to $J_k$. 

If we run out of empty blocks, we must have processed at least $n/2$ values; to see this, observe that of the $N_2 = 2N_1$ blocks, at most $N_1-1$ are non-full, the rest have been filled. 
From this point on, we treat the remaining $n' \leq n/2$ empty cells of $A$ as a virtual (contiguous) array $A'$, recursively placing the remaining $n'$ values into $A'$ by the same strategy (we restart $\AA$ on $A'$, without the intitialization of $q$, i.e., we keep the current values of $q$ and $J$). 

The total cost $t(n)$ of $\AA$ is at most $$ t(n) \leq t(n')+2^{q+1} \cdot \frac{n}{N_1} + 2^{q^*+1} \cdot 2N_2.$$

The first term is the remaining cost of placing $n'$ values into a contiguous array of size $n'$.

The second term is the cost due to at most $n$ values already placed within full blocks. (As these come from one of the $N_1$ boxes of $J$ and $|J| \leq 2^{q+1}$, the bound follows.)

The third term is the cost due to boundaries between blocks, and ``at the interface'' with the recursive call, i.e., at the margins of incomplete blocks. (This term thus captures the cost of ``pretending'' $A'$ to be contiguous.) 
Here $2^{q^*+1}$ is an upper bound on the size of $J$ throughout the entire execution of $\AA$ (including future points), i.e., $q^{*} = \lceil \log_2{(\max_i{(x_i)} - \min_i{(x_i)})} \rceil = \ceil{\log_2{\OPT}}$, and $q \leq q^*$.

We have $t(n) \leq t(n/2) + \OPT \cdot O(\sqrt{n})$, which resolves to $t(n) \in \OPT \cdot O(\sqrt{n})$. Notice that $\OPT$ is the \emph{global optimum} of the problem, not the optimum in the current recursive call.  

\smallskip

Suppose now that we receive a point $x_i \notin J$. We update $$q = \lceil \log_2{(\max_{k\leq i}{(x_k)} - \min_{k \leq i}{(x_k)})} \rceil$$ and $J=[-2^{q}, 2^q]$.
Note that the value of $q$ increased, and all values \(x_1,\dots,x_{i}\) seen so far are also contained in the current \(J\). We then split $J$ into new boxes $J'_1, \dots, J'_{N_1}$ (without changing $N_1$ or $N_2$). The blocks $A_1, \dots, A_{N_2}$ remain, and if some old box $J_i$ was assigned to a block $A_j$, then we assign the corresponding new box $J'_i$ to $A_j$. With these changes, we continue with the algorithm. 

Observe that at the margin of every non-filled block we may end up with neighbors not from the same box (i.e., one is from the old, one is from the new). Thus, their distance may be up to $|J| = 2^{q+1}$, so we incur an additional cost of at most $2^{q+1} \cdot N_1 \leq 2^{q+1} \cdot \sqrt{n}$.

We may incur such an extra cost several times during the execution, but always for a different integer $q \leq q^{\star}$.
As $2^{q^\star - 1} \leq \OPT \leq 2^{q^\star + 1}$,
the total extra cost is at most $\sum_{q\leq q^{\star}}{ 2^{q+1} \cdot \sqrt{n}} \in \OPT \cdot O(\sqrt{n})$, and the overall bound is unaffected.
\end{proof}

\thmClInd*
\begin{proof}
Proceed by induction on \(i\). For all \(i' \in \{0, \dots, i-1\}\) assume
\begin{align*}
\HH(i') &\geq \left(1 - \frac{1}{\sqrt{n}}\right)^{i'} \cdot \frac{i'}{32}, \\
\LL(i') &\geq \left(1 - \frac{1}{\sqrt{n}}\right)^{i'} \cdot \frac{i'-1}{32}.
\end{align*}
Then
\begin{align*}
    \HH(i) &\geq \frac{1}{16} + \left(1 - \frac{1}{\sqrt{n}}\right) \cdot \min \{\LL(i-1),\, \HH(i-1)\} \\
    &\geq \frac{1}{16} + \left(1 - \frac{1}{\sqrt{n}}\right) \cdot \min\left\{ \left(1 - \frac{1}{\sqrt{n}}\right)^{i-1} \cdot \frac{i-2}{32},\, \left( 1 - \frac{1}{\sqrt{n}}\right)^{i-1} \cdot \frac{i-1}{32} \right\} \\
    &\geq \left( 1 - \frac{1}{\sqrt{n}}\right)^{i} \cdot \frac{i}{32},
\end{align*}
and
\begin{align*}
    \LL(i) &\geq \left(1 - \frac{1}{\sqrt{n}}\right) \cdot \min \left\{\frac{3}{16} + \LL(i-1),\, \HH(i-1)\right\} \\
    &\geq \left(1 - \frac{1}{\sqrt{n}}\right) \cdot \min\left\{ \frac{3}{16} + \left(1 - \frac{1}{\sqrt{n}}\right)^{i-1} \cdot \frac{i-2}{32},\, \left( 1 - \frac{1}{\sqrt{n}}\right)^{i-1} \cdot \frac{i-1}{32} \right\} \\
    &= \min\left\{\left(1 - \frac{1}{\sqrt{n}}\right) \cdot \frac{3}{16} + \left(1 - \frac{1}{\sqrt{n}}\right)^{i} \cdot \frac{i-2}{32},\, \left( 1 - \frac{1}{\sqrt{n}}\right)^{i} \cdot \frac{i-1}{32} \right\} \\
    &\geq \left(1 - \frac{1}{\sqrt{n}}\right)^i \cdot \frac{i-1}{32}.
\end{align*}

Finally, note that \(( 1 - \frac{1}{\sqrt{n}})^{\sqrt{n}} \geq 0.3\) for \(n \geq 10\), proving the claim.
\end{proof}

\section{Proofs from \S\,\ref{sec3}}\label{appc}

\subparagraph{An alternative algorithm for Claim~\ref{lemgame}.} We now present a different algorithm, resembling the one from~\cite{online_sort} and the one from the proof of Claim~\ref{thm:allreals}. If $K$ is known, then we can proceed in a straightforward way: Partition the array into $2K$ blocks, assign individual values to blocks, and when we run out of free blocks, recurse on the remaining free space. We get a cost of $t(n) = O(K) + t(n/2)$, yielding $t(n) \in O(K \log{n})$. Here the $O(K)$ term accounts for both the boundaries between blocks and the interface with the recursive call. 

We can handle the case of unknown $K$ by a doubling strategy, similar to the proof of \Cref{thm:allreals}.
Namely, start with $K=2$, and run the above algorithm. When a new element arrives that increases the number of different values seen to $K+1$, then set $K$ to $2K$, and split each block in the current recursive level into two equal parts. A full block will be replaced by two full blocks, an empty block will be replaced by two empty blocks. In both cases, there is no additional cost. 

A partially filled block will be replaced either by (1) a full block and an empty block, or (2) a full block and a partially filled block, or (3) a partially filled block and an empty block. In either case, if the block was assigned to a value $t$, then, if there is a resulting partially filled block, then it will continue to be filled with values $t$. Thus there is no incurred cost that is not accounted for already elsewhere, and the bound of $O(K \log{n})$ holds.

\thmClaimOnlineUB*
\begin{proof}
The algorithm closely follows the online sorting algorithm of~\cite{online_sort} and the extension in Claim~\ref{thm:allreals}.

Divide the box $[0,1]^2$ into $n^{1/3} \times n^{1/3}$ \emph{boxes} of sizes $n^{-1/3} \times n^{-1/3}$. Divide the array of size $n$ into $2n^{2/3}$ equal size \emph{blocks}. When a point arrives in a given box $B$, assign $B$ to a previously unassigned block $J$, and place points from $B$ into $J$ (in arrival order, left-to-right).
If $J$ is full, assign a new block to $B$.
If no more new blocks are available, re-use the empty space by treating it as if it were contiguous, and recursively assign blocks for the remaining (at most) $n/2$ input points. 

The total cost can be bounded as $t(n) \leq O(n^{2/3}) + t(n/2) \in O(n^{2/3})$, where we accounted for the at most $O(n^{-1/3})$ distance between any two neighbors, and the at most $O(1)$ distance between the $O(n^{2/3})$ block-boundaries and boundaries at the interface between the first call and the recursive call. %
\end{proof}

\thmClaimLB*
\begin{proof}
We adapt the lower bound strategy for online sorting from~\cite{online_sort}, with the $n^{2/3}$ allowed input points being the grid points of the $n^{1/3} \times n^{1/3}$ uniform grid spanning $[0,1]^2$. We present the adversary strategy against deterministic algorithms; a randomized oblivious adversary can be obtained using techniques similar to those in \S\,\ref{sec2}.

In the first phase, as long as there is a grid point that does not appear in the array with an empty neighbor, output that point as the next input item. If all grid points appear with an empty neighbor, then move to phase two, and output $(0,0)$ points until the end. In the first phase each input item incurs cost at least $\Omega{(n^{-1/3})}$, thus, if we remain in the first phase until the end, then the total cost is $\Omega(n \cdot n^{-1/3}) = \Omega{(n^{2/3})}$. If we move to the second phase, then the cost is at least $$\sum_{1 \leq i,j \leq n^{1/3}}{\sqrt{i\cdot n^{-1/3} + j \cdot n^{-1/3}}},$$ likewise yielding $\Omega{(n^{2/3})}$.
\end{proof}

\subsection{Proofs from \S\,\ref{sec34}}
\label{proofs:uniformLarge}
\uniformLargeLB*
\begin{proof} 
    We start by introducing one copy of each element \(\{1, \dots, K\}\), immediately causing a cost of \(K-1\).
    Then, we proceed as in the proof of \cref{lemlb}, choosing a type uniformly at random.
    Assume \(K \geq 5\).
    With \(m'\) free cells left in the array we repeat the chosen element \(4m'/K\) times.
    By the arguments presented in the proof of \cref{lemlb}, this increases the cost of the partial solution with probability at least \(1/2\).
    Next, a new element is chosen, and the process is repeated with \(m' ( 1 - 4/K)\) free cells left in the array.
    
    The question now is how many times we can repeat this procedure before having inserted \(n\) elements.
    Watching the process in reverse, it is seen that with \(m'\) cells currently free, we have caused an increase in cost (with probability 1/2) during the previous \(K m'/(K-4)\) insertions.
    The process is stopped when \(m' = (\gamma - 1) n\), and started at \(m' = \gamma n - K\).
    Hence we can repeat the process
    \begin{align*}
        \log_{\frac{K}{K-4}} \left( \frac{\gamma n - K}{(\gamma - 1) n} \right) &= \frac{ \log_2 \left( \frac{\gamma}{(\gamma - 1)}  - \frac{K}{(\gamma-1)n} \right) } { \log_2 \left( \frac{K}{K-4} \right) } \\
        &\geq \frac{K-4}{4} \cdot \log_2 \left( \frac{\gamma}{(\gamma - 1)}  - \frac{K}{(\gamma-1)n} \right) \\
        &\geq \Omega\left(K \cdot \log\left(\frac{\gamma}{\gamma-1}\right)\right)
    \end{align*}
    times.
    Thus the expected cost of \(\AA(X)\) will be \(K-1 + 1/2 \cdot \Omega(K \cdot \log(\gamma/(\gamma-1)))\).

    The case \(K \in \set{3, 4}\) is lifted from the proof of \cref{lemlb} in the same way: Letting each round be of length \(m'/K\), rounds will cause an increase in cost with probability at least \(1/K \geq 1/4\), and we can execute \(\Omega(K \cdot \log(\gamma/(\gamma-1)))\) rounds.
\end{proof}

\uniformLargeUB*
\begin{proof}
    We reuse the algorithm and coin-game from the proof of \cref{lemgame}, with the modification that the game starts with \(\gamma n\) coins in a single pile and ends when \((\gamma-1)n\) coins are left.
    Generalizing the observation from the proof of \cref{lemgame}, we observe that if we at some point have \(c\) piles each of size at most \(t\), then all piles will be of size at most \(t/2\) after another \(c\) splits.

    Applying this observation to the very start of the game, we initially have a single pile of \(\gamma n\) coins, then two piles of at most \(\gamma n/2\) coins each after the first split, then four piles of at most \(\gamma n/4\) coins after another two splits, and so forth.
    Generally, \(n_i \leq \gamma n / 2^{\floor{\log_2 i + 1}}\).
    
    From the above observation, \(n_{2K} \leq \gamma n/K\),
    and from the proof of \cref{lemgame} we have \(n_{i+K} \leq n_i / 2\).
    The early termination of the game means that \(n_i \geq (\gamma-1)n/K\) for all \(i\), thus
    \begin{align*}
        i &\leq 2K + K \log_2 \left( \frac{\gamma n}{K} \cdot \frac{K}{ (\gamma -1) n} \right) 
        = K \cdot \left(2 + \log_2 \left(\frac{\gamma}{\gamma - 1} \right) \right) \, . \qedhere
    \end{align*}
\end{proof}

\section{Proofs from \S\,\ref{sec4}}\label{appd}

\chernofffailure*
\begin{proof} Let $\size{A_i}$ denote the number of elements that hash (according to $h$) into a fixed bucket $A_i$, for some $i\in \set{1,\ldots, M}$. Then $\Ep{\size{A_i}} = n^{1-\alpha}$, since we have $n$ elements hashing into $M = n^{\alpha}$ buckets. Recall that the capacity of each bucket is $C = N/M$, where  $N=n-n^{\beta}$, and so $C = n^{1-\alpha} - n^{\beta-\alpha}$. We now apply a Chernoff bound for the lower tail of $\size{A_i}$:
	\begin{align*}
	\Prp{\size{A_i} < C} = \Prp{\size{A_i} < \parentheses{1- 1/n^{1-\beta}}   \cdot n^{1-\alpha}}  \leq& \exp\parentheses{ -1/2\cdot n^{1-\alpha} \cdot 1/n^{2-2\beta}} \\
    = & \exp\parentheses{-1/2 \cdot n^{2\beta-\alpha-1}} \;.
	\end{align*}

\noindent The claim follows by setting $\alpha$ and $\beta$ such that the above probability is at most $1/n^{c+\alpha}$ and then doing a union bound over all buckets.
\end{proof}

\optcostuniform*

\begin{proof} We first optimize for the cost of $\Sunif_1(A,n)$ from~\Cref{claim:costuniformone}. Note that the cost is (asymptotically) minimized when $n^{\beta} = \Theta(C)$, that is, when $n^{\beta} = \Theta(n^{1-\alpha})$. We thus set $\beta = 1-\alpha$. We now verify the conditions from~\Cref{claim:ballsintobins} and set 
	$$\alpha = \frac{1}{3} - \frac{\ln\ln n + \ln(2(c+1))}{3\ln n}  \;.$$
Then $n^{\frac{1-\alpha}{2}} = n^{\frac{1}{3}} \cdot n^{\frac{\ln\ln n + \ln(2(c+1))}{6\ln n}} = n^{\frac{1}{3}} \cdot \exp\parentheses{\frac{\ln\ln n + \ln(2(c+1))}{6}}$ and the claim follows.
\end{proof}

\thmBBk*
\begin{proof}
We prove the claim by induction on $k$. We assume that the statement is true for $k$ and we then prove it for $k+1$.  From the proof of~\Cref{claim:ballsintobins}, we have that the buckets in $\Sunif_{k+1}$ are not full with probability at most $1/n^{c}$ as long as 
\begin{equation}\label{eq:one}
	\beta \geq \frac{1}{2} \cdot \parentheses{1+ \alpha + \frac{\ln\ln n + \ln(2(c+1))}{\ln n} }\,.  
\end{equation}

We now would like to have the same upper bound for the probability that $\Sunif_k$ fails in some bucket. For that, we invoke the inductive hypothesis with failure probability at most $1/{n^{c+1}}$, and do a union bound over the $n^{\alpha}$ buckets. We implicitly assume here that $\alpha<1$, so that we have $2n^{\alpha} \leq n$. We upper bound $C$ by $n^{1-\alpha}$ and get that the cost for all the buckets is
$$	O\parentheses{ n^{(1-\alpha)/f_k} \cdot\exp\parentheses{1/4 \cdot \ln\ln n + B_k  \cdot \ln(2(c+2)) }  } \;.  $$

The cost across the buckets  and the backyard remains $O(1)$, and the cost of the backyard is $O(n^{\beta/2})$ as before. We now equate 

	$$\frac{1-\alpha}{f_k} + \frac{1/4 \cdot\ln\ln n + B_k\ln(2(c+2))}{ \ln n}= \frac{\beta}{2}$$
and then set
 
 $$  \alpha = \frac{4-f_k}{4+f_k} + \frac{f_k}{f_k+4}\cdot\frac{ 4B_k\cdot \ln(2c+4) - \ln(2c+2)  }{\ln n}$$
 to satisfy Eq.(\ref{eq:one}). We would then get that

$$ \frac{1-\alpha}{f_k} =  \frac{2}{4+f_k} - \frac{1}{4+f_k} \cdot \frac{ 4B_k\cdot \ln(2c+4) - \ln(2c+2)  }{\ln n}$$
 and the exponent in the overall cost would be 
  $$\frac{2}{4+f_k} +  \frac{1/4 \cdot \ln\ln n + B_k\cdot \ln(2c+4) - \frac{4B_k}{4+f_k}\cdot \ln(2c+4) + \frac{1}{4+f_k} \ln(2c+2)  }{\ln n} \;.$$
 
Note that  $2/(4+f_k) = 1/f_{k+1}$ by definition, so the only thing left to verify is that:

$$ B_k\cdot \ln(2c+4) - \frac{4B_k}{4+f_k}\cdot \ln(2c+4) + \frac{1}{4+f_k} \ln(2c+2)  \leq B_{k+1} \cdot \ln(2c+2) \;.$$ 

We upper bound $\ln(2c+4) \leq 2 \ln(2c+2)$ and get that it is enough to show that

$$ 2B_k \cdot \parentheses{1-\frac{4}{4+f_k}} + \frac{1}{4+f_k} \leq B_{k+1} \;.$$

Finally, we use the fact that $1/8 \leq 1/(4+f_k) \leq 1/4$ to show that this inequality holds when we set $B_k = k/4$. 
\end{proof}

\stochasticone*
\begin{proof}
 To get an overall bound, we invoke~\Cref{lem:ballsintobinsk} for $c=1$ and a value $k$ that leads to a small cost.  We set $x = 1/2^{k-1}$ and rewrite $k = \log_2 (2/x)$ in the main term in the cost as follows:
$$ n^{1/f_k} \cdot 4^{k/4} =n^{1/(4- x)}\cdot  \sqrt{2/x}=  \exp(\ln n/(4-x) + 1/2 \cdot \ln(2/x)) = \exp(\ln(2)/2 + f(x)) \;,$$
where we defined $f(x)  =  \ln n/(4-x) -1/2 \cdot \ln(x)$. This function is minimized at $x_0 = 4+\ln n - \sqrt{\ln^2 n+8\ln n}$, so we invoke $\Sunif_{k_0}(A,n)$ for $k_0=\log_2(2/x_0)$, yielding: %
\begin{align*}
	f(x_0) &= \frac{\ln n}{\sqrt{\ln^2 n+8\ln n} - \ln n} - \frac{1}{2} \cdot \ln (4+\ln n - \sqrt{\ln^2 n+8\ln n})%
 \leq  \frac{\ln n}{4}+\frac{1}{2} \;.
\end{align*}

\noindent Plugging $f(x_0)$ and $k_0$ back, we get an overall cost of $O((n \ln n)^{1/4})$. As the optimal (offline) cost is clearly $\Omega(1)$ with high probability, the claim follows. %
\end{proof}
 
\subsection{Proofs from \S\,\ref{sec41}}
\label{sec:stochastic-large-appendix}

In this subsection we prove the following claim characterizing the cost of the algorithm presented in \S\,\ref{sec41}.
\stochasticLargeCost*

A central concept for our analysis is that of the \emph{run}.
A run denotes a maximal contiguous subarray of filled cells \(A[i]\) through \(A[j]\) such that an element \(x \in \left[\frac{i}{\beta n},\, \frac{j+2}{\beta n}\right)\) will be placed in cell \(A[j+1]\).
Generally, long runs are to be avoided as they make it more likely that an element \(x\) is placed far away from \(A[h(x)]\), which makes it more difficult to bound the cost incurred by neighboring elements.

\mypara{The issue of wraparound.}
When analyzing linear probing one can often disregard the issue of wraparound as a technical corner case which has no impact on the procedure at hand other than to complicate notation and definitions.
For our use case however, wraparound would mean that we place a large value at the start of the array -- a region that we, intuitively, have otherwise been filling with small values.

Wraparound could thus incur quite a penalty, and we will prove that, w.h.p., no wraparound will happen. In this way, wraparounds can be disregarded in the following arguments.
Formally, let \(\calW\) denote the event that \(\AA\) experiences wraparound during execution, so that some element \(x\) is placed in a cell of index smaller than \(h(x)\).
We then compute the expected cost of \(\AA(X)\) as
\(\Ep{\AA(X)} = \Ep{\AA(X) \cdot \indicator{\calW}} + \Ep{\AA(X) \cdot \indicator{\neg \calW}}\).
\begin{claim}\label{stochLargeWrap}
\[\Ep{\AA(X) \cdot \indicator{\calW}} \leq O\left(\frac{1}{n}\right) \, .\]
\end{claim}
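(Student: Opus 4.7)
The plan is to decouple the two factors in $\AA(X) \cdot \indicator{\calW}$. First, the cost admits a trivial deterministic bound $\AA(X) \leq n$, because all inputs lie in $[0,1)$ (so each adjacent non-empty pair contributes at most $1$) and there are at most $n - 1$ such pairs. Hence $\Ep{\AA(X) \cdot \indicator{\calW}} \leq n \cdot \Pr[\calW]$, and it suffices to show $\Pr[\calW] \leq O(1/n^2)$.

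Next, I would characterize the wraparound event combinatorially. Suppose the \emph{first} wraparound happens when inserting some $x_j$ with hash value $k := h(x_j)$. By the choice of ``first'' wraparound, every previously placed element lies in a cell of index at least its own hash. The fact that wraparound is forced on $x_j$ means the cells $A[k], A[k+1], \ldots, A[\gamma n - 1]$ are all occupied, and all $\gamma n - k$ of these cells were filled by elements whose hash lies in $\set{k, \dots, \beta n - 1}$. Counting $x_j$ itself, this yields $N_k := \size{\{i \in [n] : h(x_i) \geq k\}} \geq \gamma n - k + 1$. Therefore
\[
    \calW \subseteq \bigcup_{k = 0}^{\beta n - 1} \set{N_k \geq \gamma n - k + 1}.
\]

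The third step bounds each event in this union. Because the hash values are independent and uniform, $N_k$ is binomial with parameters $n$ and $p_k = (\beta n - k)/(\beta n)$, so $\E[N_k] = n - k/\beta$. A short computation shows that the required deviation $(\gamma n - k + 1) - \E[N_k] = (\gamma - 1) n - k(\beta - 1)/\beta + 1$ is nonnegative only for $k \gtrsim (\gamma - 1) n$, and throughout the entire relevant range $k \in \{(\gamma - 1)n + 1, \ldots, \beta n - 1\}$ the deviation is at least $\alpha n$ (the minimum is attained at the right endpoint $k = \beta n - 1$, where it evaluates to $\alpha n$). Hoeffding's inequality then gives $\Pr[N_k \geq \gamma n - k + 1] \leq \exp(-2 \alpha^2 n)$, and a union bound over the at most $n$ relevant $k$ yields $\Pr[\calW] \leq n \exp(-2 \alpha^2 n) = o(n^{-c})$ for every constant $c$, which dominates the needed $O(1/n^2)$.

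The main obstacle is the combinatorial reduction in the second step: it requires a careful argument that the first wraparound truly forces $N_k \geq \gamma n - k + 1$, using the invariant that, prior to any wraparound, every placed element occupies a cell at or past its own hashed position, so that the $\gamma n - k$ cells beyond $A[k]$ are filled only by elements with hash in $\set{k, \dots, \beta n - 1}$. Once this reduction is in place, the probabilistic step is essentially routine: since $\alpha = (\gamma - 1)/10$ is a positive constant depending only on $\gamma$, the Hoeffding tail is exponentially small and there is substantial slack, so no delicate quantitative estimates are needed.
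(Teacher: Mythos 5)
Your overall plan is sound and its probabilistic half is fine, but the justification of the key combinatorial containment is flawed as written. You take $k = h(x_j)$ for the element $x_j$ that triggers the first wraparound and claim that the occupied cells $A[k],\dots,A[\gamma n-1]$ are ``filled only by elements with hash in $\{k,\dots,\beta n-1\}$'', deducing this from the invariant that, before any wraparound, every element sits at or past its own hash. That invariant does not give the claim: an element $y$ with $h(y)<k$ can spill over into a cell of index $\ge k$ (its probe just has to pass through occupied cells up to and beyond $k$), in which case the occupants of $A[k..\gamma n-1]$ need not all have hash $\ge k$, and the event $\{N_k\ge \gamma n-k+1\}$ with this particular $k$ need not occur. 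The correct anchoring is at the \emph{start of the maximal occupied run} reaching the end of the array: let $k'$ be minimal with $A[k'],\dots,A[\gamma n-1]$ all occupied at the moment $x_j$ arrives, so $A[k'-1]$ is empty (or $k'=0$, which is impossible since fewer than $\gamma n$ elements exist). Since cells never empty out and there was no earlier wraparound, any occupant of a cell $\ge k'$ with hash $<k'$ would have had to probe through $A[k'-1]$ while it was occupied, contradicting its being empty now; hence all $\gamma n-k'$ occupants, plus $x_j$ itself (note $k\ge k'$), have hash $\ge k'$, giving $N_{k'}\ge \gamma n-k'+1$. Because you union over all $k$ anyway, this repairs the containment $\calW\subseteq\bigcup_k\{N_k\ge\gamma n-k+1\}$ without changing anything downstream; but the argument you actually wrote would not survive scrutiny, and the ``empty cell to the left of the run'' observation is precisely the missing idea.

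With that fix, the rest is correct: $\AA(X)\le n$ deterministically, the deviation $(\gamma n-k+1)-\E[N_k]$ is at least $\alpha n$ for every relevant $k$ (indeed for all $k\le\beta n$ --- your remark that it is nonnegative only for $k\gtrsim(\gamma-1)n$ confuses this with the range where the event is not vacuously impossible, a harmless slip), and Hoeffding plus a union bound over at most $\beta n$ values of $k$ gives $\Pr[\calW]\le \beta n\,e^{-2\alpha^2 n}=o(n^{-2})$, which is even stronger than needed. This route differs from the paper's: there, wraparound is reduced to the existence of a run of length $\ge\alpha n$, hence to some length-$\alpha n$ interval of hash positions receiving at least $\alpha n$ hashes, and a Chernoff bound with a constant relative deviation (mean $\alpha n/\beta$ versus threshold $\alpha n$) plus a union over $\beta n$ intervals yields $\Pr[\calW]\le O(1/n^2)$; note that the paper's reduction implicitly relies on the same run-start/empty-cell observation you need. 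Your suffix-count formulation gives a superpolynomially small bound with more slack, at the cost of the more delicate anchoring argument above.
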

\begin{proof}
    First, note that the cost of \emph{any} solution will be at most \(n\). Hence the claim follows when we have shown that \(\Prp{\calW} \leq O\left(\frac{1}{n^2}\right)\).

    As each run starts in one of the first \(\beta n\) cells, some run needs to be of length at least \(\alpha n\) to cover the buffer at the end of \(A\), which is a necessary condition for wraparound.
    For such a long run to occur, there must in turn exist an interval \(I \subset [0, \beta n]\) of length \(\alpha n\) such that there are at least \(\alpha n\) values \(x \in X\) with \(h(x) \in I\). Denote the number of such elements by \(X_I\).

    As \(X_I\) is the sum of independent trials, each with success probability \(\alpha/\beta\), we can bound the probability of \(X_I\) exceeding \(\alpha n\) by a Chernoff bound.
    Setting \(\delta = \beta - 1\) and \(\mu = \Ep{X_I} = n \cdot \alpha/\beta\) we have, for any fixed \(\alpha, \beta\),
    \[\Prp{X_I > \alpha n} = \Prp{X_I > (1+\delta)\mu} \leq \exp\left(\frac{- \delta^2 \mu}{2 + \delta} \right) = \exp\left(\frac{- (\beta-1)^2 \frac{\alpha}{\beta} n}{1 + \beta} \right) \leq O\left(\frac{1}{n^4}\right) \, .\]
    By a union bound over all \(\beta n\) intervals we get \(\Prp{\calW} \leq O(\beta/n^3) \leq O(1/n^2)\).
\end{proof}

\mypara{Bounding the cost of the solution.}
We account for three types of cost in \(\AA(X)\): If the \(i\)-th element is inserted in a cell right before the start of a run, the \(i\)-th step is said to incur a \emph{merge} cost, denoted \(mer(i)\), as it potentially merges two runs.
If the \(i\)-th element is inserted in the cell following a run, the \(i\)-th step is said to incur an \emph{extend} cost, denoted \(ext(i)\), as it extends the run.
If the \(i\)-th insertion does not give rise to a merge- (resp.\ extend-) cost, we set \(mer(i) = 0\) (resp.\ \(ext(i) = 0\)).
Finally we have to account for the cost between values \(A[i]\) and \(A[j]\) separated by one or more empty cells.
This cost is not charged to a specific insertion performed by the algorithm but is instead viewed as a property of the solution, \(sep(\AA(X))\).

We bound the expected size of each type of cost in the following three claims, and together they account for the value of the solution produced by \(\AA\).
\begin{claim}\label{stochLargeMer}
  \begin{align*}
    \sum_{i=1}^{n} \Ep{mer(i) \cdot \indicator{\neg \calW}} \leq \frac{1}{\beta n}\sum_{i=1}^n \Ep{s(i)} + \frac{1}{\beta} \, .
  \end{align*}
\end{claim}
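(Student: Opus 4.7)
The plan is to bound $mer(i) \cdot \indicator{\neg \calW}$ pointwise by $(s(i)+1)/(\beta n)$, then take expectations and sum. To set up, let $p$ be the cell into which $x_i$ lands, so that $p = h(x_i) + s(i) - 1$ (this definition is unambiguous under $\neg \calW$). A strictly positive merge cost arises only when cell $A[p+1]$ is occupied at the moment $x_i$ is inserted; call $y$ the element currently in $A[p+1]$. The structural crux is the claim that $h(y) = p+1$. The inequality $h(y) \leq p+1$ is immediate because $y$ reached $A[p+1]$ via linear probing without wraparound. The opposite inequality uses insertion-monotonicity: $y$ was inserted before $x_i$, and since $A[p]$ is empty now and there are no deletions, $A[p]$ was also empty at the time of $y$'s insertion; had $h(y) \leq p$, linear probing would have deposited $y$ in $A[p]$ rather than in $A[p+1]$.

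Converting the structural fact into a cost bound is simple interval arithmetic. Since $x_i \in [h(x_i)/(\beta n), (h(x_i)+1)/(\beta n))$ with $h(x_i) = p - s(i) + 1$, and $y \in [(p+1)/(\beta n), (p+2)/(\beta n))$, one has $mer(i) = |x_i - y| < (s(i)+1)/(\beta n)$ whenever a merge happens, and the same bound holds trivially when $mer(i) = 0$. Multiplying by $\indicator{\neg \calW}$ only tightens the inequality. Taking expectations and summing over $i \in \{1, \dots, n\}$ then yields
\[
\sum_{i=1}^n \Ep{mer(i) \cdot \indicator{\neg \calW}} \leq \sum_{i=1}^n \frac{\Ep{s(i)} + 1}{\beta n} = \frac{1}{\beta n}\sum_{i=1}^n \Ep{s(i)} + \frac{1}{\beta},
\]
which is the claimed bound.

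The main obstacle is establishing that $h(y) = p+1$. The argument hinges on two distinguishing features of the model: the absence of wraparound (giving $h(y) \leq p+1$) and the fact that cells are never vacated (giving $h(y) \geq p+1$). A subtlety worth mentioning is the boundary between the main region $\{0, \dots, \beta n - 1\}$ and the buffer $\{\beta n, \dots, \gamma n - 1\}$: if $p+1$ lies in the buffer, then $h(y) = p+1$ is impossible, but the same monotonicity argument shows that the configuration ``$A[p]$ empty and $A[p+1]$ occupied'' cannot arise in this regime at all, so no separate case analysis is needed. Once the structural claim is in place, the remainder of the proof is routine.
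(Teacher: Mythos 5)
Your proof is correct and follows essentially the same route as the paper's: identify the previously inserted neighbor $y$ in the cell just past where $x_i$ lands, use the no-wraparound assumption together with the fact that the cell $A[p]$ was already empty at $y$'s insertion to pin down $h(y)=p+1$, and then conclude $mer(i)\le (s(i)+1)/(\beta n)$ by interval arithmetic before summing expectations. Your added remark about the buffer region is a harmless elaboration of what the paper leaves implicit.
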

\begin{proof}
  Let \(x_i\) be the \(i\)-th element of \(X\) and let \(k\) be the index where \(x_i\) is placed.
  Then \(s(i) = 1 + k - h(x_i)\), due to our assumption that no wraparound occurs.
  Let \(y = A[k+1]\) be the value in the neighbouring cell, which was inserted before \(x_i\).
  Then \(mer(i) = \size{y - x_i}\).

  As cell \(A[k]\) was empty at the time of \(y\)'s insertion we must have \(y \in \left[\frac{k+1}{\beta n} , \, \frac{k+2}{\beta n}\right)\).
  Meanwhile, \(x_i \in \left[\frac{h(x_i)}{\beta n} , \, \frac{h(x_i) +1}{\beta n} \right)\) and thus
      \[\size{y - x_i} \leq \frac{k+2 - h(x_i)}{\beta n} = \frac{s(i) + 1}{\beta n} \, .\]
      It follows that \(\sum_{i=1}^n \Ep{mer(i) \cdot \indicator{\neg \calW}} \leq \frac{1}{\beta n} \sum_{i=1}^n \Ep{s(i)} + \frac{1}{\beta}\).
\end{proof}

\begin{claim}\label{stochLargeExt}
  \[\sum_{i=1}^n \Ep{ext(i) \cdot \indicator{\neg \calW}} \leq \frac{1}{\beta n} \sum_{i=1}^n \Ep{s(i)} \, .\]
\end{claim}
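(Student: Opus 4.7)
The plan is to mirror the proof of~\cref{stochLargeMer}, bounding each extend cost in terms of probe counts and summing. For an extend event $i$ placing $x_i$ at cell $A[k]$, let $v = A[k-1]$ be the filled cell immediately to the left, with hash $h_v = h(v) \leq k-1$. Under $\neg\calW$, both values lie in intervals of width $1/(\beta n)$ determined by their hashes, giving the deterministic bound
\[
ext(i) = |x_i - v| \leq \frac{|h(x_i) - h_v| + 1}{\beta n}.
\]
The rest of the work is to bound $\sum_i |h(x_i) - h_v|$ in terms of $\sum_i s(i)$.

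I would split by the relative order of the two hashes. In the \emph{easy} case $h(x_i) \leq h_v$, we have $|h(x_i) - h_v| \leq (k-1) - h(x_i) = s(i) - 2$, so $ext(i) \leq (s(i)-1)/(\beta n)$ directly. In the \emph{hard} case $h(x_i) > h_v$, a charging argument is needed: since $v$ was itself placed at $A[k-1]$ by linear probing from $A[h_v]$, it incurred probe count $s_v = k - h_v$ at its own insertion, and a short calculation gives $|h(x_i) - h_v| = s_v - s(i) + 1$, so $ext(i) \leq (s_v - s(i) + 2)/(\beta n)$. The key combinatorial fact is that each earlier insertion $j$ can serve as the ``previous neighbor'' $v$ for at most one later extend event (namely, the one placing an element into $A[k_j + 1]$, if that ever happens), so $\sum_{\text{hard}} s_v \leq \sum_j s(j)$.

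The main obstacle I foresee is squeezing out the correct constant: the naive deterministic bound above only yields the claim with a factor of $2$ rather than $1$. To close the gap, I would replace the deterministic interval bound with the tighter conditional expectation $\Ep{|x_i - v| \mid h(x_i), h_v} = |h(x_i) - h_v|/(\beta n)$, which is exact when the two hashes differ and is only $O(1/(\beta n))$ when they coincide; this removes an additive $1/(\beta n)$ per event. Combining with the fact that the hard case forces $s_v \geq s(i)$ (since $h(x_i) > h_v$ gives $s_v - s(i) + 1 \geq 1$), the negative $-s(i)$ contributions in the hard case cancel the residual additive terms, and the charging $\sum_{\text{hard}} s_v \leq \sum_j s(j)$ together with the easy-case bound $\sum_{\text{easy}} (s(i)-1) \leq \sum_i s(i)$ assembles to the claimed inequality after taking expectations over the hash values.
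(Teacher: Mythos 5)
Your setup is on the right track and several of your identities are correct: the deterministic bound $ext(i)\le(|h(x_i)-h_{v}|+1)/(\beta n)$, the easy-case estimate $|h(x_i)-h_v|\le s(i)-2$, the hard-case identity $h(x_i)-h_v=s_v-s(i)+1$, and the fact that each element serves as the left neighbor $v$ of at most one extend event. The gap is in the final assembly. Once you invoke $\sum_{\text{hard}} s_{v(i)}\le\sum_{j} s(j)$ you have already spent the entire right-hand side, so you would additionally need $\sum_{\text{easy}}(s(i)-1)+\sum_{\text{hard}}(1-s(i))\le 0$, which has no reason to hold (easy events have $s(i)\ge 2$ and contribute positively, and there need not be any compensating hard events). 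The claimed cancellation is not substantiated, and the double count is real: the neighbor $v$ of a hard event may itself have been an extend event whose budget $s_v$ your easy/hard charges have already consumed, so this route caps out at a factor $2$, exactly the obstacle you flagged.

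The deeper problem is that the inequality you are trying to prove \emph{after conditioning on all hash values} (averaging only over the offsets within cells, which is what your ``tighter conditional expectation'' fix does) is simply false, so no deterministic charging of $|h(x_i)-h_{v(i)}|$ against $\sum_j s(j)$ can reach the constant $1$. Concretely: let hashes $a,a+1,\dots,a+m-1$ arrive in order (each $s(i)=1$, each an extend event of conditional expected cost $1/(\beta n)$), then a hash $a$ (placed at $a+m$, $s=m+1$, conditional expected cost $(m-1)/(\beta n)$), then a hash $a+m$ (placed at $a+m+1$, $s=2$, conditional expected cost $m/(\beta n)$, since its left neighbor has hash $a$). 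The conditional expected extend costs sum to $(3m-2)/(\beta n)$ while $\sum_i s(i)=2m+3$, so the realization-wise inequality fails for $m\ge 6$. The missing ingredient is the averaging over where $h(x_i)$ itself lands: the paper keeps this randomness, conditions on the current run $I_k=[a_k,b_k]$, uses that $h(x_i)$ is uniform and independent of $h(y_k)$, and observes (by convexity/extremality in the position of $h(y_k)$) that $\Ep{|h(x_i)-h(y_k)|\cdot\indicator{h(x_i)\in I_k}}$ is maximized with $h(y_k)$ pushed to the right end $b_k+1$ of the window, where it equals $\Ep{(s(i)-1)\cdot\indicator{h(x_i)\in I_k}}$; adding the deterministic $+1$ and summing over disjoint runs gives the per-insertion bound $\Ep{ext(i)\cdot\indicator{\neg\calW}}\le\frac{1}{\beta n}\Ep{s(i)}$, which is stronger than the summed statement and avoids any charging across insertions.
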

\begin{proof}
Denote by \(I_k\) the \(k\)-th run in \(A\) which spans cells \(A[a_k]\) through \(A[b_k]\) before the insertion of \(x_i\), with value \(y_k = A[b_k]\) in the final position.
Assuming no wraparound occurs, we have \(a_k \leq b_k\).
In a slight abuse of notation we denote by \(h(x_i) \in I_k\) the event that \(x_i\) is appended to \(I_k\), which is equivalent to the event \(h(x_i) \in [a_k, b_k+1]\).
Assuming \(h(x_i) \in I_k\), we then have \(s(i) = 1 + (b_k+1) - h(x_i) = 2 + b_k - h(x_i)\).
Then
\begin{align*}
    ext(i) = \sum_{k=1} \size{x_i - y_k} \cdot \indicator{h(x_i) \in I_k} \leq \sum_{k=1} \frac{1}{\beta n} \left(\size{h(x_i) - h(y_k)} + 1 \right) \cdot \indicator{h(x_i) \in I_k} \, .
\end{align*}
As we are interested in \(\Ep{ext(i)}\) we wish to bound \(\Ep{\size{h(x_i) - h(y_k)} \cdot \indicator{h(x_i) \in I_k}}\) for some fixed \(k\).
We will not delve into the distribution of \(h(y_k)\) but note that the distribution of \(h(x_i)\) is independent of \(h(y_k)\).
Conditioning on \(h(y_k)\), we see that the expectation is maximized when \(h(y_k)\) is at one of the ``extremes'' \(h(y_k) = a_k\) or \(h(y_k) = b_k\).
For ease of computation we substitute \(h(y_k) = b_k + 1\) for our upper bound:
\begin{align*}
    \Ep{\size{h(x_i) - h(y_k)} \cdot \indicator{h(x_i) \in I_k} \cdot \indicator{\neg \calW}} 
    &\leq \Ep{\size{(b_k+1) - h(x_i)} \cdot \indicator{h(x_i) \in I_k}} \\
    &= \Ep{(s(i)-1) \cdot \indicator{h(x_i) \in I_k}} \, .
\end{align*}
Plugging into our previous equation we obtain
\begin{align*}
    \Ep{ext(i) \cdot \indicator{\neg \calW}} &\leq \frac{1}{\beta n} \sum_{k=1} \Ep{s(i) \cdot \indicator{h(x_i) \in I_k}} \leq \frac{1}{\beta n} \cdot \Ep{s(i)} \, .
\end{align*}
\end{proof}

\begin{claim}\label{stochLargeSep}
  \[\Ep{sep(\AA(X)) \cdot \indicator{\neg \calW}} \leq 2 \, .\]
\end{claim}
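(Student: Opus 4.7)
The plan is to establish the bound $sep(\AA(X)) \leq 2$ pointwise whenever $\neg\calW$ holds, from which the expectation bound follows immediately. Under $\neg\calW$ the array $A$ decomposes into runs $R_1,\dots,R_m$ occupying cells $[a_k, b_k]$, with $a_{k+1} \geq b_k + 2$ (at least one empty cell between consecutive runs). Writing $\ell_k := A[a_k]$ and $r_k := A[b_k]$, we have $sep(\AA(X)) = \sum_{k=1}^{m-1} |r_k - \ell_{k+1}|$, and the whole plan is to bound this sum.

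The key structural step, which I would isolate as a short lemma, is that every value $x$ placed in run $R_k$ satisfies $h(x) \in [a_k, b_k]$. The upper bound $h(x) \leq b_k$ is immediate from linear probing (under $\neg\calW$ each element is placed at an index $\geq h(x)$). For the lower bound $h(x) \geq a_k$, observe that cell $A[a_k - 1]$ is empty at the end, hence was empty throughout the execution (cells never become empty once filled); so if $h(x) \leq a_k - 1$ the linear probe for $x$ would have terminated in $[h(x), a_k - 1]$, contradicting $x \in R_k$. Applied to $\ell_k$ and $r_k$, this yields $h(\ell_k) = a_k$, so $\ell_k \in [a_k/(\beta n), (a_k+1)/(\beta n))$, while $r_k \in [a_k/(\beta n), (b_k+1)/(\beta n))$.

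From these intervals $\ell_{k+1} \geq a_{k+1}/(\beta n) > (b_k+1)/(\beta n) > r_k$, so $|r_k - \ell_{k+1}| = \ell_{k+1} - r_k \leq \frac{a_{k+1} - a_k + 1}{\beta n}$. A telescoping sum then gives
\[
\sum_{k=1}^{m-1} \frac{a_{k+1} - a_k + 1}{\beta n} = \frac{(a_m - a_1) + (m-1)}{\beta n} \leq 1 + \frac{1}{\beta} < 2,
\]
where we use $a_m \leq \beta n$ (no wraparound combined with $h(x) < \beta n$ forces every run to start within the first $\beta n$ cells), $a_1 \geq 0$, $m \leq n$, and $\beta = (9\gamma+1)/10 > 1$.

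The main obstacle is the structural lemma $h(x) \in [a_k, b_k]$ for $x \in R_k$; the crucial monotonicity (filled cells stay filled), combined with $\neg\calW$, is what rules out hashes to the left of $a_k$ and pins run values into a narrow hash-interval. Once that is in hand, the remainder is the one-line telescoping above, with the slack $\beta > 1$ closing the constant $2$.
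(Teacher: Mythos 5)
Your proposal is correct and follows essentially the same route as the paper's proof: decompose into runs, show that every element of a run hashes into the run's index range (so the first element of each run hashes exactly to its start), convert these hash constraints into value intervals of width $(a_{k+1}-a_k+1)/(\beta n)$, and telescope. The only cosmetic difference is the final accounting (you use $m\le n$ and $a_m\le\beta n$ to get $1+1/\beta<2$, while the paper uses $r\le a_r\le\beta n$ to get $2$ directly), which does not change the argument.
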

\begin{proof}
  Let the runs be numbered \(1, \dots, r\) according to their order in \(A\), and say that the \(i\)-th run covers cells \(A[a_i]\) through \(A[b_i]\) such that \(a_i \leq b_i < a_{i+1} \leq b_{i+1}\).
  With this notation \(sep(\AA(X)) = \sum_{i=1}^r \size{A[a_{i+1}] - A[b_{i}]}\).

  Fix some \(i\), let \(x = A[a_{i+1}]\) and \(y = A[b_i]\).
  First, observe that \(h(x) = a_{i+1}\).
  Had \(h(x)\) been lower, \(x\) would have been placed in a cell of smaller index (and we know that \(A[a_{i+1} - 1]\) is empty). Had \(h(x)\) been higher, \(x\) would have been placed in a cell of index at least \(a_{i+1}\) due to our assumption that no wraparound occurs.
  Hence \(A[a_{i+1}] \in \left[\frac{a_{i+1}}{\beta n} , \, \frac{a_{i+1} + 1}{\beta n} \right)\).

  Similarly, we have \(a_i \leq h(y) \leq b_i\) and thus \(A[b_i] \in \left[\frac{a_i}{\beta n}, \, \frac{b_i + 1}{\beta n}\right)\), as the element \(y\) would otherwise have been placed outside of the run.
  Thus
  \begin{align*}
    \sum_{i=1}^r \size{A[a_{i+1}] - A[b_i]} \leq \sum_{i=1}^r \frac{a_{i+1}- a_i + 1}{\beta n} = \frac{a_r - a_1 + r}{\beta n} \leq 2 \,
  \end{align*}
  as \(r \leq a_r \leq \beta n\).
\end{proof}

Summing over \cref{stochLargeWrap,stochLargeExt,stochLargeMer,stochLargeSep}, we obtain the bound stated in \Cref{stochasticLargeCost}.\qed

\end{document}